\newif\ifarxiv
\arxivtrue
\ifarxiv
\documentclass{article}
\else
\documentclass{hjm1}
\fi

\newif\ifpdflatex
\pdflatexfalse

\usepackage[T1]{fontenc}
\usepackage{amsmath,amsfonts,amssymb,enumitem,url,graphicx,orcidlink}

\DeclareRobustCommand{\VAN}[3]{#2} 

\newcommand\Open{\mathcal O}
\newcommand\real{\mathbb{R}}

\newcommand\nat{\mathbb{N}}
\newcommand\pow{\mathbb{P}}
\newcommand\dc{\mathop{\downarrow}}
\newcommand\upc{\mathop{\uparrow}}
\newcommand\limp{\mathrel{\Rightarrow}}

\newcommand\diff{\smallsetminus}

\newcommand\Rp{\real_+}
\newcommand\creal{\overline{\real}_+}

\newcommand\Lform{{\mathcal L}}
\newcommand{\interior}[1]{int ({#1})} 

\newcommand\dsup{\sup\nolimits^{\scriptstyle\uparrow}}

\newcommand\dcup{\bigcup\nolimits^{\scriptstyle\uparrow}}

\newcommand\eqdef{\mathrel{\buildrel \text{def}\over=}}
\newcommand\Val{\mathbf V}
\newcommand\Smyth{\mathcal Q}

\newcommand\Smythz{\Smyth_0}
\newcommand\SV{\Smyth_\Vt}
\newcommand\SVz{\Smyth_{0\Vt}}

\newcommand\cvx{{\mathrm{cvx}}}
\newcommand\SVc{\SV^\cvx}
\newcommand\Hoare{\mathcal H}
\newcommand\Hoarez{\Hoare_{0}}

\newcommand\HV{\Hoare_\Vt}
\newcommand\HVc{\HV^\cvx}
\newcommand\HVz{\Hoare_{0\Vt}}

\newcommand\QE{{\mathtt Q}}
\newcommand\CF{{\mathtt H}}

\newcommand\Pred{\mathbf{P}}

\newcommand\Angel{{\mathtt{A}}}
\newcommand\Demon{{\mathtt{D}}}
\newcommand\Nature{{\mathtt{P}}}
\newcommand\AN{{\Angel\Nature}}
\newcommand\DN{{\Demon\Nature}}

\newcommand\sub{{\mathrm{sub}}}
\newcommand\super{{\mathrm{super}}}

\newcommand\lin{{\mathrm{lin}}}

\newcommand\one{{\mathbf 1}}
\newcommand\B{\mathfrak{B}}
\newcommand\C{\mathfrak{C}}

\newcommand\rast{\circledast} 
\newcommand\supP{\sup}
\newcommand\qusP{\mathop{\mathrm{qus}}}
\newcommand\minP{\min}
\newcommand\nimP{\mathop{\mathrm{nim}}}

\newcommand\pp{{\perp\perp}}
\newcommand\Cc{\HV^\pp} 
\newcommand\CPred{\Cc\Pred} 
\newcommand\Qq{\SV^\pp} 
\newcommand\QPred{\Qq\Pred} 

\newcommand\shd[2]{\mathop{\mathrm{shd}}^{#1}_{#2}}
\newcommand\shadow[2]{%
  \hbox{%
    \sbox0{$#1B$}%
    \sbox2{$#1>$}%
    \raisebox{\dimexpr\dp2+(\ht0-\ht2)/2}{%
      $#1\mathbin{>}\mspace{-15mu}\mathrel{(} #2$%
    }%
  }%
}
\newcommand\Lhom{\Lform_{\mathrm{hom}}}
\newcommand\Vt{{\mathit{v}}}
\newcommand\Img{\mathrm{Im}\,}
\newcommand{\identity}[1]{\mathrm{id}_{#1}}
\newcommand\conv{\mathop{\mathrm{conv}}}
\newcommand\clconv{\mathop{\overline{\conv}}}

\begin{document}

\renewcommand{\bf}{\bfseries}
\renewcommand{\sc}{\scshape}

\ifarxiv
\title{Just Previsions}
\author{Jean Goubault-Larrecq\\
  \small Universit\'e Paris-Saclay, CNRS, ENS Paris-Saclay,\\
  \small Laboratoire M\'ethodes Formelles, 91190, Gif-sur-Yvette, France}
\date{\relax}
\else
\title[Just Previsions]{Just Previsions}
\author[Goubault-Larrecq]{Jean Goubault-Larrecq}
\address[Jean Goubault-Larrecq]{Universit\'e Paris-Saclay, CNRS, ENS Paris-Saclay,
  Laboratoire M\'ethodes Formelles, 91190, Gif-sur-Yvette, France}
\email{jgl@lmf.cnrs.fr}
\cyh 
\fi

\newcommand\mykeywords{Prevision, double hyperspace construction, continuous valuation.}
\ifarxiv
\relax
\else
\keywords{\mykeywords}

\subjclass[2000]{46E27; 54C35, 54H30}
\fi



\newcommand\myabstract{%
  Previsions are positively homogeneous functionals, and are
  generalized forms of integration functionals.  We investigate
  previsions---just previsions, not sublinear or superlinear
  previsions as in previous work.  We show that every prevision can be
  expressed as an infimum of sublinear previsions, and as a supremum
  of superlinear previsions under mild conditions.  This extends to
  homeomorphisms between spaces of previsions and certain hyperspaces
  over spaces of sublinear or superlinear previsions, which can also
  be characterized in terms of orthogonality relations, making the
  construction a variant of a double powerspace construction.%
}

\ifarxiv
\maketitle
\fi

\begin{abstract}
  \myabstract
  \ifarxiv%

  \textbf{Keywords:} \mykeywords
  \fi%
\end{abstract}

\ifarxiv
\relax
\else
\maketitle
\fi


\ifarxiv
\newtheorem{thm}{Theorem}[section]
\newtheorem{lem}[thm]{Lemma}
\newtheorem{prop}[thm]{Proposition}
\newcommand\qed{\hfill $\Box$}
\newenvironment{proof}{\emph{Proof.}}{\qed}
\fi
\newtheorem{fact}[thm]{Fact}
\newtheorem{corollary}[thm]{Corollary}
\newtheorem{lemdef}[thm]{Lemma and Definition}

\ifarxiv
\relax
\else
\theoremstyle{definition}
\fi
\newtheorem{definition}[thm]{Definition}

\newtheorem{remark}[thm]{Remark}
\newtheorem{example}[thm]{Example}
\numberwithin{equation}{section}



\noindent
\begin{minipage}{0.25\linewidth}
  \ifarxiv
  \includegraphics[scale=0.2]{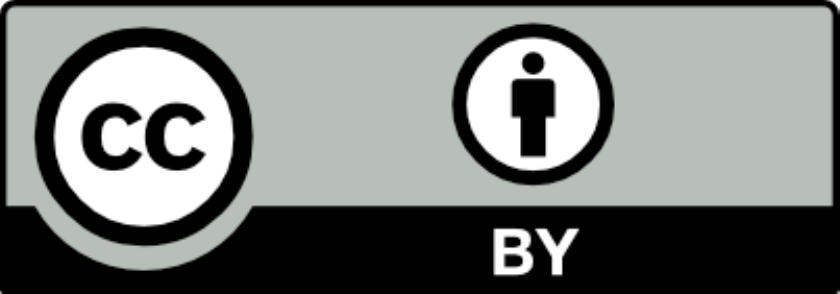}
  \else
  \includegraphics[scale=0.5]{by}
  \fi
\end{minipage}
\begin{minipage}{0.74\linewidth}
  \scriptsize
  For the purpose of Open Access, a CC-BY public copyright licence has
  been applied by the authors to the present document and will be
  applied to all subsequent versions up to the Author Accepted
  Manuscript arising from this submission.
\end{minipage}

\section{Introduction}
\label{sec:intro}

Let $\creal$ be the space of extended non-negative real numbers
$\Rp \cup \{\infty\}$.  Given a topological space $X$, we write
$\Lform X$ for the set of all lower semicontinuous maps from $X$ to
$\creal$.  A \emph{prevision} is a positively homogenous,
Scott-continuous map from $\Lform X$ to $\creal$ (we will define all
missing terms in Section~\ref{sec:prelim}).  A \emph{linear} prevision
can be thought of as a functional
$h \in \Lform X \mapsto \int h \,d\mu$ for some measure $\mu$ on $X$,
or more precisely for some continuous valuation $\mu$ on $X$, and
previsions that satisfy the more lenient constraints of being
sublinear or superlinear were advocated as models of mixtures of
non-deterministic and probabilistic choice in \cite{Gou-csl07}.

The purpose of this paper is to examine what can be said about
previsions: just previsions, with no sublinearity, superlinearity of
linearity involved.  We will see that every prevision is a pointwise
infimum of sublinear previsions, and also a pointwise supremum of
superlinear previsions under mild assumptions.  This extends to
homeomorphisms between \emph{spaces} of previsions and certain
hyperspaces over spaces of sublinear or superlinear previsions.  We
will obtain those homeomorphisms as restrictions of retractions of
well-known convex hyperspaces over spaces of sublinear or superlinear
previsions.  However, the hyperspaces that are homeomorphic to our
spaces of previsions cannot be defined by convexity properties alone,
and we elucidate what they are (partially) using orthogonality
relations.  We will argue that this is related to the \emph{double
  hyperspace} construction \cite{dBK:comm}.

\emph{Outline.}  Section~\ref{sec:prelim} is a list of preliminaries.
It is a bit long, but I am afraid this is unavoidable.  We give a few
easy properties of spaces of previsions in
Section~\ref{sec:struct-prop-spac}.
Section~\ref{sec:cast-shad-constr} is technical, and introduces the
\emph{cast shadow} construction, which we will need later to study
subnormalized and normalized previsions.  We explain how the double
hyperspace construction arises from orthogonality relations and double
orthogonals in Section~\ref{sec:double-hypersp-orth}.
Section~\ref{sec:PX=QPAPX} establishes our first isomorphism theorem
$\Pred^\rast X \cong \QPred^\rast_\sub X$, as well as a related
construction of $\Pred^\rast X$ as a retract of the convex Smyth
hyperspace $\SVc {\Pred^\rast_\sub X}$.  In passing, we show that
every prevision is a pointwise infimum of sublinear previsions, for
any topological space $X$.  Section~\ref{sec:PX=CPDPX} is very
similar, leads to the second isomorphism theorem
$\Pred^\rast X \cong \CPred^\rast_\super X$ and to the fact that every
prevision is a pointwise supremum of superlinear previsions; but we
will require some conditions on $X$ for that to hold.  We conclude in
Section~\ref{sec:back-orth-relat}, where we close the loop and show
that $\QPred^\rast_\sub X$ and $\CPred^\rast_\super X$ are the double
orthogonals that arise from the construction of
Section~\ref{sec:double-hypersp-orth}.

\section{Preliminaries}
\label{sec:prelim}

\subsection{Topology and order.}  For background on topology, we refer
to \cite{JGL-topology}.  We write $\interior A$ for the interior of
$A$, $cl (A)$ 
for the closure of $A$, 
and $\Open X$ for the lattice of open subsets of $X$.  The
specialization preordering $\leq$ of a topological space $X$ is
defined on points $x, y \in X$ by $x \leq y$ if and only if every open
neighborhood of $x$ contains $y$.  We write $\dc$ for downward closure
and $\upc$ for upward closure with respect to $\leq$.  The closure of
a point $x$ is $\dc x$.

We will also say that $x$ is \emph{below} $y$ and that $y$ is
\emph{above} $x$ when $x \leq y$.  A space is $T_0$ if and only if
$\leq$ is antisymmetric, $T_1$ if and only if $\leq$ is the equality
relation.

A family $D$ of elements of a preordered set $P$ is \emph{directed} if
and only if it is non-empty and every pair of elements of $D$ has an
upper bound in $D$.  In case $P$ is a poset (a partially ordered set),
we write $\dsup D$, or $\dsup_{i \in I} x_i$ when
$D = {(x_i)}_{i \in I}$, for the supremum of a directed family $D$, if
it exists; similarly, we write $\dcup_{i \in I} U_i$ for the union of
a directed family of subsets $U_i$ of a fixed set.

A function $f \colon P \to Q$ between posets is \emph{monotonic} if
and only if for all $x, x' \in P$, $x \leq x'$ implies
$f (x) \leq f (x')$.  It is \emph{Scott-continuous} if and only if $f$
is monotonic and for every directed family ${(x_i)}_{i \in I}$ with a
supremum $x$ in $P$, the (necessarily directed) family of elements
$f (x_i)$ has $f (x)$ as supremum.  Scott-continuity is equivalent to
continuity with the respective Scott topologies on $P$ and $Q$.  The
\emph{Scott topology} on a poset $P$ consists of those subsets
$U$---the \emph{Scott-open subsets} of $P$---that are upwards closed
($x \in U$ and $x \leq x'$ implies $x' \in U$) and such that every
directed family $D$ that has a supremum in $U$ intersects $U$.  That
is most useful in the context of \emph{dcpo}s (short for
directed-complete posets), namely posets in which every directed
family has a supremum.

We see $\creal$ as a complete lattice under the usual ordering $\leq$.
A function $f \colon X \to \creal$ that is continuous from $X$ to
$\creal$ with its Scott topology is known as a \emph{lower
  semicontinuous} function.  The set $\Lform X$ of lower
semicontinuous functions from $X$ to $\creal$ will be ordered
pointwise.  When seen as a topological space, we will always assume
that $\Lform X$ has the Scott topology of that ordering.

\subsection{Hyperspaces.}
The closed subsets of a topological space $X$ form its \emph{Hoare
  hyperspace} $\Hoarez X$.  We order it by inclusion.  The \emph{lower
  Vietoris topology} on $\Hoarez X$ has a subbase of open sets of the
form $\Diamond U \eqdef \{C \in \Hoarez X \mid C \cap U \neq
\emptyset\}$, where $U$ ranges over $\Open X$.  The specialization
ordering of that topology is inclusion, and we will write $\HVz X$ for
$\Hoarez X$ with the lower Vietoris topology.  The set $\Hoare X$ is
equal to $\Hoarez X$ minus the empty set.  With the subspace topology
induced by the inclusion in $\HVz X$, which we also call the lower
Vietoris topology, we obtain a space $\HV X$, with the same subbase of
open sets.

A subset $Q$ of a space $X$ is compact if and only if every open cover
of $Q$ has a finite subcover; no separation is required.  Alexander's
subbase lemma states that given any subbase $\mathcal B$, of the
topology of $X$, a subset $Q$ of $X$ is compact if and only if every
open cover of $Q$ by elements of $\mathcal B$ has a finite subcover.
A space $X$ is \emph{locally compact} if and only if every point has a
base of compact neighborhoods.  There is a weaker property,
\emph{core-compactness}, which we will sometimes refer to, but which
we will not define here \cite[Section~5.2.1]{JGL-topology}.

A saturated subset is one that is equal to the intersection of its
open neighborhoods, or equivalently, that is upward-closed with
respect to the specialization preordering $\leq$.  The compact
saturated subsets of $X$ form its \emph{Smyth hyperspace} $\Smythz X$.
We let $\Smyth X$ be $\Smythz X$ minus the empty set.  Both are
ordered by reverse inclusion, which is also the specialization
ordering of the \emph{upper Vietoris topology}.  The latter has a base
of open sets of the form
$\Box U \eqdef \{Q \in \Smythz X \text{ (resp.  $\Smyth X$)}\mid Q
\subseteq U\}$.

One may compose $\Hoarez$ and $\Smythz$ in two ways, and in most cases
the order of composition does not matter.  Historically, the first
result of this kind is due to Flannery and Martin \cite{FM:QH}.  It
was extended to all dcpos by Heckmann \cite{Heckmann:QH}, and to all
locales by Vickers and Townsend \cite{VT:double:power}, with
definitions of the respective Hoare and Smyth powerdomains or
powerlocales that are close, although not identical.  In the setting
of topological spaces, the relevant results are due to de Brecht and
Kawai \cite{dBK:comm}.  For every topological space $X$, there is a
map $\sigma_X \colon \HVz {\SVz X} \to \SVz {\HVz X}$, defined by:
\begin{align}
  \label{eq:sigma}
  \sigma_X (\mathcal C)
  & \eqdef \{C \in \HVz X \mid \forall Q \in \mathcal C, C \cap Q \neq \emptyset\},
\end{align}
and it is a topological embedding.  It is a homeomorphism if and only
if $X$ is \emph{consonant} \cite[Theorem~6.13]{dBK:comm}, namely if
and only if the Scott topology on $\Open X$ has a subbase of open
subsets of the form
$\blacksquare Q \eqdef \{U \in \Open X \mid Q \subseteq U\}$, where
$Q$ ranges over $\Smythz X$.  The inverse is the map
$\tau_X \colon \SVz {\HVz X} \to \HVz {\SVz X}$, defined by:
\begin{align}
  \label{eq:tau}
  \tau_X (\mathcal Q)
  & \eqdef \{Q \in \SVz X \mid \forall C
  \in \mathcal Q, C \cap Q \neq \emptyset\}.
\end{align}

\subsection{Continuous valuations.}
A \emph{continuous valuation} on a space $X$ is a map
$\nu \colon \Open X \to \creal$ that is \emph{strict}
($\nu (\emptyset)=0$), \emph{modular} (for all $U, V \in \Open X$,
$\nu (U) + \nu (V) = \nu (U \cup V) + \nu (U \cap V)$) and
Scott-continuous.  We say that $\nu$ is 
a \emph{probability} valuation if and only if $\nu (X)=1$, and a
\emph{subprobability} valuation if and only if $\nu (X) \leq 1$.
Continuous valuations are an alternative to measures that have become
popular in domain theory \cite{jones89,Jones:proba}.  The first
results that connected continuous valuations and measures are due to
Saheb-Djahromi \cite{saheb-djahromi:meas} and Lawson
\cite{Lawson:valuation}.  The most informative results on this matter
are the following.  In one direction, every measure on the Borel
$\sigma$-algebra of $X$ induces a continuous valuation on $X$ by
restriction to the open sets, if $X$ is hereditarily Lindel\"of
(namely, if every directed family of open sets contains a cofinal
monotone sequence).  This is an easy observation, and one half of
Adamski's theorem \cite[Theorem~3.1]{Adamski:measures}, which states
that a space is hereditary Lindel\"of if and only if every measure on
its Borel $\sigma$-algebra restricts to a continuous valuation on its
open sets.  By definition, a \emph{$\tau$-smooth} measure is one whose
restriction to the open sets is a continuous valuation: hence all
measures on hereditarily Lindel\"of spaces are $\tau$-smooth.  In the
other direction, every continuous valuation on a space $X$ extends to
a measure on the Borel sets provided that $X$ is an LCS-complete space
\cite[Theorem~1]{dBGLJL:LCS}, a class of spaces that includes the
locally compact sober spaces, the quasi-Polish spaces and therefore
the locally compact Hausdorff spaces and the Polish spaces.

Let $\Val X$ denote the space of continuous valuations on a space $X$,
with the following \emph{weak topology}.  Its subbasic open sets are
$[U > r] \eqdef \{\nu \in \Val X \mid \nu (U) > r\}$, where
$U \in \Open X$ and $r \in \Rp$.  We define its subspace $\Val_1 X$ of
probability valuations and $\Val_{\leq 1} X$ (subprobability)
similarly.  The specialization ordering of each is the
\emph{stochastic ordering} $\leq$ given by $\nu \leq \nu'$ if and only
if $\nu (U) \leq \nu' (U)$ for every $U \in \Open X$.

For all $h \in \Lform X$ and $\nu \in \Val X$, the integral
$\int h \,d\nu$ is best defined by the Choquet formula
$\int_0^\infty \nu (h^{-1} (]t, \infty]))\,dt$, using an indefinite
Riemann integral \cite[Section~4]{Tix:bewertung}.  Then
$\int h \,d\nu$ is Scott-continuous in $h$ and in $\nu$; it is also
linear in $h$ and $\nu$, meaning that it preserves finite sums and
products by scalars from $\Rp$.  The weak topology on $\Val X$ (and
its subspaces $\Val_{\leq 1} X$, $\Val_1 X$) has an alternative
subbase of open sets of the form
$[h > r] \eqdef \{\nu \in \Val X \mid \int h \,d\nu > r\}$, where
$h \in \Lform X$ and $r \in \Rp$; this was originally observed by Jung
\cite[Theorem~3.3]{Jung:scs:prob} for $\Val_{\leq 1} X$ and
$\Val_1 X$.

\subsection{Previsions.}
\emph{Previsions} are an elaboration on Walley's notion of prevision
in economics \cite{Walley:prev}.  We will borrow most of what we need
from \cite{JGL-mscs16}, see also the errata \cite{JGL:iso:err}.  A
\emph{prevision} on a space $X$ is a Scott-continuous map
$F \colon \Lform X \to \creal$ that is \emph{positively homogeneous}
in the sense that $F (ah)=aF(h)$ for all $a \in \Rp$ and
$h \in \Lform X$.  (The notation $ah$ makes sense even if $a=0$, with
the convention that $0.\infty = 0$.)  There is a space $\Pred X$ of
previsions on $X$, whose topology is generated by sets
$[h > r] \eqdef \{F \mid F (h) > r\}$, $h \in \Lform X$, $r \in \Rp$.

Every continuous valuation $\nu$ on $X$ gives rise to a prevision
$\Lambda \colon h \mapsto \int h \,d\nu$.  Such a prevision is
\emph{linear}, in the sense that (it is positively homogeneous and)
$\Lambda (h+h') = \Lambda (h) + \Lambda (h')$ for all
$h, h' \in \Lform X$.  Let $\Pred_\lin X$ be the subspace of $\Pred X$
of linear previsions.  Conversely, every linear prevision
$\Lambda \in \Pred_\lin X$ gives rise to a continuous valuation
$U \mapsto \Lambda (\chi_U)$, where $\chi_U$ is the characteristic map of
the open set $U$, and the two constructions are inverse of each other.
Additionally, those two constructions define continuous maps between
$\Val X$ and $\Pred_\lin X$ \cite[Satz~4.16]{Tix:bewertung}.  We will
therefore equate continuous valuations with linear previsions.

A prevision $F$ is \emph{sublinear} (resp., \emph{superlinear}) if and
only if $F (h+h') \leq F (h) + F (h')$ (resp., $\geq$) for all
$h, h' \in \Lform X$.  We will write $\Pred_\sub X$ (resp.\
$\Pred_\super X$) for the subspace of $\Pred X$ consisting of
sublinear (resp.\ superlinear) previsions.  This were written as
$\Pred_{\AN} X$ (resp.\ $\Pred_{\DN} X$) in \cite{JGL-mscs16}.

Among the continuous valuations, there are the (sub)probability
valuations.  Similarly, we say that a prevision $F$ is
\emph{subnormalized} (resp., \emph{normalized}) iff
$F (\one+h) \leq \one+F (h)$ (resp., $=$) for every $h \in \Lform X$,
where $\one$ is the constant function with value $1$.  If $F$ is
subnormalized (resp.\ normalized), then $F (\one) \leq 1$ (resp.\
$=$); the converse holds if $F$ is linear.  The homeomorphism
$\Val X \cong \Pred_\lin X$ restricts to homeomorphisms between
$\Val_{\leq 1} X$ (resp., $\Val_1 X$) and the subspace
$\Pred_\lin^{\leq 1} X$ (resp., $\Pred_\lin^1 X$) of subnormalized
(resp., normalized) linear previsions on $X$.  We write
$\Pred^{\leq 1} X$, $\Pred_\sub^{\leq 1} X$,
$\Pred_\super^{\leq 1} X$, $\Pred^1 X$, $\Pred_\super^{\leq 1} X$,
$\Pred_\super^1 X$ for the corresponding spaces of (sub)normalized,
plain/sublinear/superlinear previsions.  For short, we write
$\Pred^\rast X$, $\Pred_\sub^\rast X$, $\Pred_\super^\rast X$,
$\Pred_\lin^\rast X$ where $\rast$ can be nothing, ``$\leq 1$'', or
``$1$''.

\subsection{Cones.}  Spaces such as $\Lform X$, $\Val X$ or $\Pred X$ have
additional algebraic structure: they are cones, in the following
sense.  A \emph{cone} is a commutative monoid $(\C, +, 0)$ (or just
$\C$ for short) with an action $a, x \mapsto a \cdot x$ of the
semi-ring $\Rp$ on $\C$, that is:
\begin{equation}
  \label{eq:cone}
  \begin{array}{ccc}
    0 \cdot x = 0 & (ab) \cdot x = a \cdot (b \cdot x) & 1 \cdot x=x\\
    a \cdot 0=0 & a \cdot (x+y) = a \cdot x+a \cdot y & (a+b) \cdot x = a \cdot x+b \cdot x
  \end{array}
\end{equation}
for all $a, b \in \Rp$ and $x, y \in \C$.  We sometimes write $ax$ for
$a \cdot x$.  In other words, a cone is just like a vector space
except that one cannot take opposites or multiply by negative scalars
in general.  Following Keimel \cite{Keimel:topcones2}, a
\emph{semitopological} cone is a cone $\C$ with a topology that makes
both $+$ and $\cdot$ separately continuous, where $\Rp$ has the Scott
topology of $\leq$.  Scalar multiplication is then automatically
jointly continuous \cite[Corollary~6.9~(c)]{Keimel:topcones2}.  A
\emph{topological} cone is a semitopological cone in which $+$ (and
therefore $\cdot$, too) is jointly continuous.  $\Lform X$, with the
Scott topology and pointwise addition and scalar multiplication, is
always a semitopological cone; it is a topological cone if $X$ is
locally compact, or more generally core-compact.  $\Val X$, with the
weak topology and pointwise addition and scalar multiplication, is
always a topological cone \cite[Example~5.11]{Keimel:topcones2}.

We will often write $x +_\alpha y$ for the convex combination
$\alpha \cdot x + (1-\alpha) \cdot y$.  A subset $A$ of a cone $\C$ is
\emph{convex} if and only if it contains every convex combination
$x +_\alpha y$ of points $x, y \in A$, where $\alpha \in [0, 1]$.
Every intersection of convex sets is convex, every directed union of
convex sets is convex.  A map is \emph{affine} if and only if it
preserves convex combinations.  Inverse and direct images of convex
sets under affine maps are convex.

The smallest convex subset containing $A$ is its \emph{convex hull}
$\conv A$.  This is also the set of barycenters
$\sum_{i=1}^n a_i \cdot x_i$ where $n \geq 1$ and $\vec a$ ranges over
$\Delta_n$.  We write $\Delta_n$ for the simplex
$\{\vec a \in (\Rp)^n \mid \sum_{i=1}^n a_i = 1\}$, and we use the
convention that $\vec a$ abbreviates $(a_1, \cdots, a_n)$.

In a semitopological cone, the closure of a convex set is convex
\cite[Lemma 4.10]{Keimel:topcones2}.  It follows that the \emph{closed
convex hull} $\clconv A$ of any subset $A$ is equal to the closure $cl
(\conv A)$ of the convex hull of $A$.

Given a semitopological cone $\C$, a function $F \colon \C \to \creal$
is \emph{positively homogeneous} if and only if $F (a \cdot x) = a F
(x)$ for all $a \in \Rp$ and $x \in \C$.  Hence previsions on $X$ are
the same thing as positively homogeneous, lower semicontinuous
functions from $\Lform X$ to $\creal$.
$F$ is \emph{superlinear} (resp.\ \emph{sublinear}) if and only if it
is positively homogeneous and satisfies $F (x+x') \geq F (x) + F (x')$
(resp.\ $\leq$) for all $x, x' \in \C$.

The \emph{upper Minkowski functional} of a subset $A$ of a cone $\C$,
which Keimel writes as $F^A$, and which we prefer to write as $M^A$,
maps every $x \in C$ to $\sup \{r \in \Rp \mid x \in r \cdot A\}$,
where the sup is taken in $\creal$, the sup of the empty set is $0$,
and $r \cdot A \eqdef \{r \cdot y \mid y \in A\}$ (see
\cite[Section~7]{Keimel:topcones2}).  In a semitopological cone $\C$,
there is an order isomorphism between the poset $\Open^* \C$ of proper
open subsets of $\C$, ordered by inclusion, and the poset $\Lhom \C$
of positively homogeneous, lower semicontinuous maps from $\C$ to
$\creal$.  In one direction, every positively homogeneous, lower
semicontinuous map $F \colon \C \to \creal$ defines a proper open
subset $F^{-1} (]1, \infty])$ of $\C$.  The inverse operation sends
every proper open subset $U$ of $\C$ to $M^U$.  Additionally, $M^U$ is
superlinear if and only if $U$ is convex (and proper), and sublinear
if and only if $\C \diff U$ is convex (and non-empty).

Keimel's \emph{sandwich theorem} \cite[Theorem~8.2]{Keimel:topcones2}
states that in a semitopological cone $C$, given a lower
semicontinuous superlinear map $q \colon C \to \creal$ and a sublinear
map $p \colon C \to \creal$ 
such that $q \leq p$, there is a lower semicontinuous linear map
$\Lambda \colon C \to \creal$ such that $q \leq \Lambda \leq p$.
(Functions are compared pointwise.)

\subsection{Representation theorems for sublinear and superlinear
  previsions.}  For every semitopological cone $\C$, or more generally
for every convex subspace $\B$ of a semitopological cone, we write
$\SVc \B$ for the subspace of $\SV \B$ consisting of non-empty convex
compact saturated subsets of $\B$ and $\HVc \B$ for the subspace of
$\HV \B$ consisting of non-empty convex closed subsets of $\B$.  When
$\B$ is a semitopological cone $\C$, those are the convex upper and
lower powercones of Keimel, Tix and Plotkin
\cite[Section~4]{TKP:nondet:prob}.  Writing $\Box^\cvx U$ for
$\Box U \cap \SVc \B$, the sets $\Box^\cvx U$ with $U \in \Open \B$
form a base of the topology of $\SVc \B$.  Similarly, the sets
$\Diamond^\cvx U \eqdef \Diamond U \cap \HVc \B$ form a subbase of the
topology of $\HVc \B$.

Let $\rast$ be nothing, ``$\leq 1$'' or ``$1$''.  There is a
continuous function
$r_{\super} \colon \SV {\Pred^\rast_{\lin} X} \to \Pred^\rast_{\super}
X$, which maps $\mathcal Q$ to
$h \in \Lform X \mapsto \min_{\Lambda \in \mathcal Q} \Lambda (h)$,
and a continuous function
$s^\rast_\super \colon F \mapsto \{\Lambda \in \Pred^\rast_\lin X \mid
\Lambda \geq F\}$ in the other direction such that
$r_{\super} \circ s^\rast_\super = \identity {\Pred^\rast_{\super}
  X}$.  
In other words, $\Pred^\rast_{\super} X$ is
a \emph{retract} of $\SV {\Pred^\rast_{\lin} X}$ \cite[Proposition
3.22]{JGL-mscs16}.  (Those functions were written as $r_\DN$ and
$s^\rast_\DN$ in \cite{JGL-mscs16}.)  This retraction cuts down to a
homeomorphism between $\SVc {\Pred^\rast_{\lin} X}$ and
$\Pred^\rast_{\super} X$ \cite[Theorem~4.15]{JGL-mscs16}.  Remarkably,
this holds without any condition on $X$.

The situation with Hoare hyperspaces and sublinear previsions is
similar but a bit more complex.  There is a continuous function
$r_{\sub} \colon \HV {\Pred^\rast_{\lin} X} \to \Pred^\rast_{\sub} X$,
which maps $\mathcal C$ to
$h \in \Lform X \mapsto \sup_{\Lambda \in \mathcal C} \Lambda (h)$,
and a function
$s^\rast_\sub \colon F \mapsto \{\Lambda \in \Pred^\rast_\lin X \mid
\Lambda \leq F\}$ in the other direction.  Those form a retraction of
$\HV {\Pred^\rast_{\lin} X}$ onto $\Pred^\rast_\sub X$ provided that
$X$ is $\AN_\rast$-friendly
\cite[Proposition~3.11]{JGL-mscs16,JGL:iso:err}.  We will leave out
the complex and relatively ad hoc definition of $\AN_\rast$-friendly
spaces \cite[Definition~1]{JGL:iso:err}, but we will note that every
locally compact (even just core-compact) space is
$\AN_\rast$-friendly, and that every LCS-complete space is
$\AN_\rast$-friendly \cite[end of Section~2]{JGL:iso:err}.

Finally, the $r_\sub, s^\rast_\sub$ retraction restricts to a
homeomorphism between $\HVc {\Pred^\rast_{\lin} X}$ and
$\Pred^\rast_{\sub} X$ \cite[Theorem~4.11]{JGL-mscs16,JGL:iso:err},
provided that $X$ is $\AN_\rast$-friendly.

\section{Structural properties of spaces of previsions}
\label{sec:struct-prop-spac}

We collect a few easy properties of spaces of previsions, for future use.
\begin{prop}
  \label{prop:PX:cone}
  For every topological space, $\Pred X$, $\Pred_\sub X$,
  $\Pred_\super X$ and $\Pred_\lin X$ are topological cones, with the
  zero prevision as $0$, and pointwise addition and scalar
  multiplication.
\end{prop}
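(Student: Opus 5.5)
The plan has two parts: check the algebraic cone structure, then check joint continuity of addition and scalar multiplication.

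\emph{Algebra.} Pointwise operations stay inside each space. For previsions $F, G$ and $a \in \Rp$, the maps $F+G$ and $aF$ are positively homogeneous: for $b \in \Rp$, $(F+G)(bh) = bF(h)+bG(h) = b(F+G)(h)$. They are also Scott-continuous, because addition and multiplication by a fixed scalar are Scott-continuous on $\creal$. Sublinearity, superlinearity and linearity are preserved by sums and by multiplication by non-negative scalars, since the defining inequalities simply add up and scale. The zero map is a linear prevision. The cone axioms (\ref{eq:cone}) then hold pointwise because they hold in $\creal$, with the convention $0.\infty=0$; this convention is what gives $0 \cdot F = 0$.

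\emph{Topology.} Each of $\Pred_\sub X$, $\Pred_\super X$, $\Pred_\lin X$ is a subspace of $\Pred X$ and a subcone, so it suffices to treat $\Pred X$. I will show joint continuity of $+ \colon \Pred X \times \Pred X \to \Pred X$ and of $\cdot \colon \Rp \times \Pred X \to \Pred X$, with $\Rp$ carrying the Scott topology. By the definition of the topology, it is enough to prove that inverse images of subbasic sets $[h>r]$ are open.

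For addition, suppose $F(h)+G(h) > r$. If both values are finite, choose reals $s, t$ with $s < F(h)$, $t < G(h)$ and $s+t > r$; we may take $s,t \geq 0$ by clipping at $0$ when a value is $0$. The only care needed is that subbasic sets require $s,t \in \Rp$; when $F(h)=0$, use the whole space in place of $[h>s]$. If one value is $\infty$, take $s = r$ and $t$ arbitrary, again using the whole space for the other factor. In every case we get an open rectangle $[h>s]\times[h>t]$, or a variant with a whole factor, containing $(F,G)$ and contained in the inverse image, since $F'(h)>s$ and $G'(h)>t$ give $F'(h)+G'(h)>s+t\ge r$.

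For scalar multiplication, suppose $aF(h) > r$. Then $a>0$ and $F(h)>0$. Pick $b<a$ with $b>0$ and $s<F(h)$ with $bs>r$, using $s$ large in case $F(h)=\infty$. The set $]b,\infty[ \times [h>s]$ is open in $\Rp\times\Pred X$ and contains $(a,F)$, and for $(a',F')$ in it we get $a'F'(h) > bs > r$.

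This is all elementary. The only place requiring attention is the bookkeeping of the degenerate cases (values $0$ or $\infty$, where subbasic sets need $r\in\Rp$), which I have handled above.
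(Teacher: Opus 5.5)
Your proof is correct and follows essentially the paper's route: inverse images of the subbasic sets $[h>r]$, and open rectangles for addition. The one difference is that you prove joint continuity of scalar multiplication directly, whereas the paper checks only separate continuity and relies on Keimel's result that joint continuity then comes for free.

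Your explicit handling of the degenerate case is in fact needed. The paper's displayed identity $+^{-1}([h>r])=\bigcup_{a,b\in\Rp,\,a+b\geq r}[h>a]\times[h>b]$ misses the pairs $(F,G)$ with $F(h)=0$ and $G(h)>r$. Your whole-space factor, $\Pred X\times[h>t]$, covers exactly those pairs.
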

\begin{proof}
  Let $Z$ be any of the spaces of previous listed above.  Let
  $h \in \Lform X$ and $r \in \Rp$.  For every $a \in \Rp$,
  $(a \cdot \_)^{-1} ([h > r]) = [ah > r]$.  For every $P \in Z$,
  $(\_ \cdot P)^{-1} ([h > r]) = ]r/P (h), \infty]$ if
  $P (h) \neq 0, \infty$, empty if $P (h)=0$, $\Rp$ if $P (h)=\infty$.
  Hence scalar multiplication is separately continuous.  Finally,
  $+^{-1} ([h > r]) = \bigcup_{a, b \in \Rp, a+b \geq r} ([h > a]
  \times [h > b])$, so addition is continuous.
\end{proof}

\begin{lem}
  \label{lemma:game}
  Let $X$ be a topological space.  For every subset $A$ of $X$, the
  map $h \mapsto \sup_{x \in A} h (x)$ is a subnormalized sublinear
  prevision, which is normalized if $A$ is non-empty.  For every
  non-empty compact subset $Q$ of $X$, the map
  $h \mapsto \min_{x \in Q} h (x)$ is a normalized superlinear
  prevision.
\end{lem}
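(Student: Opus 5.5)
I will verify the required properties directly for each of the two maps $F_A (h) \eqdef \sup_{x \in A} h (x)$ and $G_Q (h) \eqdef \min_{x \in Q} h (x)$, in the order: well-definedness, positive homogeneity, Scott-continuity, sub/superlinearity, then (sub)normalization.

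For $F_A$: positive homogeneity is immediate since $\sup_{x \in A} a h (x) = a \sup_{x \in A} h (x)$ for $a \in \Rp$, including $a=0$ with the convention $0.\infty=0$, and including $A$ empty since both sides are then $0$. Monotonicity is clear. For Scott-continuity, given a directed family ${(h_i)}_{i \in I}$ with supremum $h$ in $\Lform X$, the supremum is computed pointwise (a pointwise directed sup of lower semicontinuous maps is lower semicontinuous), so $F_A (h) = \sup_{x \in A} \sup_i h_i (x) = \sup_i \sup_{x \in A} h_i (x)$ by interchange of suprema. Sublinearity follows from $h (x) + h' (x) \leq F_A (h) + F_A (h')$ for every $x \in A$. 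For subnormalization, $F_A (\one + h) = \sup_{x \in A} (1 + h (x))$, which equals $1 + F_A (h)$ when $A \neq \emptyset$ and equals $0 \leq 1 + F_A (h)$ when $A = \emptyset$.

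For $G_Q$ with $Q$ non-empty compact: first I will check that the minimum is attained. The sets $h^{-1} (]t, \infty])$, $t < \inf_{x \in Q} h (x)$, are open and cover $Q$ if the infimum were not attained; compactness and the fact that they form an increasing family give a contradiction, so the inf is a min. (Alternatively, $h$ is continuous into $\creal$ with the Scott topology, so $h[Q]$ is compact in $\creal$, and a non-empty Scott-compact subset of a chain has a least element.) Positive homogeneity and monotonicity are immediate, and superlinearity follows from $\min_Q (h + h') \geq \min_Q h + \min_Q h'$. Normalization: $\min_{x \in Q} (1 + h (x)) = 1 + \min_{x \in Q} h (x)$, using $Q \neq \emptyset$.

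The only non-trivial step is Scott-continuity of $G_Q$, where a min and a directed sup must be interchanged; this is where compactness is essential. Given a directed family ${(h_i)}_{i \in I}$ with pointwise supremum $h$, and any $r < G_Q (h)$, I will show that $r < G_Q (h_i)$ for some $i$. Each $x \in Q$ satisfies $h (x) > r$, hence $h_i (x) > r$ for some $i$, so the open sets $h_i^{-1} (]r, \infty])$ cover $Q$. This family is directed, so by compactness a single $h_i^{-1} (]r, \infty])$ contains $Q$, and since the minimum is attained, $G_Q (h_i) > r$. Monotonicity gives the reverse inequality $\dsup_i G_Q (h_i) \leq G_Q (h)$, which completes the proof.
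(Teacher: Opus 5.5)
Your proof is correct and follows the same route as the paper's: the first part is routine, and for the second you use that the minimum of a lower semicontinuous map on a non-empty compact set is attained, then get Scott-continuity from the directed open cover ${(h_i^{-1} (]r, \infty]))}_{i \in I}$ of $Q$ and compactness. One small slip: in your first argument for attainment of the minimum, the cover should consist of the sets $h^{-1} (]t, \infty])$ with $t > \inf_{x \in Q} h (x)$, not $t <$ (for $t$ below the infimum each such set already contains $Q$, so no contradiction arises); your alternative argument, via the Scott-compact image $h[Q]$ in $\creal$ having a least element, is sound, so the proof stands.
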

\begin{proof}
  The first part is clear.  For the second part, for every
  $h \in \Lform X$, $P (h) \eqdef \min_{x \in Q} h (x)$ makes sense
  because the infimum of a lower semicontinuous map on a non-empty
  compact set is attained.  This also implies that $P (h) > r$ if and
  only if $Q \subseteq h^{-1} (]r, \infty])$.  Positive homogeneity
  and normalization are clear.  We claim that $P$ is Scott-continuous.
  Let ${(h_i)}_{i \in I}$ be a directed family with (pointwise)
  supremum $h$ in $\Lform X$.  $P$ is monotonic, so
  $\dsup_{i \in I} P (h_i) \leq P (h)$.  Conversely, for every
  $r \in \Rp$ such that $r < P (h)$, $Q \subseteq [h > r]$, and
  $[h > r] = \dcup_{i \in I} [h_i > r]$; since $Q$ is compact and
  ${([h_i > r])}_{i \in I}$ is directed, there is an $i \in I$ such
  that $Q \subseteq [h_i > r]$, whence $r < P (h_i)$.
\end{proof}

\begin{prop}
  \label{prop:PX:sup}
  Let $\rast$ be nothing, ``$\leq 1$'' or ``$1$'' and $X$ be a
  topological space.
  \begin{enumerate}
  \item The specialization ordering of $\Pred^\rast X$, of
    $\Pred^\rast_\sub X$, of $\Pred^\rast_\super X$, of
    $\Pred^\rast_\lin X$ is the pointwise ordering $\leq$.
  \item In $\Pred^\rast X$ or in $\Pred^\rast_\sub X$, every
    (non-empty if $\rast$ is ``$1$'') family has a supremum, which is
    computed pointwise.
  \item If $\rast$ is nothing or ``$\leq 1$'', then $0$ is the least
    element of $\Pred^\rast X$, of $\Pred^\rast_\sub X$, of
    $\Pred^\rast_\super X$, of $\Pred^\rast_\lin X$.  If $\rast$ is
    ``$1$'' and $X$ is compact and non-empty, then $\Pred^\rast X$ and
    $\Pred^\rast_\super X$ have a least element, which maps $h \in
    \Lform X$ to $\min_{x \in X} h (x)$.
  \item In $\Pred^\rast X$ or in $\Pred^\rast_\super X$, every
    non-empty finite family has an infimum, which is computed
    pointwise.
  \end{enumerate}
\end{prop}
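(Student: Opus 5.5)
I will prove the four items in order. Each one comes down to checking that a pointwise construction stays inside the relevant class of previsions.

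\emph{Item 1.} Let $Z$ be any of the listed spaces. Suppose first that $F \leq G$ pointwise. Then every subbasic open set $[h > r]$ containing $F$ also contains $G$, so $F$ is below $G$ in the specialization preordering. Conversely, suppose $F$ is below $G$ and fix $h$. For every $r < F(h)$ we have $F \in [h > r]$, hence $G \in [h > r]$. Taking the supremum over such $r$ gives $F(h) \leq G(h)$.

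\emph{Item 2.} Given a family ${(F_i)}_{i \in I}$, define $F(h) \eqdef \sup_{i} F_i(h)$. I will check the following properties of $F$ in turn.
\begin{itemize}
\item Positive homogeneity is immediate. For $a = 0$, use the convention $0 \cdot \infty = 0$ together with $F_i(0)=0$. When $I$ is empty, $F$ is the zero prevision, which is allowed unless $\rast$ is ``$1$''.
\item Scott-continuity holds because a pointwise sup of Scott-continuous maps is Scott-continuous: sups commute with sups.
\item Sublinearity is preserved, since $\sup_i F_i(h+h') \leq \sup_i (F_i(h)+F_i(h')) \leq F(h)+F(h')$.
\item Subnormalization is preserved, since $\sup_i F_i(\one+h) \leq \sup_i (1+F_i(h)) = 1+F(h)$.
\item Normalization is preserved as well, provided $I \neq \emptyset$.
\end{itemize}
By item 1 the specialization ordering is pointwise, so $F$ is the supremum of the family in each of these spaces.

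\emph{Item 3.} The zero map is linear and subnormalized, and it lies below everything. In the normalized case with $X$ compact and non-empty, Lemma~\ref{lemma:game} applied to $Q = X$ shows that $m \colon h \mapsto \min_{x\in X} h(x)$ is a normalized superlinear prevision. It remains to show $m \leq F$ for every normalized prevision $F$.

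This is the main obstacle, because $F$ need not be monotone with respect to anything beyond its Scott-continuity. My plan is as follows. Monotonicity of $F$ (it is Scott-continuous) gives $F(h) \geq F(c\cdot\one)$ whenever $h \geq c\cdot\one$ with $c \in \Rp$. Normalization applied to $h=0$ gives $F(\one)=1$, so by homogeneity $F(c\cdot\one)=c$. For any real $c < m(h)$ we have $h \geq c\cdot\one$, hence $F(h) \geq c$. Letting $c \to m(h)$ yields $F(h) \geq m(h)$; this also covers the case $m(h)=\infty$. This works in $\Pred^1 X$ and in $\Pred^1_\super X$, since $m$ belongs to both.

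\emph{Item 4.} For finitely many $F_1, \dots, F_n$ with $n \geq 1$, define $F(h) \eqdef \min_i F_i(h)$.
\begin{itemize}
\item Positive homogeneity is clear.
\item Scott-continuity is the one point to check. Let ${(h_j)}_j$ be directed with supremum $h$, and let $r < \min_i F_i(h)$. For each $i$ there is $j_i$ with $F_i(h_{j_i}) > r$. Taking $j$ above all the $j_i$ (possible since the index set is finite and the family directed) and using monotonicity gives $\min_i F_i(h_j) > r$.
\item Superlinearity is preserved, since $\min_i F_i(h+h') \geq \min_i (F_i(h)+F_i(h')) \geq F(h)+F(h')$.
\item Subnormalization is preserved, since $\min_i F_i(\one+h) \leq \min_i (1+F_i(h)) = 1+F(h)$.
\item Normalization is preserved, since $\min_i (1+F_i(h)) = 1+\min_i F_i(h)$.
\end{itemize}
By item 1, $F$ is the infimum of $F_1, \dots, F_n$.
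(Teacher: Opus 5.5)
Your proof is correct and follows essentially the same route as the paper. Item~1 goes through the subbasic sets $[h > r]$. Items~2 and~4 check that pointwise suprema (resp.\ non-empty finite pointwise infima) preserve Scott-continuity, positive homogeneity and the relevant sub/superlinearity and (sub)normalization conditions. Item~3 compares with $h \mapsto \min_{x \in X} h(x)$ from Lemma~\ref{lemma:game}. Your approximation by real constants $c < m(h)$ in item~3 is slightly more careful than the paper's one-line computation $P(\min_{x \in X} h(x) \cdot \one) = \min_{x \in X} h(x) \cdot P(\one)$, which implicitly uses homogeneity with a scalar that may be $\infty$.
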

\begin{proof}
  (1) If $P$ is below $P'$ in the pointwise ordering, every subbasic
  open set $[h > r]$ that contains $P$ also contains $P'$.  If $P$ is
  below $P'$ in the specialization ordering, then for all
  $h \in \Lform X$ and $r \in \Rp$, $P \in [h > r]$ implies
  $P' \in [h > r]$.  Hence $P (h) > r$ implies $P' (h) > r$.  Taking
  suprema over $r$ yields $P (h) \leq P' (h)$.

  (2) Every pointwise supremum of lower semicontinuous maps is lower
  semicontinuous, and it is easy to see that pointwise suprema
  preserves positive homogeneity and subnormalization.  Normalization
  ($P (\one+h) = 1+P(h)$) is also preserved provided the supremum is
  over a non-empty family.  If ${(F_i)}_{i \in I}$ is any family of
  sublinear previsions, then for all $h', h' \in \Lform X$, $\sup_{i
    \in I} F_i (h+h') \leq \sup_{i \in I} F_i (h) + F_i (h') \leq
  \sup_{i \in I} F_i (h) + \sup_{i \in I} F_i (h')$, so $\sup_{i \in
    I} F_i$ is sublinear.

  (3) The cases where $\rast$ is nothing or ``$\leq 1$'' are obvious.
  If $\rast$ is ``$1$'' and $X$ is compact and non-empty, then
  $h \mapsto \min_{x \in X} h (x)$ is in $\Pred^1_\super X$ by
  Lemma~\ref{lemma:game}.  For every $P \in \Pred^1 X$, for every
  $h \in \Lform X$,
  $P (h) \geq P (\min_{x \in X} h (x) \cdot \one) = \min_{x \in X}
  \allowbreak h (x) . P (\one) = \min_{x \in X} h (x)$, so
  $h \mapsto \min_{x \in X} h (x)$ is least.

  (4) This is similar to (2), with the exception that only
  \emph{finite} pointwise infima of lower semicontinuous maps are
  lower semicontinuous.  We restrict to non-empty families: the
  pointwise infimum of the empty family would be the constant $\infty$
  map, which fails to be a prevision since it does not map $0$ to $0$.
\end{proof}

\section{The cast shadow construction}
\label{sec:cast-shad-constr}

There are some delicate, technical cone-theoretic constructions we
will need to do in order to study $\Pred^\rast X$ when $\rast$ is
``$\leq 1$'' or ``$1$''.  Introducing them when we need them would
disrupt the flow of exposition too much, and we prefer to get them out
of the way right now.

Geometrically, the \emph{cast shadow} construction $\shadow A {x_0}$
is obtained by placing a source of light at $x_0$, and looking at the
shadow that $A$ casts, including $A$ itself.

\begin{lem}
  \label{lemma:P:shd}
  Let $\C$ be a cone and $x_0 \in \C$.  We define
  $\shd {x_0} \alpha \colon \C \to \C$ by
  $\shd {x_0} \alpha (x) \eqdef x +_\alpha x_0$ for every $x \in \C$
  and every $\alpha \in [0, 1]$.  For every subset $A$ of $\C$, let
  $\shadow {A} {x_0} \eqdef \bigcup_{\beta \in {]0, 1]}} {(\shd {x_0}
    \beta)}^{-1} (A)$.  Then:
  \begin{enumerate}
  \item for all $\alpha, \beta \in [0, 1]$,
    $\shd {x_0} \beta \circ \shd {x_0} \alpha = \shd {x_0}
    {\alpha\beta}$;
  \item if $A$ is convex, then $\shadow {A} {x_0}$ is convex;
  \item if the complement of $A$ is convex, then the complement of
    $\shadow {A} {x_0}$ is convex;
  \item for every $\alpha \in {]0, 1]}$,
    ${(\shd {x_0} \alpha)}^{-1} (\shadow {A} {x_0}) \subseteq \shadow {A}
    {x_0}$;
  \item if $\C$ is semitopological and $A$ is open in $\C$, then
    $\shadow {A} {x_0}$ is open is $\C$.
  \end{enumerate}
\end{lem}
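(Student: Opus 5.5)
The plan is to prove the five items in order, relying on the cone axioms (\ref{eq:cone}) and on two elementary facts: $\shd {x_0} \alpha$ is affine, and $\shd{x_0}\beta$ is continuous when $\C$ is semitopological.

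For (1), I will expand directly:
\[
  \shd {x_0} \beta (\shd {x_0} \alpha (x)) = \beta(\alpha x + (1-\alpha) x_0) + (1-\beta) x_0 = \alpha\beta x + (\beta - \alpha\beta + 1 - \beta) x_0,
\]
which equals $x +_{\alpha\beta} x_0$. This uses only distributivity and $(a+b)\cdot x = a\cdot x + b\cdot x$, with all coefficients non-negative, so no subtraction in the cone is needed. Item (4) then follows from (1). If $\shd{x_0}\alpha(x) \in \shadow A {x_0}$, there is a $\beta \in {]0,1]}$ with $\shd{x_0}\beta(\shd{x_0}\alpha(x)) = \shd{x_0}{\alpha\beta}(x) \in A$. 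Since $\alpha\beta \in {]0,1]}$, we get $x \in \shadow A{x_0}$.

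For (2), take $x, y \in \shadow A{x_0}$ with witnesses $\beta, \gamma$, so that $\shd{x_0}\beta(x), \shd{x_0}\gamma(y) \in A$. Because $\shd{x_0}{\beta'}(x) = \shd{x_0}{\beta'/\beta}(\shd{x_0}\beta(x))$ is a convex combination of a point of $A$ and $x_0$, it is not in $A$ in general. So I will not try to equalize the parameters. Instead I will pick, for $z = x +_\lambda y$, the parameter $\delta \eqdef \beta\gamma/(\lambda\gamma + (1-\lambda)\beta) \in {]0,1]}$. A direct computation should show that $\shd{x_0}\delta(z) = \shd{x_0}\beta(x) +_\mu \shd{x_0}\gamma(y)$ with $\mu = \lambda\gamma/(\lambda\gamma + (1-\lambda)\beta)$. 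To check it, compare coefficients: the coefficient of $x$ is $\mu\beta = \lambda\delta$, that of $y$ is $(1-\mu)\gamma = (1-\lambda)\delta$, and the coefficients of $x_0$ agree automatically because both sides are convex combinations. Convexity of $A$ then concludes. This coefficient bookkeeping is the main obstacle, as it is the only non-formal step.

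For (3), take $x, y \notin \shadow A{x_0}$ and suppose $z = x +_\lambda y \in \shadow A{x_0}$, witnessed by $\delta$. I will reverse the computation above: choose $\beta, \gamma \in {]0,1]}$ and $\mu$ with $\shd{x_0}\delta(z) = \shd{x_0}\beta(x) +_\mu \shd{x_0}\gamma(y)$. For instance $\beta = \gamma = \delta$ and $\mu = \lambda$ works, since $\shd{x_0}\delta$ is affine. Both $\shd{x_0}\delta(x)$ and $\shd{x_0}\delta(y)$ lie in the complement of $A$, which is convex, so their convex combination does too, a contradiction. (The same trick with equal parameters only fails in (2) because there the witnesses are given, while here we get to choose them.)

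For (5), $\shd{x_0}\beta$ is $x \mapsto \beta x + (1-\beta) x_0$, which is continuous because scalar multiplication and addition are separately continuous. Hence each ${(\shd{x_0}\beta)}^{-1}(A)$ is open, and so is their union $\shadow A{x_0}$.
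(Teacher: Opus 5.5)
Your proposal is correct and follows the paper's proof essentially step for step. The paper uses the same expansion for (1) and (4). For (2) it uses the same parameters $\delta = \beta\gamma/(\lambda\gamma+(1-\lambda)\beta)$ and $\mu$, with the same coefficient check, including deducing the $x_0$ coefficient from the other two. For (3) it relies on the affineness of $\shd{x_0}\delta$, phrased as an intersection of preimages of the convex complement. For (5) it uses continuity of $\shd{x_0}\beta$.

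The only point you assert without justification is $\delta \leq 1$. The paper proves this as the separate inequality $\beta\gamma \leq \lambda\gamma + (1-\lambda)\beta$, which follows at once from $\beta\gamma \leq \min(\beta,\gamma)$.
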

\begin{proof}
  (1)  For every $x \in \C$,
  $\shd {x_0} \beta (\shd {x_0} \alpha (x)) = (x +_\alpha x_0) +_\beta
  x_0 = \beta \cdot (\alpha \cdot x + (1-\alpha) \cdot x_0) +
  (1-\beta) \cdot x_0 = x +_{\alpha\beta} x_0 = \shd {x_0}
    {\alpha\beta} (x)$.

    (2) We note that: $(*)$ for all $a, \beta, \gamma \in [0, 1]$,
    $\beta \gamma \leq a \gamma + (1-a) \beta$.  We prove this when
    $0 < \beta \leq \gamma$, since the remaining case
    $0 < \gamma \leq \beta$ follows by exchanging $\beta$ and $\gamma$
    and replacing $a$ by $1-a$.  We fix $\gamma$, and we define
    $f (\beta) \eqdef a \gamma + (1-a) \beta - \beta\gamma$.  This is
    a linear map, $f (0) = a \gamma \geq 0$,
    $f (\gamma) = \gamma - \gamma^2 = \gamma (1-\gamma) \geq 0$, so
    $f$ is non-negative on the interval $[0, \gamma]$, and therefore
    $(*)$ holds.

    \ifarxiv 
    \noindent
    \begin{minipage}{0.54\linewidth}
      \strut\hskip10pt Let $x, y \in \shadow {A} {x_0}$ and
      $a \in [0, 1]$.  If $a$ is equal to $0$ or to $1$, then
      $x +_a y$ is equal to $y$ or to $x$, hence is in
      $\shadow {A} {x_0}$.  Let us therefore assume that $0 < a < 1$.
      We have $\shd {x_0} \beta (x) \in A$ and
      $\shd {x_0} \gamma (y) \in A$ for some
      $\beta, \gamma \in {]0, 1]}$.  In other words, $x +_\beta x_0$
      and $y +_\gamma x_0$ are in $A$.  The situation is as shown in
      the picture on the right.
    \end{minipage}
    \begin{minipage}{0.45\linewidth}
      \centering
      \includegraphics[scale=0.28]{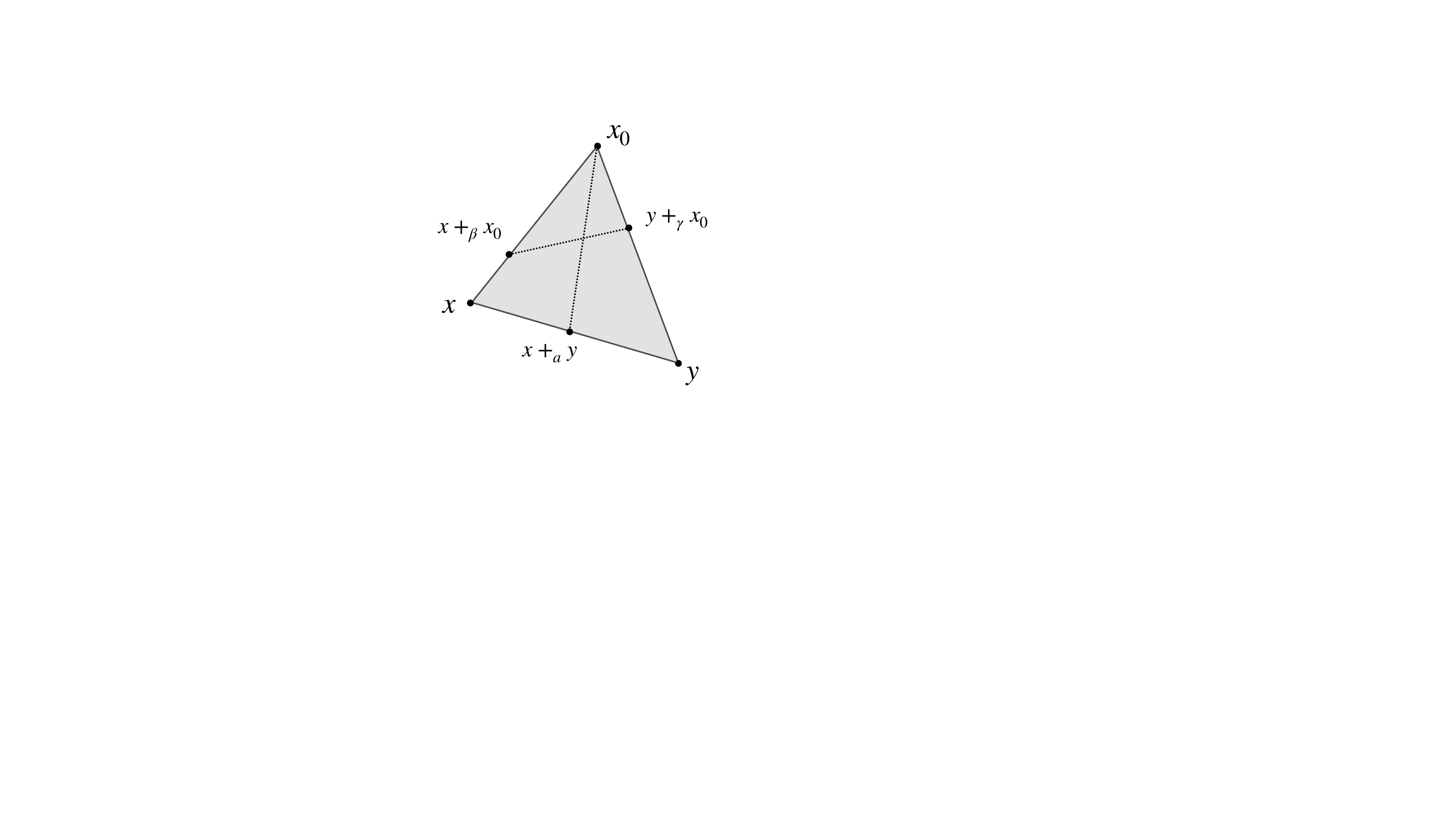}
    \end{minipage}
    \noindent
    We draw the two dotted segments, and we find the point at their
    intersection.  %
    \else%
    Let $x, y \in \shadow {A} {x_0}$ and $a \in [0, 1]$.  If $a$ is
    equal to $0$ or to $1$, then $x +_a y$ is equal to $y$ or to $x$,
    hence is in $\shadow {A} {x_0}$.  Let us therefore assume that
    $0 < a < 1$.  We have $\shd {x_0} \beta (x) \in A$ and
    $\shd {x_0} \gamma (y) \in A$ for some
    $\beta, \gamma \in {]0, 1]}$.  In other words, $x +_\beta x_0$ and
    $y +_\gamma x_0$ are in $A$.  We build the point at the
    intersection of the segments $[x +_\beta x_0, y +_\gamma x_0]$ and
    $[x_0, x+_a y]$.  \fi%
    Formally, let
    $\alpha \eqdef \frac {\beta \gamma} {a \gamma + (1-a) \beta}$, and
    let $\alpha' \eqdef \frac {a \gamma} {a \gamma + (1-a) \beta}$.
    It is clear that $\alpha' \in {]0, 1]}$.  Since
    $\beta, \gamma > 0$, we have $\alpha > 0$, and $\alpha \leq 1$ is
    a consequence of $(*)$.  We claim that
    $\shd {x_0} \alpha (x +_a y) = \shd {x_0} \beta (x) +_{\alpha'}
    \shd {x_0} \gamma (y)$:
  \begin{align*}
    \shd {x_0} \alpha\nolimits (x +_a y)
    & = \alpha \cdot (a \cdot x + (1-a) \cdot y) + (1-\alpha) \cdot
      x_0 \\
    & = \alpha a \cdot x + \alpha (1-a) \cdot y + (1-\alpha) \cdot
      x_0 \\
    \shd {x_0} \beta\nolimits (x) +_{\alpha'} \shd {x_0} \gamma\nolimits (y)
    & = \alpha' \cdot (\beta \cdot x + (1-\beta) \cdot x_0)
      + (1-\alpha') \cdot (\gamma \cdot y + (1-\gamma) \cdot x_0) \\
    & = \alpha'\beta \cdot x + (1-\alpha')\gamma \cdot y
      + [\alpha'(1-\beta) + (1-\alpha')(1-\gamma)] \cdot x_0.
  \end{align*}
  In order to see that those two values are equal, we check that
  $\alpha a = \alpha'\beta$, $\alpha (1-a) = (1-\alpha')\gamma$ and
  $1-\alpha = \alpha'(1-\beta) + (1-\alpha')(1-\gamma)$.  For the
  first one,
  $\alpha'\beta = \frac {a \beta \gamma} {a \gamma + (1-a) \beta} =
  \alpha a$.  For the second one,
  $(1-\alpha')\gamma = \frac {(1-a)\beta} {a \gamma + (1-a) \beta}
  \gamma = \frac {(1-a)\beta\gamma} {a \gamma + (1-a) \beta} = (1-a)
  \alpha$.  The third one follows from the first two, since
  $\alpha' (1-\beta) + (1-\alpha') (1-\gamma) = 1 - \alpha' \beta -
  (1-\alpha') \gamma = 1 - \alpha a - \alpha (1-a) = 1 - \alpha$.

  Since $\shd {x_0} \beta (x)$ and $\shd {x_0} \gamma (y)$ are in the
  convex set $A$,
  $\shd {x_0} \beta (x) +_{\alpha'} \shd {x_0} \gamma (y)$ is in $A$, so
  $x +_a y \in {(\shd {x_0} \alpha)}^{-1} (A) \subseteq \shadow {A}
  {x_0}$.  Therefore $\shadow {A} {x_0}$ is convex.

  (3)  For every $\beta \in [0, 1]$, $\shd {x_0} \beta$ is affine, as
  one checks easily.  It follows that for every convex subset $B$ of
  $\C$, ${(\shd {x_0} \beta)}^{-1} (B)$ is convex.

  The complement of $\shadow {A} {x_0}$ is
  $\bigcap_{\beta \in {]0, 1]}} {(\shd {x_0} \beta)}^{-1} (B)$, where
  $B$ is the complement of $A$.  We have seen that, for each
  $\beta \in {]0, 1]}$, ${(\shd {x_0} \beta)}^{-1} (B)$ is convex, and
  intersections of convex sets are easily seen to be convex.
  
  (4)  For every $\alpha \in {]0, 1]}$,
  ${(\shd {x_0} \alpha)}^{-1} (\shadow {A} {x_0}) = \bigcup_{\beta \in
    {]0, 1]}} {(\shd {x_0} \alpha)}^{-1} ({(\shd {x_0} \beta)}^{-1} (
  A))$ is equal to
  $\bigcup_{\beta \in {]0, 1]}} {(\shd {x_0} {\alpha\beta})}^{-1} (A)$
  by item~(1), hence is included in $\shadow {A} {x_0}$.

  (5)  For every $\beta \in [0, 1]$, $\shd {x_0} \beta$ is continuous,
  so if $A$ is open, then ${(\shd {x_0} \alpha)}^{-1} (A)$ is open,
  too, and we conclude since unions of open sets are open.
\end{proof}

\begin{lem}
  \label{lemma:P:subnorm}
  Let $X$ be a topological space, $P \in \Pred X$, and
  $\mathcal V \eqdef P^{-1} (]1, \infty])$.  Then:
  \begin{enumerate}
  \item $P$ is subnormalized if and only if
    ${(\shd \one \alpha)}^{-1} (\mathcal V) \subseteq \mathcal V$ for
    every $\alpha \in {]0, 1]}$;
  \item if $P$ is subnormalized, then for every convex open subset
    $\mathcal U$ of $\Lform X$ included in $\mathcal V$, there is a
    subnormalized superlinear prevision $F$ such that $F \leq P$ and
    $\mathcal U \subseteq F^{-1} (]1, \infty])$.
  \end{enumerate}
\end{lem}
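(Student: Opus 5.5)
For item~(1), I would translate subnormalization into a statement about $\mathcal V$ using positive homogeneity.  $(\shd \one \alpha)^{-1}(\mathcal V)$ is the set of $h$ such that $P(\alpha h + (1-\alpha)\one) > 1$.  If $P$ is subnormalized and $\alpha \in {]0,1]}$, then
\[
P(\alpha h + (1-\alpha)\one) = \alpha P(h + \tfrac{1-\alpha}{\alpha}\one) \leq \alpha P(h) + (1-\alpha).
\]
So $P(\alpha h + (1-\alpha)\one) > 1$ forces $\alpha P(h) > \alpha$, that is, $P(h) > 1$.  Conversely, assume the inclusion holds and, aiming for a contradiction, that $P(\one + h) > 1 + P(h)$ for some $h$.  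Then $P(h) < \infty$.  I pick $t > 0$ with $P(\one+h) > t > 1 + P(h)$ and set $\alpha \eqdef 1/t$.  Scaling gives $P(\one/t + h/t) > 1$, i.e.\ $P(\alpha h + \alpha\one) > 1$.  This is not quite of the form $\alpha h' + (1-\alpha)\one$, so I take $h' \eqdef h/(1-\alpha)$ with the parameter $\beta \eqdef 1-\alpha$, giving $\beta h' + (1-\beta)\one = h + \alpha\one$.  Rather than the literal scaling, I would phrase it as follows: $P(\beta k + (1-\beta)\one) > 1 \Rightarrow P(k) > 1$ for all $k$, applied with $\beta = (1+c)^{-1}$ and $k = (1+c)(h + c\one)/((1+c)\cdot$ normalization$)$.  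Concretely, by homogeneity the inclusion says $P(k + c\one) > 1+c \Rightarrow P(k) > 1$ for all $c \geq 0$, and by scaling again $P(k + c\one) > s + c \Rightarrow P(k) > s$ for all $s > 0$.  Taking $c = 1$, $k = h$ and $s$ slightly above $P(h)$ (or any $s>0$ if $P(h)=0$) yields the contradiction $P(h) > s > P(h)$.  Letting $s \to 0$ handles the edge case $s=0$.

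For item~(2), I would apply Keimel's order isomorphism and a separation argument in the semitopological cone $\Lform X$, using the cast shadow.  Let $\mathcal U' \eqdef \shadow{\mathcal U}{\one}$.  By Lemma~\ref{lemma:P:shd} (2), (5) and (4), $\mathcal U'$ is convex, open, and closed under $(\shd\one\alpha)^{-1}$.  Since $\mathcal V$ is closed under these preimages by item~(1), and $\mathcal U \subseteq \mathcal V$, we get $\mathcal U' \subseteq \mathcal V$.  In particular $\mathcal U'$ is proper, since $0 \notin \mathcal V$.  Then $F \eqdef M^{\mathcal U'}$ is a positively homogeneous, lower semicontinuous, superlinear map (convexity of $\mathcal U'$), hence a superlinear prevision, with $F^{-1}(]1,\infty]) = \mathcal U' \supseteq \mathcal U$.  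The inclusion $\mathcal U' \subseteq \mathcal V$ translates through the order isomorphism $\Open^*\C \cong \Lhom\C$ into $F \leq P$.  Finally, $F$ is subnormalized by item~(1) applied to $F$, because $F^{-1}(]1,\infty]) = \mathcal U'$ satisfies the shadow-closure condition.  Item~(1) was stated for $P \in \Pred X$, and $F$ is one.

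The main obstacle is the converse direction of (1): getting the rescaling right so that the single inclusion condition for all $\alpha$ really encodes $P(\one+h)\leq 1+P(h)$, including the cases $P(h)=0$ and $P(h)=\infty$.  Item~(2) is then essentially bookkeeping with Lemma~\ref{lemma:P:shd} and Keimel's correspondence.
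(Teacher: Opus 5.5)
Your proof is correct and follows the paper's route. In (1), the forward direction is the same estimate, and your final reformulation $P(k+c\one) > s+c \Rightarrow P(k) > s$ is exactly the paper's substitution: $r = 1+s$, $\alpha = (r-1)/r$, test function $\frac{1}{r-1}\cdot h$. In (2), you use the same cast-shadow and Minkowski-functional construction; you get properness a bit more directly from $\shadow{\mathcal U}{\one} \subseteq \mathcal V$ and $0 \notin \mathcal V$, where the paper uses $P(\one)\leq 1$.

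You should delete the abandoned middle attempt in the converse of (1). It contains an algebra slip: the right choice there is $h' = \alpha h/(1-\alpha)$, not $h/(1-\alpha)$. It also contains a meaningless placeholder ("normalization") expression.
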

\begin{proof}
  (1) Let us assume that $P$ is subnormalized.  For every
  $\alpha \in {]0, 1]}$, for every
  $h \in {(\shd \one \alpha)}^{-1} (\mathcal V)$, we have
  $P (\shd \one \alpha (h)) > 1$, namely
  $P (\alpha \cdot h + (1-\alpha) \cdot \one) > 1$.  Since $P$ is
  subnormalized, the left-hand side is smaller than or equal to
  $\alpha P (h) + (1-\alpha)$.  Hence $\alpha P (h) > \alpha$.  Since
  $\alpha > 0$, it follows that $P (h) > 1$, namely that
  $h \in \mathcal V$.  Hence
  ${(\shd \one \alpha)}^{-1} (\mathcal V) \subseteq \mathcal V$.

  Conversely, let us assume that
  ${(\shd \one \alpha)}^{-1} (\mathcal V) \subseteq \mathcal V$ for
  every $\alpha \in {]0, 1]}$.  Let $h \in \Lform X$.  In order to
  show that $P (\one+h) \leq 1+P (h)$, we consider any
  $r \in \Rp \diff \{0\}$ such that $r < P (\one+h)$, and we claim
  that $r \leq 1+P(h)$.  If $r \leq 1$, this is obvious, so we assume
  $r > 1$.  Since $r < P (\one+h)$,
  $\frac 1 r \cdot (\one+h) \in \mathcal V$.  Let
  $\alpha \eqdef \frac {r-1} r$.  Then
  $\shd \one \alpha (\frac 1 {r-1} \cdot h) = \frac 1 r \cdot h +
  \frac 1 r \cdot \one = \frac 1 r \cdot (\one+h) \in \mathcal V$.
  Therefore
  $\frac 1 {r-1} \cdot h \in {(\shd \one \alpha)}^{-1} (\mathcal V)$.
  By assumption, $\frac 1 {r-1} \cdot h \in \mathcal V$, namely 
  $P (h) > r-1$, whence $r < 1+P(h)$.

  (2) We let $F$ be the upper Minkowski functional of
  $\shadow {\mathcal U} {\one}$.  By Lemma~\ref{lemma:P:shd} (2)
  and~(5), $\shadow {\mathcal U} {\one}$ is convex and open.  It is
  proper because it does not contain the zero function $0$: otherwise
  $\shd \one \alpha (0) = (1-\alpha) \cdot \one$ would be in
  $\mathcal U$ for some $\alpha \in {]0, 1]}$, hence in $\mathcal V$,
  so $(1-\alpha) P (\one) > 1$, which is impossible since
  $P (\one) \leq 1$.  Therefore $F = M^{\shadow {\mathcal U} {\one}}$
  is a superlinear prevision.  Now
  $F^{-1} (]1, \infty]) = \shadow {\mathcal U} {\one} = \bigcup_{\beta
    \in {]0, 1]}} {(\shd \one \beta)}^{-1} (\mathcal U) \subseteq
  \bigcup_{\beta \in {]0, 1]}} {(\shd \one \beta)}^{-1} (\mathcal V)$,
  which is included in $\mathcal V = P^{-1} (]1, \infty])$ by (1); so
  $F \leq P$, by relying on the order isomorphism
  $\Open^* \C \cong \Lhom \C$ with $\C \eqdef \Lform X$.  We have
  $F^{-1} (]1, \infty]) = \shadow {\mathcal U} {\one}$ and
  ${(\shd \one \alpha)}^{-1} (\shadow {\mathcal U} {\one}) \subseteq
  \shadow {\mathcal U} {\one}$ for every $\alpha \in {]0, 1]}$ by
  Lemma~\ref{lemma:P:shd} (4), so by (1) of the current lemma, $F$ is
  subnormalized.  Finally,
  $\mathcal U = {(\shd \one 1)}^{-1} (\mathcal U) \subseteq \shadow
  {\mathcal U} {\one} = F^{-1} (]1, \infty])$.
\end{proof}

\begin{corollary}
  \label{corl:P:subnorm}
  Given any subnormalized prevision $P$ on a topological space $X$ and
  a superlinear prevision $F_0 \leq P$, there is a subnormalized
  superlinear prevision $F$ such that $F_0 \leq F \leq P$.
\end{corollary}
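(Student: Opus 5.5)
The natural plan is to apply Lemma~\ref{lemma:P:subnorm}~(2) to the open set cut out by $F_0$. Let $\mathcal V \eqdef P^{-1} (]1, \infty])$ and $\mathcal U \eqdef F_0^{-1} (]1, \infty])$. We need three things: $\mathcal U$ is open, $\mathcal U$ is convex, and $\mathcal U \subseteq \mathcal V$.

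\begin{itemize}
\item \emph{Open.} $\mathcal U$ is open in $\Lform X$ because $F_0$ is Scott-continuous.
\item \emph{Convex.} $\mathcal U$ is convex because $F_0$ is superlinear. If $F_0 (h), F_0 (h') > 1$ and $a \in [0, 1]$, then superlinearity and positive homogeneity give $F_0 (h +_a h') \geq a F_0 (h) + (1-a) F_0 (h') > 1$. Alternatively, this follows from Keimel's correspondence, under which superlinear maps correspond to convex open sets.
\item \emph{Inclusion.} $\mathcal U \subseteq \mathcal V$ is immediate from $F_0 \leq P$.
\end{itemize}

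Lemma~\ref{lemma:P:subnorm}~(2) then yields a subnormalized superlinear prevision $F$ with $F \leq P$ and $\mathcal U \subseteq F^{-1} (]1, \infty])$.

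It remains to show $F_0 \leq F$. The inclusion $F_0^{-1} (]1, \infty]) \subseteq F^{-1} (]1, \infty])$ gives this through the order isomorphism $\Open^* \C \cong \Lhom \C$ with $\C \eqdef \Lform X$, applied to the two positively homogeneous lower semicontinuous maps $F_0$ and $F$.

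One point to check here is that $\mathcal U$ is a \emph{proper} open subset. This holds because $F_0 (0) = 0$, so $0 \notin \mathcal U$. In any case, positive homogeneity alone suffices: for $F_0 (h) > r > 0$ we get $F_0 (h/r) > 1$, hence $F (h/r) > 1$, hence $F (h) > r$. Taking the supremum over $r$ gives $F_0 (h) \leq F (h)$; the case $F_0 (h) = 0$ is trivial.

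There is no real obstacle, since all the work was done in Lemma~\ref{lemma:P:subnorm} and the cast shadow lemma. The only step needing a little care is the convexity of $\mathcal U$ derived from superlinearity, which is the computation in the second item above.
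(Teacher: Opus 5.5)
Your proposal is correct and is the paper's proof: apply Lemma~\ref{lemma:P:subnorm}~(2) to the convex open set $F_0^{-1} (]1, \infty])$, which lies inside $P^{-1} (]1, \infty])$ because $F_0 \leq P$, and then obtain $F_0 \leq F$ from the order isomorphism $\Open^* \C \cong \Lhom \C$. Your explicit checks of convexity and of the inclusion, and your direct homogeneity argument for $F_0 \leq F$, only write out details the paper leaves implicit.
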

\begin{proof}
  We apply Lemma~\ref{lemma:P:subnorm} (2) to
  $\mathcal U \eqdef F_0^{-1} (]1, \infty])$; $\mathcal U$ is open,
  and convex.  We obtain a subnormalized superlinear prevision
  $F \leq P$ such that $\mathcal U \subseteq F^{-1} (]1, \infty])$,
  whence $F_0 \leq F$ by the order isomorphism
  $\Open^* \C \cong \Lhom \C$ with $\C \eqdef \Lform X$.
\end{proof}

\begin{lem}
  \label{lemma:P:supnorm}
  Let $X$ be a topological space, $P \in \Pred X$, and $\mathcal D
  \eqdef P^{-1} ([0, 1])$.  The following two statements are
  equivalent:
  \begin{enumerate}
  \item for every $h \in \Lform X$, $P (\one+h) \geq 1+P(h)$;
  \item $P (\one) \geq 1$ and for every $\alpha \in {]0, 1]}$,
    ${(\shd \one \alpha)}^{-1} (\mathcal D) \subseteq \mathcal D$.
  \end{enumerate}
\end{lem}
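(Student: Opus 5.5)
The plan is to mirror the proof of Lemma~\ref{lemma:P:subnorm}~(1), with the inequalities reversed and with the closed set $\mathcal D = P^{-1}([0,1])$ in place of the open set $\mathcal V$. Throughout I use the identity $\shd \one \alpha (h) = \alpha \cdot h + (1-\alpha) \cdot \one$, positive homogeneity of $P$, and the convention $0.\infty = 0$.

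\emph{(1) implies (2).} Taking $h \eqdef 0$ in (1) gives $P (\one) \geq 1 + P (0) = 1$. Now let $\alpha \in {]0, 1]}$ and $h \in {(\shd \one \alpha)}^{-1} (\mathcal D)$, so that $P (\alpha \cdot h + (1-\alpha) \cdot \one) \leq 1$.
\begin{itemize}
\item If $\alpha = 1$ there is nothing to prove.
\item If $\alpha < 1$, write $\alpha \cdot h + (1-\alpha) \cdot \one = (1-\alpha) \cdot (\one + \frac \alpha {1-\alpha} \cdot h)$. By homogeneity and (1), the value of $P$ at this function is at least $(1-\alpha) + \alpha P (h)$.
\end{itemize}
Hence $(1-\alpha) + \alpha P (h) \leq 1$, so $\alpha P (h) \leq \alpha$, and therefore $P (h) \leq 1$ because $\alpha > 0$. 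This shows $h \in \mathcal D$.

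\emph{(2) implies (1).} Fix $h \in \Lform X$. It suffices to show that $r \leq P (\one + h)$ for every real $r$ with $r < 1 + P (h)$, since taking the supremum over such $r$ gives $1 + P(h)$.
\begin{itemize}
\item If $r \leq 1$, then $P (\one + h) \geq P (\one) \geq 1 \geq r$, using monotonicity and $P(\one) \geq 1$.
\item If $r > 1$, I argue by contradiction and assume $P (\one + h) < r$, so that $\frac 1 r \cdot (\one + h) \in \mathcal D$. Set $\alpha \eqdef \frac {r-1} r \in {]0,1[}$. Exactly as in Lemma~\ref{lemma:P:subnorm}, $\shd \one \alpha (\frac 1 {r-1} \cdot h) = \frac 1 r \cdot (\one + h) \in \mathcal D$. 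By (2), $\frac 1 {r-1} \cdot h \in \mathcal D$, that is, $P (h) \leq r - 1$. This contradicts $r < 1 + P(h)$.
\end{itemize}
Therefore $P (\one + h) \geq r$ for all such $r$, which gives (1).

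The only delicate point is bookkeeping for infinite values. When $P(h) = \infty$, every real $r$ must be handled, and the argument above does this. When $\alpha$ is at an endpoint, the case $\alpha = 1$ is trivial and the second part never uses $\alpha = 0$. The key difference from Lemma~\ref{lemma:P:subnorm} is the extra hypothesis $P (\one) \geq 1$. It is needed because the shadow condition alone says nothing about small $r$: the zero prevision satisfies the inclusion but not (1).
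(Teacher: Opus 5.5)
Your proof is correct and follows the paper's argument essentially step for step: $h \eqdef 0$ plus the homogeneity estimate $P (\alpha \cdot h + (1-\alpha) \cdot \one) \geq (1-\alpha) + \alpha P (h)$ for $(1) \limp (2)$, and $\alpha \eqdef \frac {r-1} r$ with $\shd \one \alpha (\frac 1 {r-1} \cdot h) = \frac 1 r \cdot (\one + h)$ for $(2) \limp (1)$. The only difference is cosmetic: the paper takes $r > P (\one+h)$, uses $P (\one) \geq 1$ to force $r > 1$, concludes $1 + P (h) \leq r$ directly and takes infima, whereas you take $r < 1 + P (h)$, argue by contradiction and take suprema.
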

\begin{proof}
  $(1) \limp (2)$.  By taking $h \eqdef 0$ in (1), $P (\one) \geq 1$.
  For every $\alpha \in {]0, 1]}$, for every
  $h \in {(\shd \one \alpha)}^{-1} (\mathcal D)$, we have
  $P (\alpha \cdot h + (1-\alpha) \cdot \one) \leq 1$.  By (1) and
  positive homogeneity,
  $P (\alpha \cdot h + (1-\alpha) \cdot \one) \geq \alpha P(h) +
  1-\alpha$.  Therefore $\alpha P (h) \leq \alpha$, whence
  $P (h) \leq 1$ and therefore $h \in \mathcal D$.

  $(2) \limp (1)$.  Let $r \in \Rp$ be such that $P (\one+h) < r$.  We
  show that $1+P(h) \leq r$, and taking infima over $r$ will prove
  (1).  We have $P (\one+h) \geq P (\one) \geq 1$, by monotonicity and
  (2).  In particular, $r > 1$, so
  $\alpha \eqdef \frac {r-1} r \in {]0, 1]}$.  Since $P (\one+h) < r$,
  $\frac 1 r \cdot h + \frac 1 r \cdot \one \in \mathcal D$.  But
  $\shd \one \alpha (\frac 1 {r-1} \cdot h) = \frac 1 r \cdot h +
  \frac 1 r \cdot \one$, so $\frac 1 {r-1} \cdot h$ is in
  ${(\shd \one \alpha)}^{-1} (\mathcal D)$, hence is in $\mathcal D$
  by (2).  This means that $P (\frac 1 {r-1} \cdot h) \leq 1$, namely
  that $1+P (h) \leq r$.
\end{proof}

\begin{lem}
  \label{lemma:P:norm}
  Given any normalized prevision $P$ on a topological space $X$ and a
  subnormalized sublinear prevision $F_0 \geq P$ such that
  $F_0 (\one) \geq 1$, there is a normalized sublinear prevision $F$ such
  that $F _0\geq F \geq P$.
\end{lem}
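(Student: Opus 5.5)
The plan is to dualize the proof of Lemma~\ref{lemma:P:subnorm}~(2), working with closed complements instead of open sets, since sublinear previsions correspond to proper open subsets of $\C \eqdef \Lform X$ with convex complement. Let $\mathcal D_0 \eqdef F_0^{-1} ([0,1])$ and $\mathcal D_P \eqdef P^{-1} ([0,1])$. These are closed, and $\mathcal D_0$ is convex because $F_0$ is sublinear. Since $F_0 \geq P$, we have $\mathcal D_0 \subseteq \mathcal D_P$. Since $P$ is normalized, Lemma~\ref{lemma:P:supnorm} gives ${(\shd \one \alpha)}^{-1} (\mathcal D_P) \subseteq \mathcal D_P$ for every $\alpha \in {]0,1]}$.

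\textbf{Construction.} Let $\mathcal S \eqdef \shadow {\mathcal D_0} {\one}$, and put $\mathcal D \eqdef cl (\mathcal S)$ and $\mathcal W \eqdef \C \diff \mathcal D$.
\begin{itemize}
\item By Lemma~\ref{lemma:P:shd}~(2), $\mathcal S$ is convex. Its closure $\mathcal D$ is then convex by \cite[Lemma 4.10]{Keimel:topcones2}.
\item We have $\mathcal D_0 = {(\shd \one 1)}^{-1}(\mathcal D_0) \subseteq \mathcal S$. Also, each ${(\shd \one \beta)}^{-1} (\mathcal D_0)$ is included in ${(\shd \one \beta)}^{-1} (\mathcal D_P) \subseteq \mathcal D_P$. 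Hence $\mathcal D_0 \subseteq \mathcal D \subseteq \mathcal D_P$, using that $\mathcal D_P$ is closed for the second inclusion.
\item $\mathcal W$ is a proper open set with convex complement: $0 \in \mathcal D_0 \subseteq \mathcal D$, so $\mathcal W \neq \C$; and $\mathcal D$ is non-empty.
\end{itemize}
Therefore $F \eqdef M^{\mathcal W}$ is a sublinear prevision with $F^{-1}(]1,\infty]) = \mathcal W$. The inclusions of complements $P^{-1}(]1,\infty]) \subseteq \mathcal W \subseteq F_0^{-1}(]1,\infty])$ give $P \leq F \leq F_0$ through the order isomorphism $\Open^* \C \cong \Lhom \C$.

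\textbf{Normalization.} Subnormalization of $F$ is cheap. By sublinearity, $F(\one+h) \leq F(\one) + F(h)$, and $F(\one) \leq F_0(\one) \leq 1$, because $F_0$ is subnormalized. For the reverse inequality I will use Lemma~\ref{lemma:P:supnorm}, applied to $F$ with $\mathcal D = F^{-1}([0,1])$. First, $F(\one) \geq P(\one) = 1$. It remains to prove ${(\shd \one \alpha)}^{-1} (\mathcal D) \subseteq \mathcal D$ for every $\alpha \in {]0,1]}$. Together these make $F$ normalized.

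\textbf{Main obstacle.} The hard step is this invariance of $\mathcal D$ under the maps $\shd \one \alpha$. For $\mathcal S$ itself it is exactly Lemma~\ref{lemma:P:shd}~(4). The difficulty is that taking preimages under the continuous map $\shd \one \alpha$ does not obviously commute with closure, since $\shd \one \alpha$ is not open. My first attempt will exploit the structure of Scott-closed subsets of $\Lform X$. The set $\mathcal D_0$ is downward closed, so $\mathcal S$ is downward closed as well, because each $\shd \one \beta$ is monotonic. The Scott closure of $\mathcal S$ can then be reached by transfinitely closing under suprema of directed families. I will then show by induction that ${(\shd \one \alpha)}^{-1}$ preserves each stage, using that $\shd \one \alpha$ is Scott-continuous and that the stages are downward closed.

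If that fails, the fallback is to replace $\mathcal D$ by the largest closed set $\mathcal D' \subseteq \mathcal D_P$ containing $\mathcal S$ that is invariant under all ${(\shd \one \alpha)}^{-1}$. For instance, I would take the intersection of all closed convex invariant sets between $\mathcal S$ and $\mathcal D_P$, with $\mathcal D_P$ as a candidate. I would then check that convexity and invariance survive intersections, which is immediate.
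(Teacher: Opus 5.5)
Your steps up to the one you flag are correct: $F \eqdef M^{\Lform X \diff cl (\mathcal S)}$ is sublinear and subnormalized, and it satisfies $P \leq F \leq F_0$ and $F (\one) \geq 1$. The flagged step, however, is false in general, so neither your transfinite induction nor any other argument can prove it. Here is a counterexample. Take $X \eqdef \nat$ with the discrete topology, $P (h) \eqdef h (0)$, and
\[
F_0 (h) \eqdef \max \Bigl(h (0), \sup_{m \geq 2} \bigl(\tfrac 1 2 h (1) + \tfrac 1 2 (1 - \tfrac 1 m) h (m)\bigr)\Bigr).
\]
This is a subnormalized sublinear prevision with $F_0 (\one) = 1$ and $F_0 \geq P$.

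Unfolding the definitions, $h \in \mathcal S$ if and only if $h (0) \leq 1$ and there is a $c \geq 0$ such that $\frac 1 2 h (1) + \frac 1 2 (1-\frac 1 m) h (m) \leq 1 + c/m$ for every $m \geq 2$. Every such $h$ satisfies $h (1) \leq 2$. Conversely, every $h$ with $h (0) \leq 1$ and $h (1) \leq 2$ is the directed supremum of the functions $\min (h, n) \cdot \chi_K$ ($n \in \nat$, $K$ finite), and these lie in $\mathcal S$. Hence $cl (\mathcal S) = \{h \mid h (0) \leq 1, h (1) \leq 2\}$, and your $F$ is $h \mapsto \max (h (0), \frac 1 2 h (1))$. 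It is not normalized: $F (\one + 2 \chi_{\{1\}}) = \frac 3 2 \neq 2 = 1 + F (2 \chi_{\{1\}})$.

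Concretely, $\shd \one {1/2} (3 \chi_{\{1\}}) = \frac 1 2 \one + \frac 3 2 \chi_{\{1\}}$ lies in $cl (\mathcal S)$, but $3 \chi_{\{1\}}$ does not. This is exactly where your induction step fails. The elements of $\mathcal S$ lying above $\frac 1 2 \one$, i.e., in the image of $\shd \one {1/2}$, all take value at most $\frac 3 2$ at $1$. So no directed family in $\mathcal S$ with supremum $\frac 1 2 \one + \frac 3 2 \chi_{\{1\}}$ can be pulled back along $\shd \one {1/2}$.

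Your fallback does not repair this either. Since $P$ is not sublinear, $\mathcal D_P = P^{-1} ([0,1])$ is in general not convex, so it is not a member of the family you intersect. Nothing then guarantees that any closed convex invariant set lies between $\mathcal S$ and $\mathcal D_P$. In fact, the existence of such a set is precisely the content of the lemma: if $F$ is as required, $F^{-1} ([0,1])$ is one. An intersection would also give the smallest such set, not the largest.

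The missing idea is that convexity never has to be created, only preserved. Closure and the cast shadow $\shadow {\mathcal E} {\one}$ both map convex sets to convex sets (Lemma~\ref{lemma:P:shd}~(2)). Both also preserve inclusion in $\mathcal D_P$, because $\mathcal D_P$ is closed and invariant. The paper exploits this through Zorn's lemma:
\begin{enumerate}
\item A maximal convex $\mathcal C$ with $\mathcal D_0 \subseteq \mathcal C \subseteq \mathcal D_P$ exists, because directed unions of convex sets are convex.
\item Maximality forces $\mathcal C = cl (\mathcal C) = \shadow {\mathcal C} {\one}$, which yields invariance.
\item Lemma~\ref{lemma:P:supnorm} then applies to $M^{\Lform X \diff \mathcal C}$.
\end{enumerate}
A Zorn-free variant closer to your plan is to iterate $\mathcal E \mapsto \shadow {cl (\mathcal E)} {\one}$ transfinitely from $\mathcal D_0$, taking unions at limit stages, until it stabilizes. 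In the example above it stabilizes at $\{h \mid h (0) \leq 1\}$, which yields $F = P$.
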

\begin{proof}
  Let $\mathcal D \eqdef P^{-1} ([0, 1])$.  Since $P$ is lower
  semicontinuous, $\mathcal D$ is a closed set.  Let
  $\mathcal C_0 \eqdef F_0^{-1} ([0, 1])$.  Using the order
  isomorphism $\Open^* \C \cong \Lhom \C$ with $\C \eqdef \Lform X$,
  $\mathcal C_0$ is a non-empty convex closed subset of $\Lform X$,
  and since $F_0 \geq P$, $\mathcal C_0 \subseteq \mathcal D$.  Let
  $\mathcal F$ be the family of convex subsets of $\Lform X$ included
  in $\mathcal D$, ordered by inclusion.  Since directed unions of
  convex sets are convex, $\mathcal F$ is a dcpo, hence an inductive
  poset.  By Zorn's Lemma, $\mathcal F$ contains a maximal element
  $\mathcal C \supseteq C_0$.  Since $\mathcal D$ is closed,
  $cl (\mathcal C)$ is also included in $\mathcal D$, and is convex,
  so it is in $\mathcal F$.  Since $\mathcal C$ is maximal in
  $\mathcal F$, it follows that $\mathcal C = cl (\mathcal C)$,
  therefore $\mathcal C$ is closed.  Since $\mathcal C_0$ is
  non-empty, so is $\mathcal C$.  Hence $\mathcal C$ is non-empty,
  convex and closed.  Using the order isomorphism
  $\Open^* \C \cong \Lhom \C$,
  the upper Minkowski functional $F \eqdef M^{\C \diff \mathcal C}$ is
  a sublinear prevision.  We also have
  $\mathcal C_0 \subseteq \mathcal C \subseteq \mathcal D$, whence
  $F_0 \geq F \geq P$.

  The set $\shadow {\mathcal C} {\one}$ is convex by
  Lemma~\ref{lemma:P:shd}, item~2.  We claim that it is included in
  $\mathcal D$.  For that, we need to show that
  ${(\shd \one \alpha)}^{-1} (\mathcal C) \subseteq \mathcal D$ for
  every $\alpha \in {]0, 1]}$.  But
  ${(\shd \one \alpha)}^{-1} (\mathcal C) \subseteq {(\shd \one
    \alpha)}^{-1} (\mathcal D)$ (since
  $\mathcal C \subseteq \mathcal D$) $\subseteq \mathcal D$, by using
  the $(1) \limp (2)$ implication of Lemma~\ref{lemma:P:supnorm},
  which applies since $P$ is normalized.

  Hence $\shadow {\mathcal C} {\one} \in \mathcal F$.
  $\shadow {\mathcal C} {\one}$ contains
  ${(\shd \one 1)}^{-1} (\mathcal C) = \mathcal C$, so
  $\mathcal C = \shadow {\mathcal C} {\one}$, since $\mathcal C$ is
  maximal in $\mathcal F$.  We have $\mathcal C = F^{-1} ([0, 1])$ by
  definition of Minkowski functionals,
  and $F (\one) \geq F_0 (\one) \geq 1$ so the $(2) \limp (1)$
  implication of Lemma~\ref{lemma:P:supnorm}, applied to $F$ instead
  of $P$, tells us that $F (\one+h) \geq 1+F (h)$ for every
  $h \in \Lform X$.  Since $F$ is subnormalized, it satisfies the
  reverse inequality, so $F$ is normalized.
\end{proof}

\section{Double hyperspaces and orthogonality relations}
\label{sec:double-hypersp-orth}

It will be profitable to revisit the de Brecht-Kawai construction of
the double hyperspace as an instance of Birkhoff's theory of
polarities, a practical way of forming Galois connections between
powersets \cite{Birkhoff}.  A \emph{Galois connection}
$E \colon \_^\perp \dashv {^\perp\_} \colon F$ between posets $E$ and
$F$ is a pair of antitonic maps $\_^\perp \colon E \to F$ and
${^\perp\_} \colon F \to E$ such that for all $A \in E$ and $B \in F$,
$A \subseteq {^\perp B}$ if and only if $B \subseteq A^\perp$.
Then $\_^\perp$ and $^\perp\_$ restrict to an order isomorphism
between $\Img {{^\perp \_}}$ and $\Img {\_^\perp}$, one being ordered
by inclusion and the other one by reverse inclusion.  Additionally,
$\Img {{^\perp \_}}$ is exactly the collection of elements $A$ such
that $A = {^\perp {(A^\perp)}}$, and $\Img {\_^\perp}$ is the
collection of elements $B$ such that $B = {(^\perp B)}^\perp$.

A polarity, or an orthogonality relation, is any binary relation
${\perp} \subseteq \QE \times \CF$ between elements of two sets $\QE$
and $\CF$.  We write $x \perp y$ for $(x, y) \in {\perp}$.  For every
subset $A$ of $\QE$, we let
$A^\perp \eqdef \{y \in \CF \mid \forall x \in A, x \perp y\}$, and
for every subset $B$ of $\CF$, we let
$^\perp B \eqdef \{x \in \QE \mid \forall y \in B, x \perp y\}$.  Then
$\_^\perp$ and $^\perp \_$ form a Galois connection between $\pow \QE$
and $\pow \CF$.  Notably, $A \subseteq {^\perp B}$ and
$B \subseteq A^\perp$ are both equivalent to: for every $x \in A$, for
every $y \in B$, $x \perp y$.

The de Brecht-Kawai construction relating $\Hoarez {\SVz X}$ and
$\Smythz {\HVz X}$ is an instance of this framework, as we show in
Lemma~\ref{lemma:perp:QH=HQ} below.  We need a technical lemma first;
a similar lemma, restricted to $\Hoare X$ and finite intersections,
appears as Lemma~5.7 in \cite{GLK-mscs10}.
We extend the notation $\Diamond U$ to all subsets $U$ of a space $X$,
not just the open sets, with the same definition
$\{C \in \Hoarez X \mid C \cap U \neq \emptyset\}$.
\begin{lem}
  \label{lemma:H:diaQ}
  Let $X$ be a topological space.  For every family
  ${(Q_i)}_{i \in I}$ of compact saturated subsets,
  $\bigcap_{i \in I} \Diamond {Q_i}$ is a compact saturated subset of
  $\HVz X$.
\end{lem}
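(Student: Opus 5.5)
Saturation is the easy part. The specialization ordering of $\HVz X$ is inclusion. If $C \cap Q_i \neq \emptyset$ and $C \subseteq C'$, then $C' \cap Q_i \neq \emptyset$, so each $\Diamond Q_i$ is upwards closed. An intersection of upwards closed sets is upwards closed, hence $\bigcap_{i \in I} \Diamond Q_i$ is saturated.

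For compactness I will use Alexander's subbase lemma with the subbase $\Diamond U$, $U \in \Open X$. Let $\mathcal K \eqdef \bigcap_{i \in I} \Diamond Q_i$, and assume $\mathcal K \subseteq \bigcup_{j \in J} \Diamond U_j$. Suppose, for contradiction, that no finite subfamily covers $\mathcal K$. The natural witness is a closed set missing finitely many $U_j$. For a finite $J_0 \subseteq J$, set $C_{J_0} \eqdef X \diff \bigcup_{j \in J_0} U_j$. This is closed, and it is the largest closed set missing every $U_j$ with $j \in J_0$. Since $\Diamond Q_i$ is upwards closed, some element of $\mathcal K$ avoids $\bigcup_{j \in J_0} \Diamond U_j$ exactly when $C_{J_0} \in \mathcal K$, that is, when $C_{J_0} \cap Q_i \neq \emptyset$ for every $i \in I$. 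So the assumption gives: for every finite $J_0$ and every $i \in I$, $Q_i \not\subseteq \bigcup_{j \in J_0} U_j$.

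Now fix $i$. By compactness of $Q_i$, if $Q_i \subseteq \bigcup_{j \in J} U_j =: U$, a finite subfamily would already cover $Q_i$. Hence $Q_i \not\subseteq U$ for every $i$, and the closed set $C \eqdef X \diff U$ meets every $Q_i$, so $C \in \mathcal K$. But $C$ meets no $U_j$, so $C \notin \bigcup_j \Diamond U_j$, contradicting the cover. Therefore some finite subfamily covers $\mathcal K$, and $\mathcal K$ is compact.

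The only delicate point is an empty index set $I$. Then $\mathcal K = \HVz X$, and compactness follows from the same argument: if $\bigcup_j \Diamond U_j$ covers $\HVz X$, it contains $X$ (when $X$ is non-empty), so some single $U_j$ is non-empty and $\Diamond U_j$ contains every non-empty closed set. It must also contain the empty closed set, which lies in no $\Diamond U$. Hence no cover exists, except when $\mathcal K$ avoids $\emptyset$; the argument above handles this uniformly, since $C = \emptyset$ is allowed. I expect the main step to be checking that the finite-subcover reduction is correct: $\mathcal K \subseteq \bigcup_{j \in J_0} \Diamond U_j$ iff $C_{J_0} \notin \mathcal K$. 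This relies on $\mathcal K$ being upwards closed and $C_{J_0}$ being the maximal closed set avoiding those $U_j$.
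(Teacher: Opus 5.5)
Your proof is correct and is essentially the paper's. Both use Alexander's subbase lemma on the subbase of sets $\Diamond U$ and the closed set $X \diff \bigcup_{j \in J} U_j$. That set cannot lie in $\bigcap_{i \in I} \Diamond Q_i$, which forces some $Q_i \subseteq \bigcup_{j \in J} U_j$, and compactness of that $Q_i$ finishes. You merely phrase it by contradiction, and your $C_{J_0}$ step is the contrapositive of the paper's final observation that every member of the intersection meets $Q_i$, hence some $U_j$ with $j \in J'$.

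Your main argument already handles $I = \emptyset$ uniformly, so your last paragraph is unnecessary. Its aside that a non-empty $U_j$ makes $\Diamond U_j$ contain every non-empty closed set is false, though nothing depends on it.
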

\begin{proof}
  We first deal with the case of $\HVz X$.  The set
  $\bigcap_{i=1}^n \Diamond {Q_i}$ is clearly saturated.  We use
  Alexander's subbase lemma in order to show compactness.  Let
  ${(\Diamond U_j)}_{j \in J}$ be a cover of
  $\bigcap_{i \in I} \Diamond {Q_i}$ by subbasic open sets.  Let $C_0$
  be the complement of $\bigcup_{j \in J} U_j$.  $C_0$ is a closed set
  which is not in $\bigcup_{j \in J} \Diamond {U_j}$, so $C_0$ is not
  in $\bigcap_{i \in I} \Diamond {Q_i}$ either.  Hence $C_0$ does not
  intersect $Q_i$ for some $i \in I$.  This $Q_i$ must then be
  included in the complement of $C_0$, which is
  $\bigcup_{j \in J} U_j$.  Since $Q_i$ is compact, there is a finite
  subset $J'$ of $J$ such that $Q_i \subseteq \bigcup_{j \in J'} U_j$.
  Every element $C$ of $\bigcap_{i \in I} \Diamond {Q_i}$ intersects
  $Q_i$, hence $U_j$ for some $j \in J'$.  Therefore
  $\bigcap_{i \in I} \Diamond {Q_i} \subseteq \bigcup_{j \in J'}
  \Diamond {U_j}$.
%
\end{proof}

\begin{lem}
  \label{lemma:perp:QH=HQ}
  Let $X$ be a topological space, and let $\QE \eqdef \Smythz X$,
  $\CF \eqdef \HVz X$, and $Q \perp C$ if and only if
  $Q \cap C \neq \emptyset$.  Then
  $\Img {^\perp \_} \subseteq \Hoarez {\SVz X}$ and
  $\Img {\_^\perp} \subseteq \Smythz {\HVz X}$.  If $X$ is consonant,
  then $\Img {^\perp \_} = \Hoarez {\SVz X}$ and
  $\Img {\_^\perp} = \Smythz {\HVz X}$.
\end{lem}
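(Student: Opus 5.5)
The plan is to compute the two Galois maps explicitly and compare them with the de Brecht--Kawai maps $\sigma_X$ and $\tau_X$. Unfolding the definitions, for $\mathcal B \subseteq \HVz X$ we have ${^\perp \mathcal B} = \{Q \in \Smythz X \mid \forall C \in \mathcal B, Q \cap C \neq \emptyset\}$. For $\mathcal A \subseteq \Smythz X$ we have $\mathcal A^\perp = \{C \in \Hoarez X \mid \forall Q \in \mathcal A, C \cap Q \neq \emptyset\} = \bigcap_{Q \in \mathcal A} \Diamond Q$, using the extended notation $\Diamond Q$ introduced before Lemma~\ref{lemma:H:diaQ}. Comparing with (\ref{eq:sigma}) and (\ref{eq:tau}), $\mathcal A^\perp = \sigma_X (\mathcal A)$ and ${^\perp \mathcal B} = \tau_X (\mathcal B)$ whenever $\mathcal A$, $\mathcal B$ lie in the respective domains of those maps; the formulas themselves make sense for arbitrary subsets.

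For the first inclusion, I will show that every ${^\perp \mathcal B}$ is closed in $\SVz X$. Its complement is the set of $Q \in \Smythz X$ with $Q \cap C = \emptyset$ for some $C \in \mathcal B$, that is, $Q \subseteq X \diff C$ for some $C \in \mathcal B$. This complement equals $\bigcup_{C \in \mathcal B} \Box (X \diff C)$. Each $X \diff C$ is open, so each $\Box (X \diff C)$ is open in the upper Vietoris topology, and so is the union. Hence ${^\perp \mathcal B} \in \Hoarez {\SVz X}$. The empty set is permitted there, so no non-emptiness check is needed.

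For the second inclusion, $\mathcal A^\perp = \bigcap_{Q \in \mathcal A} \Diamond Q$ is an intersection of sets $\Diamond Q$ indexed by compact saturated sets $Q$. Lemma~\ref{lemma:H:diaQ} then says directly that it is compact saturated in $\HVz X$. This also covers the case $\mathcal A = \emptyset$, where the intersection is all of $\HVz X$. Thus $\mathcal A^\perp \in \Smythz {\HVz X}$.

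For the consonant case, I will use \cite[Theorem~6.13]{dBK:comm}: $\sigma_X$ and $\tau_X$ are mutually inverse homeomorphisms. Take $\mathcal C \in \Hoarez {\SVz X}$. Then ${^\perp (\mathcal C^\perp)} = \tau_X (\sigma_X (\mathcal C)) = \mathcal C$, so $\mathcal C \in \Img {^\perp \_}$. Symmetrically, for $\mathcal Q \in \Smythz {\HVz X}$, $({^\perp \mathcal Q})^\perp = \sigma_X (\tau_X (\mathcal Q)) = \mathcal Q$, so $\mathcal Q \in \Img {\_^\perp}$. Combined with the inclusions already proved, this gives both equalities.

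The only delicate point is checking that the identifications with $\sigma_X$ and $\tau_X$ are literal: the domains must match, including the empty sets in $\Hoarez$ and $\Smythz$, and the intersection condition in the definitions must coincide with $\perp$. I expect this to be a routine inspection of definitions rather than a real obstacle.
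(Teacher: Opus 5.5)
Your proof is correct and follows essentially the same route as the paper's. You show ${^\perp \mathcal B}$ is closed via its complement $\bigcup_{C \in \mathcal B} \Box (X \diff C)$, get $\mathcal A^\perp = \bigcap_{Q \in \mathcal A} \Diamond Q$ compact saturated from Lemma~\ref{lemma:H:diaQ}, and in the consonant case identify the Galois maps with $\tau_X$, $\sigma_X$ and use that they are mutually inverse. The only difference is that your union argument covers empty $\mathcal B$ automatically, where the paper splits off that case.
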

\begin{proof}
  We first show that for every subset $B$ of $\CF$, $^\perp B$ is
  closed in $\SVz X$.  If $B$ is empty, then $^\perp B = \QE$, which
  is closed in $\SVz X$.  Let $B$ be non-empty.  Then
  ${^\perp B} = \bigcap_{C \in B} \{Q \in \QE \mid Q \cap C \neq
  \emptyset\}$.  This is the complement of
  $\bigcup_{C \in B} \{Q \in \QE \mid Q \cap C = \emptyset\} =
  \bigcup_{C \in B} \Box {(X \diff C)}$, hence it is open.  Therefore
  $\Img {^\perp \_} \subseteq \Hoarez {\SVz X}$.

  Second, we show that for every subset $A$ of $\QE$, $A^\perp$ is
  compact saturated in $\CF$.  We have
  $A^\perp = \bigcap_{Q \in A} \{C \in \CF \mid Q \cap C \neq
  \emptyset\} = \bigcap_{Q \in A} \Diamond Q$, and this is compact
  saturated by Lemma~\ref{lemma:H:diaQ}.  Hence
  $\Img {\_^\perp} \subseteq \Smythz {\HVz X}$.

  The restriction of ${^\perp \_}$ to
  $\SVz {\HVz X} \subseteq \pow {\CF}$ is the map $\tau_X$ of
  (\ref{eq:tau}), and the restriction of $\_^\perp$ to $\HVz {\SVz X}$
  is the map $\sigma_X$ of (\ref{eq:sigma}).  When $X$ is consonant,
  $\sigma_X$ is bijective, with inverse $\tau_X$, so we can write any
  element $\mathcal C$ of $\Hoarez {\SVz X}$ as
  $\tau_X (\sigma_X (\mathcal C)) = {({^\perp \mathcal C})}^\perp$,
  hence $\mathcal C \in \Img {\_^\perp}$.  It follows that
  $\Img {^\perp \_} = \Hoarez {\SVz X}$.  We show that
  $\Img {\_^\perp} = \Smythz {\HVz X}$ in a similar fashion.
\end{proof}

We now consider the case where $\QE = \SVc {\Pred^\rast_\lin X}$ and
$\CF = \HVc {\Pred^\rast_\lin X}$, with $Q \perp C$ defined by
$Q \cap C \neq \emptyset$.  By the representation theorem for
superlinear previsions, we can always write $Q$ as
$s^\rast_\super (F^-)$ for some unique $F^- \in \Pred^\rast_\super X$.
When $X$ is $\AN_\rast$-friendly, we can write $C$ as
$s^\rast_\sub (F^+)$ for some unique $F^+ \in \Pred^\rast_\sub X$, by
the representation theorem for sublinear previsions.
\begin{lem}
  \label{lemma:s:perp}
  Let $\rast$ be nothing, ``$\leq 1$'', or ``$1$''.  For every
  topological space $X$, for every $F^- \in \Pred^\rast_\super X$, for
  every $F^+ \in \Pred^\rast_\sub X$,
  $s^\rast_\super (F^-) \cap s^\rast_\sub (F^+) \neq \emptyset$ if and
  only if $F^- \leq F^+$.
\end{lem}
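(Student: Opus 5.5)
The forward direction is immediate from the definitions. If $\Lambda \in s^\rast_\super (F^-) \cap s^\rast_\sub (F^+)$, then $\Lambda \in \Pred^\rast_\lin X$ satisfies $F^- \leq \Lambda$ and $\Lambda \leq F^+$ pointwise, so $F^- \leq F^+$.

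For the converse, assume $F^- \leq F^+$. I will apply Keimel's sandwich theorem on the semitopological cone $\C \eqdef \Lform X$, taking $q \eqdef F^-$ and $p \eqdef F^+$. Here $F^-$ is superlinear and lower semicontinuous, since previsions are exactly the positively homogeneous lower semicontinuous maps on $\Lform X$. Likewise $F^+$ is sublinear. The theorem yields a lower semicontinuous linear map $\Lambda \colon \Lform X \to \creal$ with $F^- \leq \Lambda \leq F^+$. Linear maps are positively homogeneous, and lower semicontinuity means Scott-continuity for the Scott topology on $\Lform X$. So $\Lambda \in \Pred_\lin X$, and it lies in both $s_\super (F^-)$ and $s_\sub (F^+)$. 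This settles the case where $\rast$ is nothing.

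When $\rast$ is ``$\leq 1$'', I must also check that $\Lambda$ is subnormalized. For a linear prevision this amounts to $\Lambda (\one) \leq 1$, and that follows from $\Lambda (\one) \leq F^+ (\one) \leq 1$, because $F^+$ is subnormalized.

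The case where $\rast$ is ``$1$'' is the one that needs care. For linear previsions, normalization is $\Lambda (\one) = 1$. The upper bound $\Lambda (\one) \leq F^+ (\one) = 1$ is as before. The lower bound comes from $\Lambda (\one) \geq F^- (\one) = 1$, since $F^-$ is normalized. Hence $\Lambda (\one) = 1$ and $\Lambda \in \Pred^1_\lin X$. So the sandwich $F^- \leq \Lambda \leq F^+$, together with the normalization of both bounds, pins down the required value at $\one$.

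The main thing I expect to need checking is that the hypotheses of the sandwich theorem hold exactly as stated. It requires only $q$ to be lower semicontinuous, while $p$ is merely sublinear, and both conditions are met here. I should also confirm that the produced $\Lambda$ takes values in $\creal$ and satisfies $\Lambda (0) = 0$; this follows from $\Lambda \leq F^+$ and $F^+ (0) = 0$. Beyond these checks, no cast-shadow machinery from Section~\ref{sec:cast-shad-constr} should be required, because for linear previsions subnormalization and normalization reduce to conditions on $\Lambda (\one)$.
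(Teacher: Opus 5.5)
Your proof is correct and is essentially the paper's own argument. The forward direction is immediate. The converse applies Keimel's sandwich theorem on $\Lform X$ with $q \eqdef F^-$ and $p \eqdef F^+$. Since $\Lambda$ is linear, it then suffices to check $\Lambda(\one) \leq F^+(\one) \leq 1$, together with $\Lambda(\one) \geq F^-(\one) = 1$ in the normalized case.
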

\begin{proof}
  $s^\rast_\super (F^-) \cap s^\rast_\sub (F^+) \neq \emptyset$ if and
  only if there is a $\Lambda \in \Pred^\rast_\lin X$ such that
  $F^- \leq \Lambda \leq F^+$.  This clearly implies $F^- \leq F^+$.
  Conversely, if $F^- \leq F^+$, there is a linear prevision $\Lambda$
  on $X$ such that $F^- \leq \Lambda \leq F^+$ by Keimel's sandwich
  theorem.  When $\rast$ is ``$\leq 1$'', we observe that $\Lambda$ is
  subnormalized: we have $\Lambda (\one) \leq F^+ (\one) \leq 1$,
  since $F^+$ is itself subnormalized, and then for every
  $h \in \Lform X$,
  $\Lambda (\one+h) = \Lambda (\one) + \Lambda (h) \leq 1 + \Lambda
  (h)$.  When $\rast$ is ``$1$'', $\Lambda$ is normalized: since both
  $F^-$ and $F^+$ are normalized,
  $1 = F^- (\one) \leq \Lambda (\one) \leq F^+ (\one)=1$, and
  therefore
  $\Lambda (\one+h) = \Lambda (\one) + \Lambda (h) = 1 + \Lambda (h)$.
\end{proof}

Hence we will instead focus on the case
$\QE \eqdef \Pred^\rast_\super X$, $\CF \eqdef \Pred^\rast_\sub X$,
defining $F^- \perp F^+$ by $F^- \leq F^+$.  This is equivalent to the
case mentioned before Lemma~\ref{lemma:s:perp} when $X$ is
$\AN_\rast$-friendly.  With this new definition, $\Img {\_^\perp}$
will be a collection of non-empty convex compact saturated subsets of
$\CF$ that we will write as $\QPred^\rast_\sub X$, which we will
explore in Section~\ref{sec:PX=QPAPX}, and $\Img {^\perp \_}$ will be
a collection of non-empty convex closed subsets of $\QE$ that we will
write as $\CPred^\rast_\super X$, and which we will explore in
Section~\ref{sec:PX=CPDPX}.  The fact that those are exactly
$\Img {\_^\perp}$ and $\Img {^\perp \_}$ will be obtained at the very
end, under a few assumptions, in Theorem~\ref{thm:HppQ}.

\section{The isomorphism $\Pred^\rast X \cong \QPred^\rast_\sub X$}
\label{sec:PX=QPAPX}

For the purposes of disambiguation, we will write $[h > r]$ or
$[h > r]^\rast$ for the set $\{P \in \Pred^\rast X \mid P (h) > r\}$,
where $h \in \Lform X$ and $r \in \Rp$, and we will write
$[h > r]^\rast_\sub$ for the corresponding subbasic open subset of
$\Pred^\rast_\sub X$, namely $[h > r] \cap \Pred^\rast_\sub X$, and
similarly with $[h > r]^\rast_\super$ or $[h > r]^\rast_\lin$.  The
subsets $[h > 1]^\rast$, $[h > 1]^\rast_\super$, $[h > 1]^\rast_\sub$
(that is, taking $r \eqdef 1$) already form subbases of the
appropriate spaces of previsions, because
$[h > r]^\rast = [(1/r) \cdot h > 1]^\rast$ when $r > 0$, and
$[h > 0]^\rast = \dcup_{r > 0} [h > r]^\rast$.

\begin{lemdef}
  \label{lemdef:minP}
  Let $\rast$ be nothing, ``$\leq 1$'' or ``$1$''.  For every space
  $X$, there is a continuous map
  $\minP_X \colon \SV {\Pred^\rast_\sub X} \to \Pred^\rast X$ defined by
  $\minP_X (\mathcal Q) (h) \eqdef \min_{F \in \mathcal Q} F (h)$ for
  every $\mathcal Q \in \SV {\Pred^\rast_\sub X}$ and every
  $h \in \Lform X$.  For every subbasic open set $[h > r]^\rast$ of
  $\Pred^\rast X$,
  $\minP_X^{-1} ([h > r]^\rast) = \Box {[h > r]^\rast_\sub}$.
\end{lemdef}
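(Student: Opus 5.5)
The plan is to check, in order, that the minimum is well defined, that $\minP_X (\mathcal Q)$ lies in $\Pred^\rast X$, and then to compute the inverse image of a subbasic open set, from which continuity follows.

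\textbf{Well-definedness.} Fix a non-empty compact saturated subset $\mathcal Q$ of $\Pred^\rast_\sub X$ and $h \in \Lform X$. The evaluation map $F \mapsto F (h)$ is continuous from $\Pred^\rast_\sub X$ to $\creal$ with its Scott topology. Indeed, the inverse image of $]r, \infty]$ is the subbasic open set $[h > r]^\rast_\sub$, and the inverse image of $[0, \infty]$ is the whole space. So evaluation at $h$ is lower semicontinuous, and a lower semicontinuous map on a non-empty compact set attains its infimum. This is the same argument as in Lemma~\ref{lemma:game}, with $\Pred^\rast_\sub X$ in the role of $X$. As in that lemma, we also get the key equivalence: $\min_{F \in \mathcal Q} F (h) > r$ if and only if $\mathcal Q \subseteq [h > r]^\rast_\sub$.

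\textbf{$\minP_X (\mathcal Q)$ is a prevision.} Positive homogeneity holds because each $F \in \mathcal Q$ satisfies $F (ah) = a F (h)$, and $\min_F a F (h) = a \min_F F (h)$ for $a \in \Rp$; the case $a = 0$ uses the convention $0.\infty = 0$. Monotonicity is clear. For Scott-continuity I will rerun the compactness argument of Lemma~\ref{lemma:game}. Let $(h_i)_{i \in I}$ be directed with supremum $h$, and let $r < \min_{F \in \mathcal Q} F (h)$. Then $\mathcal Q \subseteq [h > r]^\rast_\sub$. Since each $F$ is Scott-continuous, $[h > r]^\rast_\sub = \dcup_{i} [h_i > r]^\rast_\sub$, a directed union of open sets. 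Compactness of $\mathcal Q$ then gives an $i$ with $\mathcal Q \subseteq [h_i > r]^\rast_\sub$, that is, $r < \min_{F \in \mathcal Q} F (h_i)$.

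\textbf{Normalization conditions.} If $\rast$ is ``$\leq 1$'', each $F$ satisfies $F (\one + h) \leq 1 + F (h)$. Choosing $F_0 \in \mathcal Q$ that attains $\min_F F (h)$, we get $\min_F F (\one+h) \leq F_0 (\one+h) \leq 1 + F_0 (h) = 1 + \min_F F (h)$. If $\rast$ is ``$1$'', then $F (\one + h) = 1 + F (h)$ for every $F$, and the minimum commutes with adding the constant $1$. Non-emptiness of $\mathcal Q$ is used here and also in the well-definedness step.

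\textbf{Continuity.} The key equivalence gives $\minP_X^{-1} ([h > r]^\rast) = \{\mathcal Q \mid \mathcal Q \subseteq [h > r]^\rast_\sub\} = \Box {[h > r]^\rast_\sub}$. This set is open in $\SV {\Pred^\rast_\sub X}$, so $\minP_X$ is continuous.

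I expect no serious obstacle. The only delicate point is the Scott-continuity step, where compactness of $\mathcal Q$ must be combined with the directed-union description of $[h > r]^\rast_\sub$; this is exactly where the Smyth hyperspace, as opposed to arbitrary subsets, is needed.
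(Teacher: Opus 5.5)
Your proposal is correct and follows essentially the same route as the paper. Lower semicontinuity of evaluation justifies the minimum, compactness combined with the directed-union description $[h > r]^\rast_\sub = \dcup_i [h_i > r]^\rast_\sub$ gives Scott-continuity, and the equivalence $\minP_X (\mathcal Q) (h) > r \iff \mathcal Q \subseteq [h > r]^\rast_\sub$ gives $\minP_X^{-1} ([h > r]^\rast) = \Box {[h > r]^\rast_\sub}$. Your explicit choice of a minimizer $F_0$ in the subnormalization step is just a slightly more careful rendering of the paper's one-line inequality.
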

\begin{proof}
  The function $F \mapsto F (h)$ is lower semicontinuous, since the
  inverse image of $]r, \infty]$ is $[h > r]^\rast_\sub$.  Hence, for
  every compact saturated subset $\mathcal Q$ of $\Pred^\rast_\sub X$,
  the infimum $\inf_{F \in \mathcal Q} F (h)$ is reached, justifying
  the notation $\min_{F \in \mathcal Q} F (h)$.
  
  For every $\mathcal Q \in \SV {\Pred^\rast_\sub X}$,
  $\minP_X (\mathcal Q)$ is positively homogeneous.  In order to see
  that it is Scott-continuous, let ${(h_i)}_{i \in I}$ be a directed
  family with (pointwise) supremum $h$ in $\Lform X$. It is clear that
  $\minP_X (\mathcal Q)$ is monotonic, so
  $\dsup_{i \in I} \minP_X (\mathcal Q) (h_i) \leq \minP_X (\mathcal
  Q) (h)$.  Conversely, we claim that for every $r \in \Rp$ such that
  $r < \minP_X (\mathcal Q) (h)$, there is an $i \in I$ such that
  $r < \minP_X (\mathcal Q) (h_i)$.  The assumption
  $r < \minP_X (\mathcal Q) (h)$ rewrites as
  $\mathcal Q \subseteq [h > r]^\rast_\sub$, and it is easy to see
  that $[h > r]^\rast_\sub = \dcup_{i \in I} [h_i > r]^\rast_\sub$,
  using the Scott-continuity of previsions.  Since $\mathcal Q$ is
  compact (and using directedness),
  $\mathcal Q \subseteq [h_i > r]^\rast_\sub$ for some $i \in I$,
  namely $r < \minP_X (\mathcal Q) (h_i)$.

  Therefore $\minP_X (\mathcal Q)$ is a prevision.  If $\rast$ is
  ``$\leq 1"$, then for every $h \in \Lform X$,
  $\minP_X (\mathcal Q) (\one+h) = \min_{F \in \mathcal Q} F (\one+h)
  \leq \min_{F \in \mathcal Q} (1 + F (h)) = 1 + \minP_X (\mathcal Q)
  (h)$, so $\minP_X (\mathcal Q)$ is subnormalized; similarly, if
  $\rast$ is ``$1$'', $\minP_X (\mathcal Q)$ is normalized.

  For all $\mathcal Q \in \SV {\Pred^\rast_\sub X}$, $h \in \Lform X$,
  and $r \in \Rp$, $\minP_X (\mathcal Q) (h) > r$ if and only if for
  every $F \in \mathcal Q$, $F (h) > r$, and that is equivalent to
  $\mathcal Q \in \Box {[h > r]^\rast_\sub}$.
\end{proof}
The map $\minP_X$ is defined just like $r_\super$, but has a different
domain and codomain.

We will restrict $\minP_X$ to the following subspace
$\QPred^\rast_\sub X$.  This will turn out to be $\Img {\_^\perp}$,
namely the set of elements $\mathcal Q$ such that
$\mathcal Q = {^\perp {({\mathcal Q}^\perp)}}$, justifying the double
$\perp$ superscript.

\begin{definition}[$\QPred^\rast_\sub X$]
  \label{defn:QBox}
  Let $\rast$ be nothing, ``$\leq 1$'' or ``$1$''.  For every
  topological space $X$, for every $h \in \Lform X$, for every
  $r \in \Rp$, let
  $[h \geq r]^\rast_\sub \eqdef \{F \in \Pred^\rast_\sub X \mid F (h) \geq
  r\}$, and let $\QPred^\rast_\sub X$ be the subspace of
  $\SV {\Pred^\rast_\sub X}$ consisting of those intersections of the
  form $\bigcap_{i \in I} [h_i \geq 1]^\rast_\sub$ that are compact
  saturated and non-empty, where $I$ is an arbitrary set and where
  $h_i \in \Lform X$.
\end{definition}

\begin{lem}
  \label{lemma:QBox:Qcvx}
  Let $\rast$ be nothing, ``$\leq 1$'' or ``$1$''.  Let $X$ be a
  topological space.  Then every element $\QPred^\rast_\sub X$ is
  non-empty, convex and compact saturated.  $\QPred_\sub X$ is a
  subspace of $\SVc {\Pred_\sub X}$.
\end{lem}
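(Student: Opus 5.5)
Non-emptiness and compact saturatedness are part of Definition~\ref{defn:QBox}, so the only property left to check is convexity. Then I will deduce the statement about subspaces.

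For convexity, note that $\Pred^\rast_\sub X$ is a convex subset of the cone $\Pred_\sub X$ (Proposition~\ref{prop:PX:cone}). Indeed, if $F, F'$ are sublinear and $\alpha \in [0,1]$, then $F +_\alpha F'$ is again positively homogeneous, Scott-continuous and sublinear. Subnormalization and normalization are also preserved. For instance,
\[
(F +_\alpha F')(\one+h) \leq \alpha (1+F(h)) + (1-\alpha)(1+F'(h)) = 1 + (F +_\alpha F')(h),
\]
and the same computation with equalities handles the normalized case. Now take $\mathcal Q = \bigcap_{i \in I} [h_i \geq 1]^\rast_\sub$. Each set $[h_i \geq 1]^\rast_\sub$ is convex. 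If $F(h_i) \geq 1$ and $F'(h_i) \geq 1$, then $(F +_\alpha F')(h_i) = \alpha F(h_i) + (1-\alpha) F'(h_i) \geq 1$, because addition and scalar multiplication are pointwise. Arithmetic in $\creal$ with the convention $0.\infty=0$ causes no trouble here. An intersection of convex sets is convex, so $\mathcal Q$ is convex. The only point needing care is the edge case $I = \emptyset$: the intersection is then the whole of $\Pred^\rast_\sub X$, which is convex by the preceding remark.

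For the second claim, the previous paragraph shows that $\QPred_\sub X$ is a subset of $\SVc {\Pred_\sub X}$ as a set. By definition, $\QPred_\sub X$ carries the subspace topology from $\SV {\Pred_\sub X}$, and $\SVc {\Pred_\sub X}$ is itself a subspace of $\SV {\Pred_\sub X}$. Subspace topologies are transitive, so $\QPred_\sub X$ is a subspace of $\SVc {\Pred_\sub X}$.

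There is no real obstacle in this argument. The only step that needs to be stated explicitly is that convex combinations are taken in the cone $\Pred_\sub X$ and that $\Pred^\rast_\sub X$ is closed under them. This is what makes the notion of ``convex subset of $\Pred^\rast_\sub X$'' meaningful in the sense used for $\SVc \B$, with $\B$ a convex subspace of a semitopological cone. The same reasoning gives that $\QPred^\rast_\sub X$ is a subspace of $\SVc {\Pred^\rast_\sub X}$ for every $\rast$.
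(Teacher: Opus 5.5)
Your proof is correct and follows the paper's argument essentially step for step. You get convexity of each $[h_i \geq 1]^\rast_\sub$ from the pointwise inequality, pass convexity to intersections, and use transitivity of subspace topologies to place $\QPred_\sub X$ inside $\SVc {\Pred_\sub X}$. Your extra checks are details the paper leaves implicit: that $\Pred^\rast_\sub X$ is itself a convex subset of $\Pred_\sub X$, which is what makes $\SVc {\Pred^\rast_\sub X}$ meaningful, and the empty-intersection case $I=\emptyset$.
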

\begin{proof}
  Let $\mathcal Q \eqdef \bigcap_{i \in I} [h_i \geq 1]^\rast_\sub$ be
  in $\QPred^\rast_\sub X$.  $\mathcal Q$ is compact saturated and
  non-empty by definition.  Each set $[h_i \geq 1]^\rast_\sub$ is
  convex: for all $F, F' \in [h_i \geq 1]^\rast_\sub$ and
  $a \in [0, 1]$, $a F (h_i) + (1-a) F' (h_i) \geq a+(1-a)=1$.  Hence
  $\mathcal Q$ is convex, being an intersection of convex sets.
  Finally, the topology of $\QPred_\sub X$ is the subspace topology
  arising from $\SV {\Pred_\sub X}$, hence also from
  $\SVc {\Pred_\sub X}$.
\end{proof}

\begin{lemdef}
  \label{lemdef:nimP}
  Let $\rast$ be nothing, ``$\leq 1$'' or ``$1$''.  Let $X$ be a
  topological space.  There is a map
  $\nimP^\rast_X \colon \Pred^\rast X \to \SV {\Pred^\rast_\sub X}$
  defined by
  $\nimP^\rast_X (P) \eqdef \{F \in \Pred^\rast_\sub X \mid F \geq P\}$,
  and:
  \begin{enumerate}
  \item $\nimP^\rast_X$ takes its values in $\QPred^\rast_\sub X$;
  \item for every $\mathcal Q \in \SV {\Pred^\rast_\sub X}$,
    $\mathcal Q \subseteq \nimP^\rast_X (\minP_X (\mathcal Q))$, with
    equality if and only if $\mathcal Q \in \QPred^\rast_\sub X$;
  \item a set $\mathcal Q$ is in $\QPred^\rast_\sub X$ if and only if it
    is of the form $\nimP^\rast_X (P)$ for some $P \in \Pred^\rast X$.
  \end{enumerate}
\end{lemdef}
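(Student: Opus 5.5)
The plan is to first rewrite $\nimP^\rast_X (P)$ as an intersection of sets $[h \geq 1]^\rast_\sub$, then prove it is non-empty and compact. Items~(2) and~(3) will then follow formally.

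\emph{Step 1: the intersection form.} By positive homogeneity, $F \geq P$ holds if and only if $F (\frac 1 r \cdot h) \geq 1$ for every $h \in \Lform X$ and every $r \in \Rp \diff \{0\}$ with $r \leq P (h)$. Hence
\[
\nimP^\rast_X (P) = \bigcap \{[h \geq 1]^\rast_\sub \mid h \in \Lform X,\ P (h) \geq 1\}.
\]
Saturation is clear: the set is upward closed in the pointwise ordering, which is the specialization ordering by Proposition~\ref{prop:PX:sup}.

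\emph{Step 2: non-emptiness.} If $\rast$ is nothing, the map $\top$ sending $0$ to $0$ and every $h \neq 0$ to $\infty$ works. It is positively homogeneous and sublinear, and it is Scott-continuous because $\{h \mid h \neq 0\}$ is Scott-open. If $\rast$ is ``$\leq 1$'' or ``$1$'', I use $h \mapsto \sup_{x \in X} h (x)$ from Lemma~\ref{lemma:game}. It dominates $P$ because $P (h) \leq P ((\sup h) \cdot \one) \leq \sup h$. In the normalized case $X$ is non-empty, since otherwise $P (\one) = P (0) = 0$.

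\emph{Step 3: compactness, the main obstacle.} I apply Alexander's subbase lemma to a cover of $\nimP^\rast_X (P)$ by subbasic sets $[h_j > 1]^\rast_\sub$, $j \in J$, which suffices by the remark preceding Lemma~\ref{lemdef:minP}. I argue by contradiction: suppose no finite subfamily covers. Then for each finite $J' \subseteq J$ there is some $F \geq P$ in $\Pred^\rast_\sub X$ with $F (h_j) \leq 1$ for all $j \in J'$.

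Let $\mathcal D \eqdef P^{-1} ([0,1])$; it is closed. Every $F$ as above has $F^{-1} ([0,1])$ convex and included in $\mathcal D$, and containing every $h_j$, $j \in J'$. Hence the convex hull of each finite set $\{h_j \mid j \in J'\}$ lies in $\mathcal D$. The convex hull $K$ of all the $h_j$ is the directed union of these finite hulls, so $K \subseteq \mathcal D$, and then $\mathcal C \eqdef cl (K) \subseteq \mathcal D$. Via the isomorphism $\Open^* \C \cong \Lhom \C$, the functional $F \eqdef M^{\Lform X \diff \mathcal C}$ is a sublinear prevision with $F \geq P$ and $F (h_j) \leq 1$ for every $j$. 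This contradicts the cover, at least when $\rast$ is nothing.

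For $\rast$ ``$\leq 1$'', I first replace $K$ by $\{k +_\alpha \one \mid k \in K, \alpha \in [0,1]\}$, and take $\mathcal C$ to be its closure. This set is still convex. It is still inside $\mathcal D$ because $P (k +_\alpha \one) \leq \alpha P (k) + 1 - \alpha \leq 1$ by subnormalization. It is stable under the maps $\shd \one \alpha$, so by Lemma~\ref{lemma:P:subnorm}~(1), applied to the complement, the resulting $F$ is subnormalized. For $\rast$ ``$1$'', this subnormalized $F_0$ satisfies $F_0 (\one) \geq P (\one) = 1$, so Lemma~\ref{lemma:P:norm} yields a normalized sublinear $F$ with $F_0 \geq F \geq P$. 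This $F$ still satisfies $F (h_j) \leq 1$, giving the contradiction. Together with Step~1, this proves item~(1).

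\emph{Items (2) and (3).} If $F \in \mathcal Q$, then $F \geq \minP_X (\mathcal Q)$ pointwise, so $\mathcal Q \subseteq \nimP^\rast_X (\minP_X (\mathcal Q))$. Equality therefore implies $\mathcal Q \in \QPred^\rast_\sub X$ by item~(1).

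Conversely, suppose $\mathcal Q = \bigcap_i [h_i \geq 1]^\rast_\sub$. Then $\minP_X (\mathcal Q) (h_i) \geq 1$ for each $i$. So any $F \geq \minP_X (\mathcal Q)$ satisfies $F (h_i) \geq 1$ for each $i$, that is, $F \in \mathcal Q$.

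For item~(3), one direction is item~(1). For the other, if $\mathcal Q \in \QPred^\rast_\sub X$, item~(2) gives $\mathcal Q = \nimP^\rast_X (P)$ with $P \eqdef \minP_X (\mathcal Q)$, which lies in $\Pred^\rast X$ by Lemma~\ref{lemdef:minP}.
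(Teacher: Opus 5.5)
Your proof is correct and is essentially the paper's. The matching steps are:
\begin{itemize}
\item the same intersection form and the same witnesses for non-emptiness;
\item Alexander's subbase lemma, with the closed convex hull of the $h_j$ inside $P^{-1}([0,1])$ and the upper Minkowski functional of its complement;
\item Lemma~\ref{lemma:P:norm} for the normalized case;
\item the same formal derivation of items~(2) and~(3).
\end{itemize}

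The one deviation is in the subnormalized case. The paper just observes that $\one$ lies in the closed convex set, as it also does in yours (take $\alpha = 0$). Then $F(\one) \leq 1$ and sublinearity already give subnormalization, so you do not need Lemma~\ref{lemma:P:subnorm}~(1) or closure-stability under $\shd \one \alpha$.

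You should also state explicitly that the cover is indexed by a non-empty set, which follows from Step~2. That is what makes the closed convex set non-empty, and hence makes the Minkowski functional of its complement a sublinear prevision.
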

\begin{proof}
  (1) For every $P \in \Pred^\rast X$,
  $\nimP^\rast_X (P) = \bigcap_{h \in \Lform X, r \in \Rp \diff \{0\},
    r< P (h)} [(1/r) h \geq 1]^\rast_\sub$, so $\nimP^\rast_X (P)$ is
  in $\QPred^\rast_\sub X$, provided that it is compact saturated and
  non-empty, which we now check.  We start with non-emptiness.

  When $\rast$ is nothing, let $F$ be the map
  $h \in \Lform X \mapsto \infty.\sup_{x \in X} h (x)$; in other words,
  $F (0)=0$ and $F (h) = \infty$ for every $h \neq 0$.  $F$ is
  Scott-continuous and positively homogeneous.  For all
  $h, h' \in \Lform X$, $F (h+h') = 0$ if and only if $h+h'=0$ if and
  only if $h=h'=0$ if and only if $F (h)=F(h')=0$, if and only if
  $F (h) + F (h')=0$; hence $F$ is a linear prevision.  It is clear
  that $F \geq P$, so $\nimP_X (P)$ is non-empty.

  When $\rast$ is ``$\leq 1$'' or ``$1$'', we define $F (h)$ as
  $\sup_{x \in X} h (x)$ instead.  This $F$ is a sublinear prevision,
  notably because of the inequality
  $\sup_{x \in X} (h (x) + h' (x)) \leq \sup_{x \in X} h (x) + \sup_{x
    \in X} h' (x)$.  For every $h \in \Lform X$,
  $F (\one + h) \leq 1+F (h)$, with equality if $X$ is non-empty.
  But, if $\rast$ is ``$1$'', $P$ is normalized, and therefore $X$ is
  non-empty, since the only prevision on the empty space is the zero
  prevision, which is not normalized.  Hence, whether $\rast$ is
  ``$\leq 1$'' or ``$1$'', $F$ is in $\Pred^\rast_\sub X$.  Finally,
  for every $h \in \Lform X$,
  $P (h) \leq P (\sup_{x \in X} h (x) \cdot \one) = \sup_{x \in X} h
  (x) . P (\one) \leq F (h)$, so $F \geq P$.  Hence
  $\nimP^\rast_X (P)$ is non-empty.

  The set $\nimP^\rast_X (P)$ is clearly upwards-closed, namely
  saturated.  Let us imagine that $\nimP^\rast_X (P)$ is not compact.
  By Alexander's subbase lemma, there is a cover of
  $\nimP^\rast_X (P)$ by subbasic open sets $[h_i > 1]^\rast_\sub$,
  $i \in I$ (where each $h_i$ is in $\Lform X$) with no finite
  subcover.  In particular, $I$ is infinite, hence non-empty.  For
  every non-empty finite subset $J$ of $I$, let $F_J$ be an element of
  $\nimP^\rast_X (P)$ that is not in
  $\bigcup_{i \in J} [h_i > 1]^\rast_\sub$.  Let
  $\mathcal C_J \eqdef F_J^{-1} ([0, 1])$: this is a closed subset of
  $\Lform X$, which is convex because $F_J$ is sublinear.  Since
  $F_J \in \nimP^\rast_X (P)$, namely $F_J \geq P$, $\mathcal C_J$ is
  included in $P^{-1} ([0, 1])$.  Also, for every $i \in J$,
  $F_J (h_i) \leq 1$, so $h_i \in \mathcal C_J$; and if $\rast$ is
  ``$\leq 1$'' or ``$1$'', then $F_J (\one) \leq 1$, so $\one$ is also
  in $\mathcal C_J$.  Since $\mathcal C_J$ is closed, convex and
  contains each $h_i$, $i \in J$ (resp., and $\one$ in case $\rast$ is
  ``$\leq 1$'' or ``$1$''), it also contains its closed convex hull
  $\mathcal C'_J \eqdef \clconv {\{h_i \mid i \in I\}}$ (resp.,
  $\clconv {(\{h_i \mid i \in I\} \cup \{\one\})}$).  Just like
  $\mathcal C_J$, $\mathcal C'_J$ contains each $h_i$, $i \in J$
  (resp., and $\one$ if $\rast$ is ``$\leq 1$'' or ``$1$''), and is
  included (in $\mathcal C_J$ hence) in $P^{-1} ([0, 1])$.
  Additionally, for any two non-empty finite subsets $J$ and $J'$ of
  $I$, if $J \subseteq J'$ then
  $\mathcal C'_J \subseteq \mathcal C'_{J'}$.  In other words, the
  family of sets $\mathcal C'_J$ where $J$ ranges over the non-empty
  finite subsets of $I$ is directed, a property that the family of
  sets $\mathcal C_J$ may not have enjoyed.  Let $\mathcal C'$ be the
  closure of the union of the sets $\mathcal C'_J$.  Since directed
  unions of convex sets are convex, and since closures of convex sets
  are convex, $\mathcal C'$ is closed and convex.  Every
  $\mathcal C'_J$ is included in the closed set $P^{-1} ([0, 1])$, so
  $\mathcal C' \subseteq P^{-1} ([0, 1])$, and $\mathcal C'$ contains
  every $h_i$ with $i \in I$ (resp., and $\one$ if $\rast$ is
  ``$\leq 1$'' or ``$1$'').  Since $I \neq \emptyset$, $\mathcal C'$
  is non-empty.  Hence the upper Minkowski functional
  $F_0 \eqdef M^{\Lform X \diff \mathcal C'}$ is a sublinear prevision.
  
  For every $i \in I$, $h_i \in \mathcal C' = F_0^{-1} ([0, 1])$, so
  $F_0 (h_i) \leq 1$ for every $i \in I$.  This shows that
  $F_0 \not\in \bigcup_{i \in J} [h_i > 1]^\rast_\sub$.  Since
  $\mathcal C' \subseteq P^{-1} ([0, 1])$,
  $\Lform X \diff \mathcal C' \supseteq P^{-1} (]1, \infty])$, and
  using the order isomorphism $\Open^* X \cong \Lhom X$, $F_0 \geq P$.
  Hence, when $\rast$ is nothing, we have found an element
  $F_0 \in \Pred_\sub X$ above $P$, namely an element of $\nimP_X (P)$,
  which is not in $\bigcup_{i \in J} [h_i > 1]_\sub$.

  When $\rast$ is ``$\leq 1$'', we note that $\one \in \mathcal C'$,
  so $F_0 (\one) \leq 1$.  Since $F_0$ is subnormalized, for every
  $h \in \Lform X$,
  $F_0 (\one + h) \leq F_0 (\one) + F_0 (h) \leq 1 + F_0 (h)$, so $F_0$ is
  subnormalized; hence $F_0$ in $\nimP^{\leq 1}_X (P)$ but not in
  $\bigcup_{i \in J} [h_i > 1]^{\leq 1}_\sub$.

  When $\rast$ is ``$1$'', additionally,
  $F_0 (\one) \geq P (\one) = 1$, since $P$ is normalized.  By
  Lemma~\ref{lemma:P:norm}, there is a normalized sublinear prevision
  $F$ such that $F_0 \geq F \geq P$.  Then $F$ is in $\nimP^1_X (P)$,
  but not in $\bigcup_{i \in J} [h_i > 1]^1_\sub$.

  In all cases, we obtain a contradiction to our assumption that
  $\nimP^\rast_X (P) \subseteq \bigcup_{i \in J} [h_i >
  1]^\rast_\sub$.  Therefore $\nimP^\rast_X (P)$ is compact.

  (2) Let $\mathcal Q \in \SV {\Pred^\rast_\sub X}$.  For all
  $F \in \mathcal Q$ and $h \in \Lform X$,
  $F (h) \geq \min_{F \in \mathcal Q} F (h) = \minP_X (\mathcal Q)
  (h)$, so $F \in \nimP^\rast_X (\minP_X (\mathcal Q))$.  This shows
  that $\mathcal Q \subseteq \nimP^\rast_X (\minP_X (\mathcal C))$.

  If equality holds, or in general if $\mathcal Q = \nimP^\rast_X (P)$
  for some $P \in \Pred^\rast X$, then
  $\mathcal Q \in \QPred^\rast_\sub X$ by~(1).

  Conversely, let $\mathcal Q$ be a non-empty compact saturated set of
  the form $\bigcap_{i \in I} [h_i \geq 1]^\rast_\sub$, where
  $h_i \in \Lform X$, and let $P \eqdef \minP_X (\mathcal Q)$.  For
  every $F \in \nimP^\rast_X (P)$, for every $i \in I$,
  $F (h_i) \geq P (h_i) = \min_{F \in \mathcal Q} F (h_i) \geq 1$, so
  $F \in \mathcal Q$.  Therefore
  $\nimP^\rast_X (P) \subseteq \mathcal Q$ and hence
  $\mathcal Q = \nimP^\rast_X (\minP_X (P))$.

  (3) For every $\mathcal Q \in \QPred^\rast_\sub X$,
  $\mathcal Q = \nimP^\rast_X (P)$ where
  $P \eqdef \minP_X (\mathcal Q)$, by~(2).  Conversely, every set
  $\nimP^\rast_X (P)$ where $P \in \Pred^\rast X$ is in
  $\QPred^\rast_\sub X$ by~(1).
\end{proof}

\begin{corollary}
  \label{corl:nimP}
  Let $\rast$ be nothing, ``$\leq 1$'' or ``$1$'' and $X$ be a
  topological space.  Let $\QE \eqdef \Pred^\rast_\super X$,
  $\CF \eqdef \Pred^\rast_\sub X$, and let us define $F^- \perp F^+$ by
  $F^- \leq F^+$.  Then
  $\Img {\_^\perp} \subseteq \QPred^\rast_\sub X$.
\end{corollary}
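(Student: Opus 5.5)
The plan is to show directly that every set of the form $A^\perp$, with $A \subseteq \QE = \Pred^\rast_\super X$, coincides with $\nimP^\rast_X (P)$ for a suitable $P \in \Pred^\rast X$. Lemma and Definition~\ref{lemdef:nimP}, item~(3), then places $A^\perp$ in $\QPred^\rast_\sub X$. By definition,
\[
  A^\perp = \{F^+ \in \Pred^\rast_\sub X \mid \forall F^- \in A, F^- \leq F^+\}.
\]
That is, $A^\perp$ is the set of sublinear previsions in $\Pred^\rast_\sub X$ lying above every element of $A$.

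When $A$ is non-empty, the natural candidate is the pointwise supremum $P \eqdef \sup_{F^- \in A} F^-$. By Proposition~\ref{prop:PX:sup}, item~(2), applied in $\Pred^\rast X$, this supremum is an element of $\Pred^\rast X$. Pointwise suprema of Scott-continuous, positively homogeneous maps are again such maps, and subnormalization is preserved. Normalization is also preserved, precisely because we take the supremum of a non-empty family. Since the supremum is computed pointwise, a map $F^+ \in \Pred^\rast_\sub X$ is above every $F^- \in A$ if and only if $F^+ \geq P$. Hence $A^\perp = \{F^+ \in \Pred^\rast_\sub X \mid F^+ \geq P\} = \nimP^\rast_X (P)$, which lies in $\QPred^\rast_\sub X$.

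The empty family is the only delicate point. If $A = \emptyset$, then $A^\perp = \CF = \Pred^\rast_\sub X$, and we need $P$ with $\nimP^\rast_X (P)$ equal to the whole space.
\begin{itemize}
\item When $\rast$ is nothing or ``$\leq 1$'', take $P \eqdef 0$, which lies in $\Pred^\rast X$ and is below every prevision by Proposition~\ref{prop:PX:sup}, item~(3). Then $\nimP^\rast_X (0) = \Pred^\rast_\sub X$.
\item When $\rast$ is ``$1$'', $0$ is not normalized, so a different choice is needed. Observe that every normalized prevision $F$ satisfies $F(h) \geq F(\inf_{x \in X} h(x) \cdot \one) = \inf_x h(x)$. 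So I would try $P(h) \eqdef \inf_{x \in X} h(x)$, but this need not be Scott-continuous unless $X$ is compact.
\item The fallback is to avoid constructing $P$ and to show directly that $\Pred^1_\sub X$ has the form required by Definition~\ref{defn:QBox}. It is the empty intersection (taking $I = \emptyset$), so it suffices that it is compact saturated and non-empty.
\item Saturation is trivial. For compactness and non-emptiness, I would argue as in the proof of item~(1) of Lemma and Definition~\ref{lemdef:nimP}, with the constraint ``$\subseteq P^{-1}([0,1])$'' simply dropped. The closed convex hull $\clconv(\{h_i\} \cup \{\one\})$ yields a sublinear $F_0$ with $F_0(\one) \leq 1$. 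Then $F_0$ is adjusted to a normalized one.
\item That adjustment is the main obstacle. Lemma~\ref{lemma:P:norm} needs a normalized $P$ below $F_0$ and requires $F_0(\one) \geq 1$. I would try to supply these by using $\sup_{x \in X} h(x)$, with $X$ non-empty, and the maximality argument of that lemma with $\mathcal D$ taken to be $\{h \mid \inf_x h(x) \leq 1\}$ suitably closed.
\end{itemize}
If the empty case cannot be resolved this way, I would note that $\emptyset^\perp$ is the whole of $\CF$ and handle it separately, or check that the intended reading has $A$ non-empty.
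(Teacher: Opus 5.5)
For non-empty $A$, and for $A=\emptyset$ when $\rast$ is nothing or ``$\leq 1$'', your argument is the paper's: $A^\perp=\nimP^\rast_X(P)$ with $P$ the pointwise supremum of $A$, followed by Lemma and Definition~\ref{lemdef:nimP}. The paper does not treat the empty family separately. So when $\rast$ is ``$1$'' it silently applies Lemma and Definition~\ref{lemdef:nimP}~(1) to $P=0\notin\Pred^1 X$. The case you flagged is a real hole, in the paper's proof as well.

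Your fallback cannot close it, because without an extra hypothesis the statement is false in that case: $\emptyset^\perp=\Pred^1_\sub X$ need not be compact.
Take $X\eqdef\nat$ with the discrete topology and $h_n\eqdef\chi_{\{0,\ldots,n\}}$.
\begin{itemize}
\item Every $F\in\Pred^1_\sub X$ satisfies $F(\one)=1$, and $\one=\dsup_n h_n$. So the increasing family of open sets $[h_n>1/2]^1_\sub$ covers $\Pred^1_\sub X$.
\item The normalized linear prevision $h\mapsto h(N+1)$ vanishes on $h_N$, so no finite subcover exists.
\item If $X=\emptyset$, $\Pred^1_\sub X$ is even empty.
\end{itemize}
In particular, the step where $F_0$ is ``adjusted to a normalized one'' has nothing to stand on. Lemma~\ref{lemma:P:norm} needs a normalized lower bound, and no choice of $\mathcal D$ can substitute for it.

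The right repair is to assume $X$ compact and non-empty when $\rast$ is ``$1$''. This is the hypothesis of Theorem~\ref{thm:HppQ}, where the corollary is used. It is also the hypothesis under which the paper handles the empty family in the dual Corollary~\ref{corl:qusP}. Under it, your first idea works. The map $P_0(h)\eqdef\min_{x\in X}h(x)$ lies in $\Pred^1 X$ by Lemma~\ref{lemma:game}, and is its least element by Proposition~\ref{prop:PX:sup}~(3). Hence $\nimP^1_X(P_0)=\Pred^1_\sub X=\emptyset^\perp$. Equivalently, since $P_0\in\QE$ lies below every element of $\CF$, we have $A^\perp=(A\cup\{P_0\})^\perp$, so you may always assume $A\neq\emptyset$.
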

\begin{proof}
  Let $A \subseteq \Pred^\rast_\super X$.  Then
  $A^\perp = \{F^+ \in \Pred^\rast_\sub X \mid \forall F^- \in A, F^-
  \leq F^+\}$.  Let $P (h) \eqdef \sup_{F^- \in A} F^- (h)$ for every
  $h \in \Lform X$.  As a pointwise supremum of lower semicontinuous
  maps, $P$ is lower semicontinuous.  It is positively homogeneous
  since every $F^- \in A$ is, so $P$ is a prevision on $X$.  Now
  $A^\perp = \{F^+ \in \Pred^\rast_\sub X \mid P \leq F^+\} =
  \nimP^\rast_X (P)$, and we conclude by using Lemma and
  Definition~\ref{lemdef:nimP}~(1).
\end{proof}

The following lemma is instrumental in understanding the topology of
$\QPred^\rast_\sub X$.  For every $U \in \Open {\Pred^\rast_\sub X}$,
we abbreviate $\Box U \cap \QPred^\rast_\sub X$ as $\Box^\pp U$.
\begin{lem}
  \label{lemma:just:Box:base}
  Let $\rast$ be nothing, ``$\leq 1$'' or ``$1$'', and let $X$ be a
  topological space.  For every $n \geq 1$ and for all
  $h_1, \cdots, h_n \in \Lform X$,
  \[
    \Box^\pp {\bigcup_{i=1}^n [h_i > 1]^\rast_\sub} = \bigcup_{\vec a
      \in \Delta_n} \Box^\pp {\left[\sum_{i=1}^n a_i h_i >
        1\right]^\rast_\sub}.
  \]
\end{lem}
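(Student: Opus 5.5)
The inclusion from right to left is easy, and I would do it first. Fix $\vec a \in \Delta_n$ and $\mathcal Q \in \Box^\pp [\sum_i a_i h_i > 1]^\rast_\sub$. Take any $F \in \mathcal Q$. By sublinearity and positive homogeneity,
\[
  1 < F\left(\sum_{i=1}^n a_i h_i\right) \leq \sum_{i=1}^n a_i F (h_i).
\]
Since $\sum_i a_i = 1$, some $F (h_i)$ must exceed $1$. Hence $F \in \bigcup_{i=1}^n [h_i > 1]^\rast_\sub$, as required.

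For the converse inclusion, let $\mathcal Q \in \QPred^\rast_\sub X$ with $\mathcal Q \subseteq \bigcup_{i=1}^n [h_i > 1]^\rast_\sub$. By Lemma and Definition~\ref{lemdef:nimP}~(2), $\mathcal Q = \nimP^\rast_X (P)$ where $P \eqdef \minP_X (\mathcal Q)$. We look for $\vec a \in \Delta_n$ such that $F (\sum_i a_i h_i) > 1$ for every $F \geq P$ in $\Pred^\rast_\sub X$.

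I would argue by contradiction, reusing the Minkowski-functional construction from the compactness proof in Lemma and Definition~\ref{lemdef:nimP}~(1). Suppose that for every $\vec a \in \Delta_n$ there is some $F_{\vec a} \in \mathcal Q$ with $F_{\vec a} (\sum_i a_i h_i) \leq 1$. Then every convex combination $\sum_i a_i h_i$ lies in $P^{-1} ([0,1])$, because $P \leq F_{\vec a}$. So $K \eqdef \conv \{h_1, \ldots, h_n\}$ is included in the closed set $\mathcal D \eqdef P^{-1}([0,1])$. Let $\mathcal C' \eqdef \clconv (\{h_1, \ldots, h_n\})$, or $\clconv(\{h_1, \ldots, h_n, \one\})$ when $\rast$ is ``$\leq 1$'' or ``$1$''.

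If $\mathcal C' \subseteq \mathcal D$, then $F_0 \eqdef M^{\Lform X \diff \mathcal C'}$ is a sublinear prevision with $F_0 \geq P$ and $F_0 (h_i) \leq 1$ for every $i$. When $\rast$ is ``$\leq 1$'', $F_0$ is subnormalized. When $\rast$ is ``$1$'', we use Lemma~\ref{lemma:P:norm} to obtain a normalized $F$ with $F_0 \geq F \geq P$. In every case this gives an element of $\nimP^\rast_X(P) = \mathcal Q$ outside $\bigcup_i [h_i > 1]^\rast_\sub$, which is a contradiction.

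The main obstacle is to show that $\one$ may be added when $\rast$ is ``$\leq 1$'' or ``$1$''. That is, we need every convex combination $\sum_i a_i h_i + b\,\one$ to lie in $\mathcal D$, and knowing $\sum_i a_i h_i \in \mathcal D$ is not enough by itself. My plan is to strengthen the contradiction hypothesis. The sets
\[
  \{\vec a \in \Delta_n \mid \mathcal Q \subseteq [\textstyle\sum_i a_i h_i > 1]^\rast_\sub\}
\]
are all empty, so for each $\vec a$ there is an $F_{\vec a} \in \mathcal Q$ with $F_{\vec a} (\sum_i a_i h_i) \leq 1$. This $F_{\vec a}$ is subnormalized with $F_{\vec a}(\one) \leq 1$. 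By sublinearity,
\[
  F_{\vec a}\left(\beta \sum_i a_i h_i + (1-\beta)\one\right) \leq \beta + (1-\beta) = 1,
\]
so these points also lie in $\mathcal D$. Since $\conv(\{h_1, \ldots, h_n\} \cup \{\one\})$ consists exactly of such points, $\mathcal C' \subseteq \mathcal D$ follows after taking closures. If this step resisted, I would instead try a minimax argument in the style of Keimel's sandwich theorem over the compact simplex $\Delta_n$. That fallback would also use compactness of $\mathcal Q$ together with lower semicontinuity of $\vec a \mapsto F (\sum_i a_i h_i)$.
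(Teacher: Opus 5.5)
Your proof is correct and follows the paper's argument: by contradiction, with $P \eqdef \minP_X (\mathcal Q)$, the closed convex hull of the $h_i$ sits inside the closed set $P^{-1} ([0,1])$, you take the upper Minkowski functional of its complement, and $\mathcal Q = \nimP^\rast_X (P)$ then yields a member of $\mathcal Q$ outside $\bigcup_{i=1}^n [h_i > 1]^\rast_\sub$. Your extra steps, adding $\one$ to the hull (justified via $F_{\vec a} (\one) \leq 1$ and sublinearity) and invoking Lemma~\ref{lemma:P:norm} when $\rast$ is ``$1$'', are genuinely needed to place that functional in $\Pred^\rast_\sub X$; the paper's proof passes over this point, since it only uses $\clconv \{h_1, \cdots, h_n\}$, as in the case where $\rast$ is nothing.
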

\begin{proof}
  For every
  $\mathcal Q \in \bigcup_{\vec a \in \Delta_n} \Box^\pp
  {\left[\sum_{i=1}^n a_i h_i > 1\right]^\rast_\sub}$, we have that
  for every $F \in \mathcal Q$, there is an $\vec a \in \Delta_n$ such
  that $F (\sum_{i=1}^n a_i h_i) > 1$.  Since $F$ is sublinear,
  $\sum_{i=1}^n a_i F (h_i) > 1$, hence $F (h_i) > 1$ for some
  $i \in \{1, \cdots, n\}$: if instead we had $F (h_i) \leq 1$ for
  every $i \in \{1, \cdots, n\}$, then $\sum_{i=1}^n a_i F (h_i)$
  would be smaller than or equal to $1$.  We have obtained that
  $F \in \bigcup_{i=1}^n [h_i > 1]^\rast_\sub$, for every
  $F \in \mathcal Q$, so
  $\mathcal Q \in \Box^\pp {\bigcup_{i=1}^n [h_i > 1]^\rast_\sub}$.

  Conversely, let us assume that $\mathcal Q$ is not in
  $\bigcup_{\vec a \in \Delta_n} \Box^\pp {\left[\sum_{i=1}^n a_i h_i
      > 1\right]^\rast_\sub}$.  We will show that $\mathcal Q$ cannot
  be $\Box^\pp {\bigcup_{i=1}^n [h_i > 1]^\rast_\sub}$.  For every
  $\vec a \in \Delta_n$, there is an element
  $F_{\vec a} \in \mathcal Q$ such that
  $F_{\vec a} (\sum_{i=1}^n a_i h_i) \leq 1$.  In particular, letting
  $P \eqdef \minP_X (\mathcal Q)$, $P (\sum_{i=1}^n a_i h_i) \leq 1$
  for every $\vec a \in \Delta_n$.  This shows that
  $\conv {\{h_1, \cdots, h_n\}}$ is included in $P^{-1} ([0, 1])$, and
  since $P^{-1} ([0, 1])$ is closed, that
  $\mathcal C \eqdef \clconv {\{h_1, \cdots, h_n\}}$ is included in
  $P^{-1} ([0, 1])$.  $\mathcal C$ is a closed convex set, which is
  non-empty since $n \geq 1$.  Hence the upper Minkowski functional
  $F \eqdef M^{\Lform X \diff \mathcal C}$ of its complement is a
  sublinear prevision.
  $\Lform X \diff \mathcal C = F^{-1} (]1, \infty])$ contains
  $P^{-1} (]1, \infty])$ since $\mathcal C \subseteq P^{-1} ([0, 1])$;
  whence $F \leq P$.  We now use that $\mathcal Q$ is in
  $\QPred^\rast_\sub X$, not just in $\SV {\Pred^\rast_\sub X}$: by
  Definition and Lemma~\ref{lemdef:nimP}~(2),
  $\mathcal Q = \nimP^\rast_X (P)$, so $F \leq P$ entails that
  $F \in \mathcal Q$.  For each $i \in \{1, \cdots, n\}$,
  $h_i \in \mathcal C = F^{-1} ([0, 1])$, so
  $F \not\in \bigcup_{i=1}^n [h_i > 1]^\rast_\sub$.  Hence
  $\mathcal Q \not\in \Box^\pp {\bigcup_{i=1}^n [h_i >
    1]^\rast_\sub}$.
\end{proof}

\begin{lem}
  \label{lemma:QBox:lc}
  Let $\rast$ be nothing, ``$\leq 1$'' or ``$1$''.  Let $X$ be a
  topological space.  The sets $\Box^\pp {[h > 1]^\rast_\sub}$ where
  $h \in \Lform X$ form a base of the topology on
  $\QPred^\rast_\sub X$.
\end{lem}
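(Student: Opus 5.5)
The plan is to start from the fact that $\QPred^\rast_\sub X$ carries the subspace topology of $\SV {\Pred^\rast_\sub X}$. A base of that topology is given by the sets $\Box^\pp \mathcal U$, where $\mathcal U$ ranges over the open subsets of $\Pred^\rast_\sub X$. So it suffices to show two things: every $\Box^\pp {[h > 1]^\rast_\sub}$ is open, which is immediate since $[h > 1]^\rast_\sub$ is open; and for every open $\mathcal U$ and every $\mathcal Q \in \Box^\pp \mathcal U$ there is an $h \in \Lform X$ with $\mathcal Q \in \Box^\pp {[h > 1]^\rast_\sub} \subseteq \Box^\pp \mathcal U$.

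For the second point, I will first shrink $\mathcal U$ using compactness. The sets $[h > 1]^\rast_\sub$ form a subbase of $\Pred^\rast_\sub X$, as noted at the start of this section. Hence $\mathcal U$ is a union of finite intersections $\bigcap_{j=1}^m [g_j > 1]^\rast_\sub$.

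Since $\mathcal Q$ is compact, it is covered by finitely many such finite intersections. This reduces the problem to showing that $\Box^\pp \mathcal W$ is a union of basic sets $\Box^\pp {[h > 1]^\rast_\sub}$ for $\mathcal W$ a finite union of finite intersections of subbasic sets. I will handle the finite intersections first.

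A finite intersection $\bigcap_{j=1}^m [g_j > 1]^\rast_\sub$ is not literally of the form $[h>1]^\rast_\sub$. The way around this is to distribute: $\mathcal W$ is a finite union of finite intersections, and I rewrite it as a finite intersection of finite unions. Since $\Box$ commutes with finite intersections, $\Box^\pp \mathcal W = \bigcap_k \Box^\pp \mathcal W_k$, where each $\mathcal W_k$ is a finite union $\bigcup_{i=1}^n [h_{k,i} > 1]^\rast_\sub$. By Lemma~\ref{lemma:just:Box:base}, each $\Box^\pp \mathcal W_k$ is a union of sets $\Box^\pp {[h'_k > 1]^\rast_\sub}$, with $h'_k = \sum_i a_i h_{k,i}$. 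So $\mathcal Q$ lies in a finite intersection $\bigcap_{k=1}^p \Box^\pp {[h'_k > 1]^\rast_\sub} \subseteq \Box^\pp \mathcal U$.

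It remains to show that such a finite intersection $\bigcap_{k=1}^p \Box^\pp {[h'_k > 1]^\rast_\sub}$ containing $\mathcal Q$ contains a basic neighborhood of $\mathcal Q$. This is the main obstacle. Membership $\mathcal Q \in \Box^\pp {[h'_k>1]}$ means $\minP_X(\mathcal Q)(h'_k) > 1$. By compactness (Lemma and Definition~\ref{lemdef:minP}), the minimum is attained, so there is some $r>1$ with $\min_{F\in\mathcal Q} F(h'_k) > r$ for all $k$.

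I will try $h \eqdef \bigwedge$-free constructions: infima of $h'_k$ are lower semicontinuous but not compatible with $F$. Instead, the approach is to pass through $P \eqdef \minP_X (\mathcal Q)$ and use $\mathcal Q = \nimP^\rast_X (P)$ from Lemma and Definition~\ref{lemdef:nimP}~(2). Scaling each $h'_k$ by $1/r$ gives $\mathcal Q \subseteq \bigcap_k [h'_k/r \geq 1]$.

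Then I will look for a single $h$, for instance $h \eqdef \frac1r\sum_k h'_k$ weighted appropriately. The target inequality is $F(h)>1 \Rightarrow F(h'_k) > 1$ for every $k$. Sublinearity only gives the reverse direction, so I expect this naive choice to fail.

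Should that fail, the fallback is to argue via the convex-set contrapositive as in Lemma~\ref{lemma:just:Box:base}. Given $\mathcal Q' \in \QPred^\rast_\sub X$ that lies in every $\Box^\pp {[h > 1]}$ that $\mathcal Q$ lies in, I would show $\mathcal Q' \subseteq \mathcal Q$. That would make the sets $\Box^\pp{[h>1]}$ a base: the topology is determined by the specialization order, combined with the intersection property from Lemma~\ref{lemma:just:Box:base}. The comparison $\mathcal Q' \subseteq \mathcal Q$ would come from comparing $\minP_X$ values and applying $\nimP^\rast_X$.
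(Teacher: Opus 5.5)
Your reduction is sound, and it is essentially the paper's own argument. Compactness of $\mathcal Q$, distributing unions over intersections, commuting $\Box$ with finite intersections, and Lemma~\ref{lemma:just:Box:base} together show that every open subset of $\QPred^\rast_\sub X$ is a union of finite intersections $\bigcap_{k=1}^p \Box^\pp{[h'_k > 1]^\rast_\sub}$.

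The gap is the last step, which you correctly flag: shrinking such a finite intersection to a single $\Box^\pp{[h>1]^\rast_\sub}$ around $\mathcal Q$. Your fallback does not supply it. Showing that $\mathcal Q' \subseteq \mathcal Q$ whenever $\mathcal Q'$ lies in every $\Box^\pp{[h>1]^\rast_\sub}$ containing $\mathcal Q$ only shows that these sets induce the right specialization ordering. Every subbase does that, and a topology is not determined by its specialization ordering.

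In fact that step cannot be done, because these sets do not form a base in general. Take $X \eqdef \{0,1\}$ with the discrete topology, $P_0 (g) \eqdef \max (g (0), g (1))$, $h_1 \eqdef 2 \chi_{\{0\}}$, $h_2 \eqdef 2 \chi_{\{1\}}$, and $\mathcal Q_0 \eqdef \nimP^\rast_X (P_0)$.
\begin{itemize}
\item Every $F \in \mathcal Q_0$ satisfies $F (h_j) \geq P_0 (h_j) = 2$, so $\mathcal Q_0 \in \Box^\pp{[h_1>1]^\rast_\sub} \cap \Box^\pp{[h_2>1]^\rast_\sub}$.
\item Suppose $\mathcal Q_0 \in \Box^\pp{[h>1]^\rast_\sub}$ for some $h$. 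Since $P_0$ is a normalized sublinear prevision, $P_0 \in \mathcal Q_0$, so $P_0 (h) > 1$ and hence $h (x) > 1$ for some $x$.
\item Writing $\delta_x (g) \eqdef g (x)$, the set $\nimP^\rast_X (\delta_x)$ is in $\QPred^\rast_\sub X$ by Lemma and Definition~\ref{lemdef:nimP}. It lies in $\Box^\pp{[h>1]^\rast_\sub}$, since every member is $\geq \delta_x$.
\item But $\nimP^\rast_X (\delta_x)$ contains $\delta_x$ itself, which vanishes on $h_2$ if $x=0$ and on $h_1$ if $x=1$. So $\Box^\pp{[h>1]^\rast_\sub}$ is not included in the intersection above.
\end{itemize}
There is an even simpler failure when $\rast$ is nothing or ``$\leq 1$''. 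Then $\Pred^\rast_\sub X$ itself is in $\QPred^\rast_\sub X$, as the empty intersection, compact because it has least element $0$. It lies in no $\Box^\pp{[h>1]^\rast_\sub}$, so these sets do not even cover the space.

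The correct statement is therefore that the sets $\Box^\pp{[h>1]^\rast_\sub}$ form a \emph{subbase}, and your first part already proves exactly that. The paper's proof makes the same slip. After commuting $\Box$ with finite intersections, it calls the sets $\Box^\pp{\bigcup_{i=1}^n [h_i > 1]^\rast_\sub}$ ``a base'', but they only form a subbase. Only the subbase property is used afterwards, in Corollary~\ref{corl:nimP:cont}, where checking inverse images of subbasic sets suffices for continuity. You should stop after your reduction and conclude ``subbase''.
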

\begin{proof}
  A subbase of the topology on $\Pred^\rast_\sub X$ is given by the
  sets $[h > 1]^\rast_\sub$ with $h \in \Lform X$.  Any open subset
  can then be written as a union of finite intersections of those.  We
  reorganize that as a directed union of finite unions of finite
  intersections, then as a directed union of finite intersections of
  finite unions, by distributing unions over intersections.  The
  $\Box$ operator commutes with directed unions and with finite
  intersections, so a base of the topology on $\QPred^\rast_\sub X$ is
  given by the sets
  $\Box {\bigcup_{i=1}^n [h_i > 1]^\rast_\sub} \cap \QPred^\rast_\sub
  X = \Box^\pp {\bigcup_{i=1}^n [h_i > 1]^\rast_\sub}$.

  We conclude because such sets can be rewritten as unions of sets of
  the form $\Box^\pp {[h > 1]^\rast_\sub}$: if $n \geq 1$, by
  Lemma~\ref{lemma:just:Box:base} we can take the functions
  $h \eqdef \sum_{i=1}^n a_i h_i$; if $n = 0$,
  $\Box^\pp {\bigcup_{i=1}^n [h_i > 1]^\rast_\sub}$ is empty, hence
  the union of the empty family.
\end{proof}


\begin{lem}
  \label{lemma:nimP:minP}
  Let $\rast$ be nothing, ``$\leq 1$'' or ``$1$''.  Let $X$ be a
  topological space.  For every $P \in \Pred^\rast X$,
  $\minP_X (\nimP^\rast_X (P)) = P$.
\end{lem}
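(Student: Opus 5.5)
We must show $\min_{F \in \nimP^\rast_X (P)} F (h) = P (h)$ for every $h \in \Lform X$. Lemma and Definition~\ref{lemdef:nimP}~(1) guarantees that $\nimP^\rast_X (P)$ is a non-empty compact saturated set, so $\minP_X (\nimP^\rast_X (P))$ is well defined. The inequality $\geq$ is immediate, since every $F \in \nimP^\rast_X (P)$ satisfies $F \geq P$. The real content is the reverse inequality: for each fixed $h$ we must produce an $F \in \Pred^\rast_\sub X$ with $F \geq P$ and $F (h) \leq P (h)$.

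\textbf{Construction of $F$.} If $P (h) = \infty$ there is nothing to prove. Otherwise, for every real $t > P (h)$, we build $F$ with $F \geq P$ and $F (h) \leq t$; taking the infimum over $t$ gives the claim. We may replace $h$ by $(1/t) h$, using positive homogeneity (the case $t$ arbitrarily close to $0$ is handled the same way, after scaling). This reduces the task to the following: given $h$ with $P (h) < 1$, find $F \in \Pred^\rast_\sub X$ with $F \geq P$ and $F (h) \leq 1$. We imitate the compactness argument in Lemma and Definition~\ref{lemdef:nimP}~(1) with the single function $h$. Let $\mathcal C \eqdef \clconv {\{h\}}$, or $\clconv {\{h, \one\}}$ when $\rast$ is ``$\leq 1$'' or ``$1$''. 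We then need $\mathcal C \subseteq P^{-1} ([0, 1])$, after which $F_0 \eqdef M^{\Lform X \diff \mathcal C}$ is a sublinear prevision with $F_0 \geq P$ and $F_0 (h) \leq 1$.

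\textbf{The main obstacle: convexity.} $P^{-1} ([0,1])$ need not be convex, because $P$ is not sublinear, so we cannot simply close up under convex hulls. In the unnormalized case this is harmless: $\conv \{h\} = \{h\}$, and the closure of $\{h\}$ is $\dc h$, which lies in $P^{-1} ([0,1])$ because $P$ is monotone. In the cases $\rast \in \{$``$\leq 1$'', ``$1$''$\}$ we need the segment $h +_\alpha \one$ to stay inside $P^{-1} ([0,1])$. Using $\alpha \cdot h + (1-\alpha) \cdot \one$, we would like $P (\alpha h + (1-\alpha)\one) \leq \alpha P (h) + (1-\alpha) \leq 1$. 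This follows from subnormalization: by homogeneity, $P (\beta h + \gamma \one) \leq \gamma + \beta P (h)$ holds when $\gamma = \beta$ (scale $P (\one + h) \leq 1 + P (h)$). For general $\gamma$, we write $\beta h + \gamma \one = \gamma (\one + (\beta/\gamma) h)$ and apply subnormalization to $(\beta/\gamma) h$ when $\gamma > 0$. Equivalently, we use Lemma~\ref{lemma:P:subnorm}~(1): $P^{-1} ([0,1])$ is stable under $\shd \one \alpha$ in the appropriate direction. So the segment $[h, \one]$ lies in the closed set $P^{-1} ([0,1])$, and taking closures preserves this.

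\textbf{Normalization.} When $\rast$ is ``$\leq 1$'', $\one \in \mathcal C$ gives $F_0 (\one) \leq 1$, so $F_0$ is subnormalized by sublinearity, exactly as in Lemma and Definition~\ref{lemdef:nimP}~(1). When $\rast$ is ``$1$'', we have $F_0 (\one) \geq P (\one) = 1$, and Lemma~\ref{lemma:P:norm} yields a normalized sublinear $F$ with $F_0 \geq F \geq P$. Hence $F \in \nimP^\rast_X (P)$ and $F (h) \leq F_0 (h) \leq 1$. This completes $\minP_X (\nimP^\rast_X (P)) (h) \leq P (h)$, and with the easy direction, $\minP_X (\nimP^\rast_X (P)) = P$.
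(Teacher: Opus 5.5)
Your proposal is correct and follows essentially the same route as the paper. The shared ingredients are the choice $\mathcal C = \dc h$ (resp.\ $\clconv \{h, \one\}$), the subnormalization bound $P(\alpha h + (1-\alpha)\one) \leq \alpha P(h) + (1-\alpha)$ along the segment, the sublinear prevision $M^{\Lform X \diff \mathcal C}$, and Lemma~\ref{lemma:P:norm} in the normalized case; the only cosmetic difference is that you argue directly for each $t > P(h)$, whereas the paper argues by contradiction.
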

\begin{proof}
  For every $h \in \Lform X$,
  $\minP_X (\nimP^\rast_X (P)) (h) = \min_{F \in \Pred^\rast_\sub X, F
    \geq P} F (h) \geq P (h)$, so $\minP_X (\nimP^\rast_X (P)) \geq P$.
  If the inequality were strict, there would be an $h \in \Lform X$
  such that $P (h) < \min_{F \in \Pred^\rast_\sub X, F \geq P}$.  We
  pick $r$ so that
  $P (h) < r < \min_{F \in \Pred^\rast_\sub X, F \geq P}$.  Replacing
  $h$ by $(1/r) \cdot h$ if necessary, we can take $r = 1$.  Hence
  $P (h) < 1 < \min_{F \in \Pred^\rast_\sub X, F \geq P} F (h)$.

  Let us deal with the case where $\rast$ is nothing first.  The
  downward closure $\mathcal C \eqdef \dc h$ is convex, closed,
  non-empty, and included in $P^{-1} ([0, 1])$ since $P (h) \leq 1$.
  Let $F$ be the upper Minkowski functional
  $M^{\Lform X \diff \mathcal C}$ of its complement.  Then $F$ is a
  sublinear prevision, and the complement of $\mathcal C$ is equal to
  $F^{-1} (]1, \infty])$, namely $\mathcal C = F^{-1} ([0, 1])$.
  Since $\mathcal C \subseteq P^{-1} ([0, 1])$,
  $P^{-1} (]1, \infty]) \subseteq F^{-1} (]1, \infty])$, so
  $P \leq F$.  Since $h \in \mathcal C$, $F (h) \leq 1$, so
  $\min_{F \in \Pred_\sub X, F \geq P} F (h) \leq 1$, which is
  impossible.

  When $\rast$ is ``$\leq 1$'' or ``$1$'', instead, we let
  $\mathcal C \eqdef \clconv {\{h, \one\}}$.  This is also convex,
  closed and non-empty.  For every element
  $a \cdot h + (1-a) \cdot \one$ of $\conv {\{h, \one\}}$,
  $P (a \cdot h + (1-a) \cdot \one) \leq a P (h) + (1-a)$ (since $P$
  is subnormalized) $\leq 1$ (because $P (h) \leq 1$).  Since
  $P^{-1} ([0, 1])$ is closed, the closure of $\conv {\{h, \one\}}$,
  which is $\mathcal C$, is included in $P^{-1} ([0, 1])$.  As above,
  we define $F$ as $M^{\Lform X \diff \mathcal C}$, so $F$ is a
  sublinear prevision such that $P \leq F$ and $F (h) \leq 1$.
  Additionally, since $\one \in \mathcal C$, $F (\one) \leq 1$, so
  $F (\one + h) \leq F (\one) + F (h) \leq 1 + F (h)$ for every
  $h \in \Lform X$, in other words $F$ is subnormalized.  When $\rast$
  is ``$1$'', $P \leq F$ entails $1 = P (\one) \leq F (\one)$; by
  Lemma~\ref{lemma:P:norm}, we can replace $F$ by a smaller normalized
  prevision that is still above $P$.  Hence, whether $\rast$ is
  ``$\leq 1$'' or ``$1$'',
  $\min_{F \in \Pred^\rast_\sub X, F \geq P} F (h) \leq 1$, which is
  impossible.
\end{proof}

\begin{corollary}
  \label{corl:nimP:cont}
  Let $\rast$ be nothing, ``$\leq 1$'' or ``$1$''.  Let $X$ be a
  topological space.  The function $\nimP^\rast_X$ is continuous from
  $\Pred^\rast X$ to $\QPred^\rast_\sub X$, hence also to
  $\SV {\Pred^\rast_\sub X}$.
\end{corollary}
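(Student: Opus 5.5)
By Lemma and Definition~\ref{lemdef:nimP}~(1), $\nimP^\rast_X$ takes its values in $\QPred^\rast_\sub X$. So the plan is to check continuity against the base described in Lemma~\ref{lemma:QBox:lc}. Continuity into $\SV {\Pred^\rast_\sub X}$ then follows at once, since $\QPred^\rast_\sub X$ carries the subspace topology.

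Let $h \in \Lform X$. I claim that
\[
  {(\nimP^\rast_X)}^{-1} \left(\Box^\pp {[h > 1]^\rast_\sub}\right) = [h > 1]^\rast .
\]
Since the sets $[h > 1]^\rast$ are open in $\Pred^\rast X$, this claim is enough.

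To prove the claim, fix $P \in \Pred^\rast X$. Then $\nimP^\rast_X (P) \in \Box {[h > 1]^\rast_\sub}$ holds exactly when every $F \in \nimP^\rast_X (P)$ satisfies $F (h) > 1$. That in turn is equivalent to $\min_{F \in \nimP^\rast_X (P)} F (h) > 1$, because $\nimP^\rast_X (P)$ is non-empty and compact, so the minimum is attained (Lemma and Definition~\ref{lemdef:minP}). In other words, the condition reads $\minP_X (\nimP^\rast_X (P)) (h) > 1$. By Lemma~\ref{lemma:nimP:minP}, the left-hand side equals $P (h)$, so the condition is exactly $P \in [h > 1]^\rast$.

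The only substantive input is Lemma~\ref{lemma:nimP:minP}. Without it we would only get the inclusion in one direction, coming from $F \geq P$. Everything else is an unfolding of definitions.
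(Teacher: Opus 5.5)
Your proof is correct and takes the same route as the paper. You reduce to the base $\Box^\pp {[h > 1]^\rast_\sub}$ of Lemma~\ref{lemma:QBox:lc}, identify its inverse image with ${(\nimP^\rast_X)}^{-1} (\minP_X^{-1} ([h > 1]^\rast))$ (unfolding the last clause of Lemma and Definition~\ref{lemdef:minP} rather than citing it), and conclude with $\minP_X \circ \nimP^\rast_X = \identity {\Pred^\rast X}$ from Lemma~\ref{lemma:nimP:minP}.
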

\begin{proof}
  Using Lemma~\ref{lemma:QBox:lc}, it suffices to show that
  ${(\nimP^\rast_X)}^{-1} (\Box^\pp {[h > 1]^\rast_\sub})$ is open in
  $\Pred^\rast X$ for every $h \in \Lform X$.  Since $\nimP^\rast_X$
  takes its values in $\QPred^\rast_\sub X$ by
  Lemma~\ref{lemdef:nimP}, this set is simply
  ${(\nimP^\rast_X)}^{-1} (\Box {[h > 1]^\rast_\sub})$, and that is
  equal to ${(\nimP^\rast_X)}^{-1} (\minP_X^{-1} ([h > r]^\rast))$ by
  Lemma and Definition~\ref{lemdef:minP}.  By
  Lemma~\ref{lemma:nimP:minP},
  $\minP_X \circ \nimP^\rast_X = \identity {\Pred^\rast X}$, so
  ${(\nimP^\rast_X)}^{-1} (\Box^\pp {[h > 1]^\rast_\sub})$ is equal to
  $[h > r]^\rast$.
\end{proof}

We investigate what operations are preserved by $\minP_X$, although we
could proceed directly to our main theorems~\ref{thm:min:nim:retr}
and~\ref{thm:min:nim:iso} at this point.  For every topological cone
$\C$, $\SVc \C$ is a topological cone with zero equal to 
$\upc 0 = \SVc \C$, addition defined by
$Q_1 + Q_2 \eqdef \upc {\{x+y \mid x \in Q_1, y \in Q_2\}}$, scalar
multiplication by $a \cdot Q \eqdef \upc \{a \cdot x \mid x \in Q\}$
(or: $a \cdot Q \eqdef \{a \cdot x \mid x \in Q\}$ if $a > 0$,
$\upc 0$ otherwise).  It is also an inf-semilattice with respect to
its specialization ordering $\supseteq$: the
inf 
$Q_1 \sqcap Q_2 \eqdef \upc \conv {(Q_1 \cup Q_2)}$, and $\sqcap$ is
continuous and distributes over $+$ and $\cdot$
\cite[Theorem~11.1]{Keimel:topcones2}.  This applies to
$\C \eqdef \Pred_\sub X$, by Proposition~\ref{prop:PX:cone}.
\begin{prop}
  \label{prop:Qcvx:cone}
  Let $\B$ be a convex subspace of a topological cone $\C$.  The
  function that maps every $Q \in \SVc \B$ to its upward closure
  $\upc Q$ in $\C$ is an affine topological embedding, allowing us to
  see $\SVc \B$ as a convex subspace of $\SVc \C$ up to affine
  homeomorphism.  If $0 \in \B$, then the zero $\upc 0$ of $\SVc \C$
  belongs to $\SVc \B$ up to that homeomorphism.
\end{prop}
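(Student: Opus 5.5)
We write $i \colon Q \mapsto \upc Q$ (upward closure taken in $\C$) and check, in order, that $i$ is well defined, injective, affine, continuous and open onto its image. The point $0 \in \B$ is handled at the end.

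\emph{Well-definedness.} Let $Q \in \SVc \B$. Since $\B$ carries the subspace topology, $Q$ is compact in $\C$, so $\upc Q$ is compact saturated in $\C$; it is non-empty because $Q$ is.

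\emph{Convexity and injectivity.} For convexity, take $x \geq y$ and $x' \geq y'$ with $y, y' \in Q$. Then $x +_\alpha x' \geq y +_\alpha y'$, because addition and scalar multiplication are monotonic for the specialization preorder (they are continuous). Since $y +_\alpha y' \in Q$, we get $x +_\alpha x' \in \upc Q$. For injectivity, note that the specialization preorder of $\B$ is the restriction of that of $\C$, so $Q$ saturated in $\B$ gives $\upc Q \cap \B = Q$.

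\emph{Affineness.} We must show
\[
\upc\{x+y \mid x\in Q_1, y \in Q_2\} = \upc\{x+y \mid x \in \upc Q_1, y \in \upc Q_2\},
\]
and the analogous identity for scalar multiples. This again follows from monotonicity of the operations. Convex combinations in $\SVc \B$ are computed as in $\SVc \C$ but saturated within $\B$; this is consistent because $\B$ is convex, so $Q_1 +_\alpha Q_2$ formed in $\B$ has the same upward closure in $\C$.

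\emph{Continuity and openness.} Take $U$ open in $\C$. Since $U$ is upward closed, $\upc Q \subseteq U$ iff $Q \subseteq U \cap \B$, so
\[
i^{-1}(\Box^\cvx U) = \Box^\cvx (U \cap \B),
\]
which proves continuity. Conversely, every open subset of $\B$ has the form $U \cap \B$, and
\[
i(\Box^\cvx (U\cap\B)) = \Box^\cvx U \cap \Img i.
\]
Thus $i$ is an embedding.

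\emph{Convexity of the image.} A convex combination of $\upc Q_1$ and $\upc Q_2$ is $\upc(Q_1 +_\alpha Q_2)$, which lies in the image by affineness. So the image is a convex subspace of $\SVc \C$.

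\emph{The zero.} If $0 \in \B$, then $\upc 0 \cap \B$ (upward closure taken in $\B$) is compact, being the upward closure of a point. It is convex, since $0 \leq x, y$ implies $0 = 0 +_\alpha 0 \leq x +_\alpha y$. Its image under $i$ is $\upc 0$ in $\C$.

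\emph{Main obstacle.} The step needing care is affineness, specifically the identity $\upc\{x+y\} = \upc\{x+y \text{ over the closures}\}$ when sums are formed inside $\B$ versus inside $\C$. It depends on convexity of $\B$ (for convex combinations; $\B$ need not be closed under $+$, so only convex combinations are claimed) and on monotonicity of $+$ and $\cdot$ in the specialization preorder.
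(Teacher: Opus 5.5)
Your proof is correct and follows essentially the same route as the paper. Both use injectivity via $\upc Q \cap \B = Q$, the identity $i^{-1}(\Box^\cvx U) = \Box^\cvx (U \cap \B)$ with $U$ open in $\C$ to get the embedding, monotonicity of the cone operations in the specialization preorder to get affineness, and the upward closure of $0$ in $\B$ for the zero. You also verify explicitly that $\upc Q$ is convex, compact and saturated in $\C$, which the paper leaves implicit.
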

\begin{proof}
  Let $i$ be the map $Q \mapsto \upc Q$; $i$ is injective since
  $i (Q_1) = i (Q_2)$ implies
  $Q_1 = i (Q_1) \cap \B = i (Q_2) \cap \B = Q_2$.  For every open
  subset $U$ of $\C$, $i^{-1} (\Box^\cvx U)$ is equal to
  $\{Q \in \SVc \B \mid Q \subseteq U\}$, which is equal
  $\Box^\cvx {(U \cap \B\}}$, a basic open subset of $\SVc \B$.
  Conversely, every basic open subset of $\SVc \B$ is of that form,
  hence equal to the inverse of some basic open subset $\Box^\cvx U$
  of $\C$.  This shows that $i$ is a homeomorphism onto its image.  We
  claim that $i$ is affine; it will follow that the image of $\SVc \B$
  is convex.  Let us write $\upc_\B$ for upward closure in $\B$ and
  $\upc_\C$ for upward closure in $\C$.  For all
  $Q_1, Q_2 \in \SVc \B$ and $a \in [0, 1]$,
  $i (Q_1) +_a i (Q_2) = \upc_\C (\upc_\C {Q_1} +_a \upc_\C {Q_2})$ is
  equal to $\upc_\C {Q_1}$ if $a=1$, to $\upc_\C {Q_2}$ if $a=0$, and
  to $\upc_\C {\{x +_a y \mid x \in Q_1, y \in Q_2\}}$ otherwise,
  because $+_a$ is continuous hence monotonic with respect to the
  specialization ordering.  We compare this to $i (Q_1 +_a Q_2)$.  The
  latter is also equal to $\upc_\C {Q_1}$ if $a=1$ and to
  $\upc_\C {Q_2}$ if $a=0$.  Otherwise, it is equal to
  $\upc_\C {\upc_\B {\{a \cdot x + (1-a) \cdot y \mid x \in Q_1, y \in
      Q_2\}}}$, namely to
  $\upc_\C {\{x +_a y \mid x \in Q_1, y \in Q_2\}}$ because the
  specialization ordering of the subspace $\B$ is the restriction of
  that of $\C$.

  Finally, if $0 \in \B$, then the zero $\upc 0$ is equal to the image
  of the zero of $\SVc \B$ (the upward closure of $0$ in $\B$) under
  $i$.
\end{proof}

In particular, $\SVc {\Pred^\rast_\sub X}$ has all non-empty finite
infima.
$\Pred^\rast X$ also has non-empty finite infima, which are computed
pointwise, see Proposition~\ref{prop:PX:sup}.
\begin{lem}
  \label{lemma:minP:inf}
  Let $\rast$ be nothing, ``$\leq 1$'' or ``$1$'', and $X$ be a
  topological space.  The map $\minP_X$ preserves non-empty finite
  infima from $\SVc {\Pred^\rast_\sub X}$ to $\Pred^\rast X$.
\end{lem}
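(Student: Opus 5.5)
We need to show $\minP_X (\mathcal Q_1 \sqcap \mathcal Q_2) = \minP_X (\mathcal Q_1) \wedge \minP_X (\mathcal Q_2)$ pointwise, for $\mathcal Q_1, \mathcal Q_2 \in \SVc {\Pred^\rast_\sub X}$. By the usual induction this handles all non-empty finite families; the one-element case is trivial. The right-hand side is a pointwise infimum by Proposition~\ref{prop:PX:sup}~(4).

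The first step is to identify the infimum concretely. By Proposition~\ref{prop:Qcvx:cone}, applied with $\B \eqdef \Pred^\rast_\sub X$ inside the topological cone $\C \eqdef \Pred_\sub X$ (Proposition~\ref{prop:PX:cone}), $\SVc \B$ is a convex subspace of $\SVc \C$. I will check that the infimum of $\mathcal Q_1, \mathcal Q_2$ in $\SVc \B$ is $\mathcal Q \eqdef \upc_\B \conv (\mathcal Q_1 \cup \mathcal Q_2)$. Note that $\Pred^\rast_\sub X$ is convex, since sublinearity and (sub)normalization are preserved by convex combinations. By Keimel's theorem, $\upc_\C \conv (\mathcal Q_1 \cup \mathcal Q_2)$ is compact convex in $\C$, and it equals $i (\mathcal Q)$; in particular $\mathcal Q$ lies in $\SVc \B$. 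It is clearly the largest convex saturated set contained in every common lower bound, so it is the infimum. This identification, namely that the infimum taken in the subspace agrees with Keimel's $\sqcap$, is the only delicate point, and Proposition~\ref{prop:Qcvx:cone} is what handles it.

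Next, fix $h \in \Lform X$. Since $\mathcal Q \supseteq \mathcal Q_1 \cup \mathcal Q_2$, we get $\minP_X (\mathcal Q) (h) \leq \min (\minP_X (\mathcal Q_1) (h), \minP_X (\mathcal Q_2) (h))$; equivalently, $\minP_X$ is monotonic.

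For the reverse inequality, take any $F \in \mathcal Q$. Then $F \geq G \eqdef a G_1 + (1-a) G_2$ for some $G_1 \in \mathcal Q_1$, $G_2 \in \mathcal Q_2$ and $a \in [0,1]$; this covers elements of $\conv (\mathcal Q_1 \cup \mathcal Q_2)$, since each such element is a combination of a point of the convex set $\mathcal Q_1$ and a point of the convex set $\mathcal Q_2$. Hence
\[
  F (h) \geq a G_1 (h) + (1-a) G_2 (h) \geq \min (\minP_X (\mathcal Q_1)(h), \minP_X (\mathcal Q_2)(h)).
\]
Taking the minimum over $F \in \mathcal Q$ gives the equality, and completes the proof.
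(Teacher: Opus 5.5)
Your proposal is correct and follows essentially the same route as the paper. Both identify $\mathcal Q_1 \sqcap \mathcal Q_2$ as $\upc \conv (\mathcal Q_1 \cup \mathcal Q_2)$, get one inequality from $\mathcal Q_1 \cup \mathcal Q_2 \subseteq \mathcal Q_1 \sqcap \mathcal Q_2$, and get the other by bounding the value at $h$ of any element lying above a convex combination from below by the smaller of the two minima. The differences are minor. You justify explicitly that Keimel's $\sqcap$ is also the infimum inside the subspace $\SVc {\Pred^\rast_\sub X}$, which the paper takes for granted. You also use convexity of $\mathcal Q_1$ and $\mathcal Q_2$ to reduce to two-term combinations, whereas the paper bounds an arbitrary $n$-term convex combination directly.
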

\begin{proof}
  It suffices to consider binary infima.  Let
  $\mathcal Q_1, \mathcal Q_2 \in \SVc {\Pred^\rast_\sub X}$, and
  $\mathcal Q \eqdef \mathcal Q_1 \cup \mathcal Q_2$, so that
  $\mathcal Q_1 \sqcap \mathcal Q_2 = \upc \conv {\mathcal Q}$.  For
  every $h \in \Lform X$,
  $\minP_X (\upc \conv {\mathcal Q}) (h) = \min_{F \in \conv {\mathcal
      Q}} F (h)$ (since $F$ is monotonic).  This is less than or equal
  to $\min_{F \in \mathcal Q} F (h)$, but also larger than or equal to
  it: for every $F \in \conv {\mathcal Q}$, we write $F$ as
  $\sum_{i=1}^n a_i F_i$ where $\vec a \in \Delta_n$ and each $F_i$ is
  in $\mathcal Q$; then $\sum_{i=1}^n a_i F_i (h)$ must be larger than
  or equal to the least of the values $F_i (h)$,
  $i \in \{1, \cdots, n\}$.  Finally,
  $\min_{F \in \mathcal Q} F (h) = \min (\min_{F \in \mathcal Q_1} F
  (h), \allowbreak \min_{F \in \mathcal Q_2} F (h)) = \inf (\minP_X
  (\mathcal Q_1) (h), \minP_X (\mathcal Q_2) (h))$.
\end{proof}

We turn to preservation of convex combinations.  Given two
semitopological cones $\C$ and $\C'$, and convex subspaces $\B$ of
$\C$ and $\B'$ of $\C'$, we say that $f \colon \B \to \B'$ is affine
if it preserves convex combinations, extending the notion of affine
maps on cones.  When $\B$ and $\B'$ contain the $0$ element of their
respective cones, we call a map $f \colon \B \to \B'$ \emph{strict} if
and only if $f (0)=0$, and we call strict affine maps \emph{linear}.
If $f$ is linear in this sense, then $f (a \cdot x) = a f (x)$ for all
$a \in [0, 1]$ and $x \in \B$, since
$f (a \cdot x) = f (x +_a 0) = f (x) +_a f (0) = a f (x)$.  The notion
agrees with our previous notion of linearity on cones: if $\B=\C$,
then every strict affine map $f \colon \C \to \C'$ is linear in the
usual sense: $f (a \cdot x) = a f (x)$ for all $a \in \Rp$ and
$x \in \C$ (we have just seen this for $a \in [0, 1]$; if $a > 1$,
then
$f (a \cdot x) = a.(1/a) f (a \cdot x) = a f ((1/a) \cdot a \cdot x) =
a f (x)$) and $f (x+y) = f (x) + f (y)$ for all $x, y \in C$ (since
$f (x+y) = 2 f (x +_{1/2} y) = 2 (f (x) +_{1/2} f (y)) = f (x) +
f(y))$).
\begin{lem}
  \label{lemma:minP:+}
  Let $\rast$ be nothing, ``$\leq 1$'' or ``$1$'' and $X$ be a
  topological space.  The map $\minP_X$ is affine from
  $\SVc {\Pred^\rast_\sub X}$ to $\Pred^\rast X$, and linear if
  $\rast$ is nothing or ``$\leq 1$''.
\end{lem}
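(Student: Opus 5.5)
We prove affinity directly from the definition of the convex combination in $\SVc {\Pred^\rast_\sub X}$. By Proposition~\ref{prop:Qcvx:cone}, this space is a convex subspace of $\SVc {\Pred_\sub X}$. For $\mathcal Q_1, \mathcal Q_2 \in \SVc {\Pred^\rast_\sub X}$ and $a \in [0, 1]$, the combination $\mathcal Q_1 +_a \mathcal Q_2$ is $\mathcal Q_1$ if $a=1$, $\mathcal Q_2$ if $a=0$, and $\upc \{F_1 +_a F_2 \mid F_1 \in \mathcal Q_1, F_2 \in \mathcal Q_2\}$ when $0 < a < 1$, with upward closure taken in $\Pred^\rast_\sub X$, which is convex. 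The cases $a \in \{0, 1\}$ are trivial, so we fix $0 < a < 1$ and $h \in \Lform X$.

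Since every $F \in \Pred^\rast_\sub X$ is monotonic in the specialization ordering, which is the pointwise ordering by Proposition~\ref{prop:PX:sup}~(1), taking an upward closure does not change the minimum of $F \mapsto F(h)$. We therefore compute
\[
  \minP_X (\mathcal Q_1 +_a \mathcal Q_2) (h) = \min_{F_1 \in \mathcal Q_1, F_2 \in \mathcal Q_2} \bigl(a F_1 (h) + (1-a) F_2 (h)\bigr).
\]
Because $a$ and $1-a$ are non-negative and the two variables range independently, this minimum splits as $a \min_{F_1 \in \mathcal Q_1} F_1 (h) + (1-a) \min_{F_2 \in \mathcal Q_2} F_2 (h)$. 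Both minima are attained by Lemma and Definition~\ref{lemdef:minP}, and the splitting remains valid with the value $\infty$, using $0.\infty=0$ only when $a \in \{0,1\}$, which we excluded. Hence $\minP_X (\mathcal Q_1 +_a \mathcal Q_2) = \minP_X (\mathcal Q_1) +_a \minP_X (\mathcal Q_2)$ pointwise, which is the convex combination in the cone $\Pred X$ (Proposition~\ref{prop:PX:cone}).

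For linearity when $\rast$ is nothing or ``$\leq 1$'', the zero prevision lies in $\Pred^\rast_\sub X$. By Proposition~\ref{prop:Qcvx:cone}, the zero of $\SVc {\Pred^\rast_\sub X}$ is $\upc 0$, which here is all of $\Pred^\rast_\sub X$. Since $0$ belongs to it, $\minP_X (\upc 0)(h) = 0$ for every $h$, so $\minP_X$ is strict, and strict affine maps are linear by the discussion preceding the lemma. When $\rast$ is ``$1$'' there is no zero, and affinity is the only claim.

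The one delicate point is the middle step: the identification of the convex combination in $\SVc {\Pred^\rast_\sub X}$ as an upward closure of pointwise combinations, and the fact that this upward closure does not affect the minimum. Both follow from the proof of Proposition~\ref{prop:Qcvx:cone} together with monotonicity.
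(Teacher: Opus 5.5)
Your proof is correct and follows the paper's argument essentially verbatim. You write $\mathcal Q_1 +_a \mathcal Q_2$ as the upward closure of the combinations $F_1 +_a F_2$, split the minimum over the two independent variables, and get strictness from the fact that the zero $\Pred^\rast_\sub X$ of $\SVc {\Pred^\rast_\sub X}$ contains the zero prevision. Your separate treatment of $a \in \{0,1\}$ is harmless but unnecessary: the paper's formula works uniformly for all $a \in [0,1]$, since with $0.\infty = 0$ the term $a\,F_1(h)$ simply vanishes when $a=0$.
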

\begin{proof}
  For all $\mathcal Q_1, \mathcal Q_2 \in \SVc {\Pred_\sub X}$, for
  every $a \in [0, 1]$, for every $h \in \Lform X$,
  $\minP_X (\mathcal Q_1 +_a \mathcal Q_2) (h) = \minP_X (\{F \in
  \Pred_\sub X \mid \exists F_1 \in \mathcal Q_1, F_2 \in \mathcal
  Q_2, F \geq F_1 +_a F_2\}) (h) = \min_{F_1 \in \mathcal Q_1, F_2 \in
    \mathcal Q_2} (a\,F_1 (h) + (1-a)\, F_2 (h)) = a\, \min_{F_1 \in
    \mathcal Q_1} F_1 (h) + (1-a) \, \min_{F_2 \in \mathcal Q_2} F_2
  (h)$, and this is the value of
  $\minP_X (\mathcal Q_1) +_a \minP_X (\mathcal Q_2)$ applied to $h$.

  When $\rast$ is nothing or ``$\leq 1$'', $\minP_X$ maps the zero
  element $\Pred^\rast_\sub X$ of $\SVc {\Pred^\rast_\sub X}$ to the
  prevision $h \mapsto \min_{P \in \Pred^\rast_\sub X} P (h)$, namely
  the zero prevision, since we can take $P \eqdef 0$ in the min.
  Therefore $\minP_X$ is strict, and hence, linear.
\end{proof}

\begin{thm}
  \label{thm:min:nim:retr}
  Let $\rast$ be nothing, ``$\leq 1$'' or ``$1$''.  For every
  topological space $X$, $\Pred^\rast X$ is a retract of
  $\SVc {\Pred^\rast_\sub X}$ through $\minP_X$ and $\nimP_X$.  The
  retraction $\minP_X$ is affine (linear if $\rast$ is nothing or
  ``$\leq 1$'') and preserves non-empty finite infima.
\end{thm}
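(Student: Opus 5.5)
The proof will simply assemble the results established above.

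First, the maps. By Lemma and Definition~\ref{lemdef:minP}, $\minP_X$ is continuous from $\SV {\Pred^\rast_\sub X}$ to $\Pred^\rast X$. Its restriction to the subspace $\SVc {\Pred^\rast_\sub X}$ is therefore continuous as well.

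By Lemma and Definition~\ref{lemdef:nimP}~(1) and Lemma~\ref{lemma:QBox:Qcvx}, $\nimP^\rast_X$ takes its values in $\QPred^\rast_\sub X$. This space consists of non-empty convex compact saturated sets, so $\nimP^\rast_X$ lands in $\SVc {\Pred^\rast_\sub X}$. By Corollary~\ref{corl:nimP:cont}, $\nimP^\rast_X$ is continuous into $\SV {\Pred^\rast_\sub X}$. Since $\SVc {\Pred^\rast_\sub X}$ carries the subspace topology, it is also continuous as a map into $\SVc {\Pred^\rast_\sub X}$.

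Next, the retraction identity. Lemma~\ref{lemma:nimP:minP} gives $\minP_X \circ \nimP^\rast_X = \identity {\Pred^\rast X}$. Hence $\Pred^\rast X$ is a retract of $\SVc {\Pred^\rast_\sub X}$, with retraction $\minP_X$ and section $\nimP^\rast_X$.

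Finally, the algebraic properties of the retraction are exactly the content of two earlier lemmas. Lemma~\ref{lemma:minP:+} shows that $\minP_X$ is affine, and linear when $\rast$ is nothing or ``$\leq 1$''. Lemma~\ref{lemma:minP:inf} shows that it preserves non-empty finite infima, computed in $\SVc {\Pred^\rast_\sub X}$ as $\upc \conv$ of unions. Those infima exist by Proposition~\ref{prop:Qcvx:cone} together with Keimel's result on $\SVc \C$ for $\C \eqdef \Pred_\sub X$.

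There is no real obstacle, since every ingredient has already been proved. The only point needing care is the codomain bookkeeping: $\nimP^\rast_X$ was shown continuous into $\SV {\Pred^\rast_\sub X}$, and one must observe that it lands in the convex subspace $\SVc {\Pred^\rast_\sub X}$ before restricting its codomain.
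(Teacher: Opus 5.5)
Your proposal is correct and matches the paper's proof, which assembles exactly the same ingredients: Lemma~\ref{lemma:minP:+}, Lemma~\ref{lemma:minP:inf}, Lemma and Definition~\ref{lemdef:minP}, Corollary~\ref{corl:nimP:cont} and Lemma~\ref{lemma:nimP:minP}. Your extra observation that $\nimP^\rast_X$ lands in $\QPred^\rast_\sub X \subseteq \SVc {\Pred^\rast_\sub X}$ (by Lemma~\ref{lemma:QBox:Qcvx}) makes explicit a codomain restriction that the paper leaves implicit.
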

\begin{proof}
  The map $\minP_X$ is affine (resp.\ linear) by
  Lemma~\ref{lemma:minP:+}, non-empty-finite-inf-preserving by
  Lemma~\ref{lemma:minP:inf} and continuous by Lemma and
  Definition~\ref{lemdef:minP}; $\nimP_X$ is continuous by
  Corollary~\ref{corl:nimP:cont}, and a right inverse to $\minP_X$ by
  Lemma~\ref{lemma:nimP:minP}.
\end{proof}

\begin{thm}[First isomorphism]
  \label{thm:min:nim:iso}
  Let $\rast$ be nothing, ``$\leq 1$'' or ``$1$''.  For every
  topological space $X$, $\QPred^\rast_\sub X$ and $\Pred^\rast X$ are
  homeomorphic through $\minP_X$ and $\nimP^\rast_X$.
\end{thm}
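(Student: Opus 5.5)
The plan is to assemble the homeomorphism from the pieces already established, restricting the retraction of Theorem~\ref{thm:min:nim:retr} to $\QPred^\rast_\sub X$. First, I check that the restriction of $\minP_X$ to $\QPred^\rast_\sub X$ is continuous. By Lemma and Definition~\ref{lemdef:minP}, $\minP_X$ is continuous on $\SV {\Pred^\rast_\sub X}$, and $\QPred^\rast_\sub X$ carries the subspace topology, so this is immediate. Second, Corollary~\ref{corl:nimP:cont} gives that $\nimP^\rast_X$ is continuous from $\Pred^\rast X$ to $\QPred^\rast_\sub X$; this uses Lemma and Definition~\ref{lemdef:nimP}~(1) to know that its values really lie in $\QPred^\rast_\sub X$.

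It then remains to verify that the two maps are mutually inverse. In one direction, Lemma~\ref{lemma:nimP:minP} gives $\minP_X \circ \nimP^\rast_X = \identity {\Pred^\rast X}$. In the other, for every $\mathcal Q \in \QPred^\rast_\sub X$, Lemma and Definition~\ref{lemdef:nimP}~(2) states that $\mathcal Q = \nimP^\rast_X (\minP_X (\mathcal Q))$, since equality holds exactly on $\QPred^\rast_\sub X$. Hence $\nimP^\rast_X \circ \minP_X$ is the identity on $\QPred^\rast_\sub X$. Two continuous maps that are inverse to each other form a homeomorphism, which concludes the argument.

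There is no real obstacle here, since all the work was done earlier. The heaviest ingredients are the compactness of $\nimP^\rast_X (P)$ and the basis description in Lemma~\ref{lemma:QBox:lc}, which underlies the continuity of $\nimP^\rast_X$. The only point requiring care is to state continuity of $\nimP^\rast_X$ with codomain $\QPred^\rast_\sub X$ rather than $\SV {\Pred^\rast_\sub X}$. This is harmless, because continuity into a subspace is equivalent to continuity into the ambient space when the image lies in the subspace.
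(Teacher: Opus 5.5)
Your proposal is correct and follows the paper's proof. The restriction of $\minP_X$ is continuous by Lemma and Definition~\ref{lemdef:minP}, $\nimP^\rast_X$ is continuous by Corollary~\ref{corl:nimP:cont}, and the two maps are mutually inverse by Lemma and Definition~\ref{lemdef:nimP}~(2) together with Lemma~\ref{lemma:nimP:minP}.
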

\begin{proof}
  By Lemma and Definition~\ref{lemdef:minP}, $\minP_X$ defines (by
  restriction) a continuous map from $\QPred^\rast_\sub X$ to
  $\Pred^\rast X$; $\nimP_X$ is continuous by
  Corollary~\ref{corl:nimP:cont}.  They are inverse of each other by
  Lemma and Definition~\ref{lemdef:nimP}, item~2, and by
  Lemma~\ref{lemma:nimP:minP}.
\end{proof}

\begin{remark}
  \label{rem:first:double:DN}
  If $X$ is $\AN_\rast$-friendly, then
  $\Pred^\rast_{\sub} X \cong \HVc {\Pred^\rast_{\lin} X}$
  \cite[Theorem~4.11]{JGL-mscs16,JGL:iso:err}.  Then
  Theorem~\ref{thm:min:nim:iso} expresses $\Pred^\rast X$ as a kind of
  double hyperspace over $\Pred^\rast_\lin X$ ($\cong \Val_\rast X$),
  as we hinted in the introduction.
\end{remark}

\begin{remark}
  \label{rem:min:nim}
  Theorem~\ref{thm:min:nim:iso} (or already
  Lemma~\ref{lemma:nimP:minP}) implies that any prevision in
  $\Pred^\rast X$ can be written as a pointwise infimum of sublinear
  previsions in $\Pred^\rast_\sub X$.  This applies with no constraint
  on $X$.
\end{remark}

\begin{remark}
  \label{rem:QBox}
  We have seen that
  $\QPred^\rast_\sub X \subseteq \SVc {\Pred^\rast_\sub X}$.  The
  inclusion is strict in general: any of the following two examples
  will provide counterexamples.
\end{remark}

\begin{example}
  \label{ex:QBox:inf}
  A consequence of Theorem~\ref{thm:min:nim:iso} is that
  $\QPred^\rast_\sub X$ has all non-empty finite infima.  But they can
  be very different than in the enclosing space
  $\SVc {\Pred^\rast_\sub X}$.  
  Let $X \eqdef \{0, 1\}$ with the discrete topology,
  $\Lambda (h) \eqdef \frac 1 2 (h (0) + h (1))$,
  $P_1 (h) \eqdef \max (h (0), \Lambda (h))$,
  $P_2 (h) \eqdef \max (h (1), \Lambda (h))$ for every
  $h \in \Lform X$.  Then:
  \begin{enumerate}
  \item $P_1$, $P_2$ are normalized (sublinear) previsions on $X$ and
    $\Lambda$ is the infimum of $P_1$ and $P_2$ in $\Pred^\rast X$.
  \item We claim that there is no $a \in [0, 1]$ such that
    $\Lambda \geq a \cdot P_1 + (1-a) \cdot P_2$.  In order to see
    this, for every prevision $P$ on $X$, let
    $\underline P \colon [0, 1] \mapsto [0, 1]$ be defined by letting
    $\underline P (t)$ be $P$ applied to the function $h_t$ mapping
    $0$ to $t$ and $1$ to $1-t$.  If
    $\Lambda \geq a \cdot P_1 + (1-a) \cdot P_2$, then
    $\underline\Lambda \geq a \cdot \underline P_1 + (1-a) \cdot
    \underline P_2$.  For every $t \in [0, 1]$,
    $\underline \Lambda (t) = \frac 1 2$,
    $\underline P_1 (t) = \max (t, 1/2)$ and
    $\underline P_2 (t) = \max (1-t, 1/2)$.  When
    $t = \frac 1 2 + \epsilon$ with $\epsilon > 0$,
    $a \underline P_1 (t) + (1-a) \underline P_2 (t) = a (\frac 1 2 +
    \epsilon) + (1-a) \frac 1 2 = \frac 1 2 + a \epsilon$, which is
    smaller than or equal to $\underline \Lambda (t)$ only when $a=0$.
    When $t = \frac 1 2 - \epsilon$ with $\epsilon > 0$,
    $a \underline P_1 (t) + (1-a) \underline P_2 (t) = a \frac 1 2 +
    (1-a) (\frac 1 2 + \epsilon) = \frac 1 2 + (1-a) \epsilon$, which
    is smaller than or equal to $\underline \Lambda (t)$ only when
    $a=1$.  But we cannot have $a=0$ and $a=1$ at the same time.
  \item Hence
    $\nimP^\rast_X (\inf (P_1, P_2)) \neq \nimP^\rast_X (P_1) \sqcap
    \nimP^\rast_X (P_2)$, so $\nimP^\rast_X$ does not map binary
    infima in $\Pred^\rast X$ to binary infima in
    $\SVc {\Pred^\rast_\sub X}$.  Indeed, first, we recall that
    $\nimP^\rast_X (P_1) \sqcap \nimP^\rast_X (P_2) = \upc \conv
    {(\nimP^\rast_X (P_1) \cup \nimP^\rast_X (P_2))}$.  Second, since
    $\Lambda$ is linear hence sublinear (and normalized),
    $\Lambda \in \nimP^\rast_X (\inf (P_1, P_2))$.  If
    $\Lambda \in \upc \conv {(\nimP^\rast_X (P_1) \cup \nimP^\rast_X
      (P_2))}$, $\Lambda$ would be larger than or equal to
    $a \cdot F_1 + (1-a) \cdot F_2$ for some sublinear previsions
    $F_1 \geq P_1$ and $F_2 \geq P_2$ and some $a \in [0, 1]$.  Then
    $\Lambda \geq a \cdot P_1 + (1-a) \cdot P_2$, contradicting~(2).
  \item The infimum of $\nimP^\rast_X (P_1)$ and $\nimP^\rast_X (P_2)$
    in $\QPred^\rast_\sub X$ is
    $\nimP^\rast_X (\inf (P_1, \allowbreak P_2))$, and by (3) it
    differs from the infimum
    $\nimP^\rast_X (P_1) \sqcap \nimP^\rast_X (P_2)$ taken in
    $\SVc {\Pred^\rast_\sub X}$.  \qed
  \end{enumerate}
\end{example}

\begin{example}
  \label{ex:QBox:+}
  Another consequence of Theorem~\ref{thm:min:nim:iso} is that
  $\QPred_\sub X$ is a cone, by transporting the cone structure over
  from $\Pred X$.  But addition can be very different than in the
  enclosing cone $\SVc {\Pred_\sub X}$, and $\nimP_X$ is not in
  general linear from $\Pred X$ to $\SVc {\Pred_\sub X}$.  In order to
  see this, let $X \eqdef \{0, 1\}$ with the discrete topology,
  $\Lambda (h) \eqdef h (0)+h(1)$,
  $P_1 (h) \eqdef \max (h (0), h (1))$ and
  $P_2 (h) \eqdef \min (2 h (0) + h (1), h (0) + 2 h (1))$ for every
  $h \in \Lform X$.  Then:
  \begin{enumerate}
  \item $P_1$ is a sublinear prevision and $P_2$ is a superlinear
    prevision.
  \item For every sublinear prevision $F_2 \geq P_2$, there is a
    linear prevision $\Lambda_2$ such that
    $F_2 \geq \Lambda_2 \geq P_2$, by Keimel's sandwich theorem.
    Using the same notation as in Example~\ref{ex:QBox:inf},
    $\underline \Lambda_2 (t) = a t + b (1-t)$ for every
    $t \in [0, 1]$, where $a \eqdef \underline \Lambda_2 (1)$ and
    $b \eqdef \underline \Lambda_2 (0)$.  Then $a + b \geq 3$, since
    $\underline \Lambda_2 (1/2) \geq \underline P_2 (1/2)$.
  \item In the situation of (2), it is impossible that
    $\underline P_1 + \underline F_2 \leq 2+\epsilon$, for any
    $\epsilon \in [0, \frac 1 2[$.  Otherwise, $F_2 \geq \Lambda_2$
    would imply
    $\underline P_1 + \underline \Lambda_2 \leq 2+\epsilon$.  We have
    $\underline P_1 (t) = \max (t, 1-t)$ for every $t \in [0, 1]$.
    Therefore $\max (t, 1-t) + a t + b (1-t) \leq 2+\epsilon$ for
    every $t \in [0, 1]$.  For $t \eqdef 0$, this entails
    $1+b \leq 2+\epsilon$, hence $b \leq 1+\epsilon$.  For
    $t \eqdef 1$, this entails $1+a \leq 2+\epsilon$, hence
    $a \leq 1+\epsilon$.  By summing the two,
    $a+b \leq 2 + 2 \epsilon$.  But $a+b \geq 3$ and $\epsilon < 1/2$.
  \item $\Lambda$ is in $\nimP_X (P_1 +_{\frac 1 2} P_2)$ but not in
    $\nimP_X (P_1) +_{\frac 1 2} \nimP_X (P_2)$.  Indeed, first,
    $\Lambda = P_1 +_{\frac 1 2} P_2$, as one checks by distinguishing
    the cases $h (0) < h (1)$ and $h (0) \geq h (1)$.
%
    If $\Lambda$ were in $\nimP_X (P_1) +_{\frac 1 2} \nimP_X (P_2)$,
    then there would be two sublinear previsions $F_1 \geq P_1$ and
    $F_2 \geq P_2$ such that
    $\Lambda \geq \frac 1 2 F_1 + \frac 1 2 F_2$.  In particular,
    $\Lambda \geq \frac 1 2 P_1 + \frac 1 2 F_2$, so
    $\underline \Lambda (t) = 1 \geq \frac 1 2 \underline P_1 (t) +
    \frac 1 2 \underline F_2 (t)$ for every $t \in [0, 1]$.  This
    contradicts~(3) (with $\epsilon \eqdef 0$).
  \item Hence $\nimP_X$ is not affine from $\Pred X$ to
    $\SVc {\Pred_\sub X}$.  Additionally, convex combinations of
    elements of $\QPred_\sub X$, as computed in $\SVc {\Pred_\sub X}$,
    may fail to be in $\QPred_\sub X$.  In particular,
    $\QPred^\rast_\sub X$ is a proper subspace of
    $\SVc {\Pred^\rast_\sub X}$.  In order to see this, let
    $\mathcal Q_1 \eqdef \nimP_X (P_1)$ and
    $\mathcal Q_2 \eqdef \nimP_X (P_2)$.  Then
    $\mathcal Q_1 +_{\frac 1 2} \mathcal Q_2 \neq \nimP_X (P_1
    +_{\frac 1 2} P_2)$ by (4).  But, if
    $\mathcal Q_1 +_{\frac 1 2} \mathcal Q_2$ were in $\QPred_\sub X$,
    we would have
    $\mathcal Q_1 +_{\frac 1 2} \mathcal Q_2 = \nimP_X (\minP_X
    (\mathcal Q_1 +_{\frac 1 2} \mathcal Q_2))$ by
    Lemma~\ref{lemma:nimP:minP}, which is equal to
    $\nimP_X (\minP_X (\mathcal Q_1) +_{\frac 1 2} \minP_X (\mathcal
    Q_2))$ by Lemma~\ref{lemma:minP:+}, and therefore to
    $\nimP_X (P_1 +_{\frac 1 2} P_2)$ by Lemma~\ref{lemma:nimP:minP}.
    \qed
  \end{enumerate}
\end{example}


\section{The isomorphism $\Pred^\rast X \cong \CPred^\rast_\super X$}
\label{sec:PX=CPDPX}

Instead of considering convex compact saturated subsets of
$\Pred^\rast_\sub X$, we now consider convex closed subsets of
$\Pred^\rast_\super X$.  The treatment is similar, although some
additional assumptions will be needed.

\begin{lemdef}
  \label{lemdef:supP}
  Let $\rast$ be nothing, ``$\leq 1$'' or ``$1$''.  For every space
  $X$, there is a continuous map
  $\supP_X \colon \HV {\Pred^\rast_\super X} \to \Pred^\rast X$ defined by
  $\supP_X (\mathcal C) (h) \eqdef \sup_{F \in \mathcal C} F (h)$ for
  every $\mathcal C \in \HV {\Pred^\rast_\super X}$ and every
  $h \in \Lform X$.  For every subbasic open set $[h > r]^\rast$ of
  $\Pred^\rast X$,
  $\supP_X^{-1} ([h > r]^\rast) = \Diamond {[h > r]^\rast_\super}$.
\end{lemdef}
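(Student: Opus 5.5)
This proof mirrors Lemma and Definition~\ref{lemdef:minP}, with suprema in place of minima and $\Diamond$ in place of $\Box$. I will proceed in three steps.

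First, I check that for every non-empty closed subset $\mathcal C$ of $\Pred^\rast_\super X$, the map $\supP_X (\mathcal C)$ is in $\Pred^\rast X$. It is a pointwise supremum of a non-empty family of previsions. By Proposition~\ref{prop:PX:sup}~(2), applied in $\Pred^\rast X$, it is therefore a prevision that is lower semicontinuous and positively homogeneous. That proposition also covers the case where $\rast$ is ``$1$'', since the family is non-empty. For completeness, I can also verify directly that normalization is preserved: $\sup_{F \in \mathcal C} F (\one+h) = \sup_{F \in \mathcal C} (1+F(h)) = 1+\sup_{F \in \mathcal C} F (h)$, and similarly with $\leq$ in the subnormalized case. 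Scott-continuity needs no compactness argument here, unlike the $\minP_X$ case, because suprema commute with directed suprema: $\sup_F \dsup_i F (h_i) = \dsup_i \sup_F F (h_i)$. Closedness of $\mathcal C$ plays no role in this step.

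Second, I compute the inverse image of a subbasic open set. We have $\supP_X (\mathcal C) (h) > r$ if and only if there is an $F \in \mathcal C$ with $F (h) > r$. This holds if and only if $\mathcal C$ meets $[h > r]^\rast_\super = [h > r]^\rast \cap \Pred^\rast_\super X$, namely if and only if $\mathcal C \in \Diamond {[h > r]^\rast_\super}$. Since $[h > r]^\rast_\super$ is open in $\Pred^\rast_\super X$, the set $\Diamond {[h > r]^\rast_\super}$ is a subbasic open subset of $\HV {\Pred^\rast_\super X}$. The sets $[h > r]^\rast$ form a subbase of $\Pred^\rast X$, so $\supP_X$ is continuous.

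I expect no real obstacle. The only point needing care is the normalized case, where non-emptiness of $\mathcal C$ is needed to preserve $F (\one+h) = 1+F(h)$. This is exactly why the domain is $\HV$ rather than $\HVz$.
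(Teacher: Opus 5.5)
Your proof is correct and follows the paper's argument. Like the paper, you first check that $\supP_X(\mathcal C)$ is a prevision, subnormalized or normalized as required; you justify this step in more detail, via Proposition~\ref{prop:PX:sup}~(2) and the non-emptiness of $\mathcal C$. You then obtain $\supP_X^{-1}([h > r]^\rast) = \Diamond {[h > r]^\rast_\super}$ from the observation that the supremum exceeds $r$ exactly when some $F \in \mathcal C$ has $F(h) > r$.
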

\begin{proof}
  Clearly, $\supP_X (\mathcal C)$ is a prevision for every
  $\mathcal C \in \Hoare {\Pred^\rast_\super X}$, which is
  subnormalized (resp.\ normalized) if $\rast$ is ``$\leq 1$'' (resp.\
  ``$1$'').  For every $\mathcal C \in \Hoare {\Pred^\rast_\super X}$,
  $\supP_X (\mathcal C) \in [h > r]^\rast$ if and only if
  $\sup_{F \in \mathcal C} F (h) > r$, if and only if $F (h) > r$ for
  some $F \in \mathcal C$, if and only if
  $\mathcal C \in \Diamond {[h > r]^\rast_\super}$: so $\supP_X$ is
  continuous.
\end{proof}
The map $\supP_X$ is defined just like $r_\sub$, but has a different
domain and codomain.

\begin{definition}[$\CPred^\rast_\super X$]
  \label{defn:HDia}
  Let $\rast$ be nothing, ``$\leq 1$'' or ``$1$''.  For every
  topological space $X$, for every $h \in \Lform X$, for every
  $r \in \Rp$, let
  $[h \leq r]^\rast_\super \eqdef \{F \in \Pred^\rast_\super X \mid F (h) \leq
  r\}$, and let $\CPred^\rast_\super X$ be the subspace of
  $\HV {\Pred^\rast_\super X}$ consisting of those intersections of the
  form $\bigcap_{i \in I} [h_i \leq 1]^\rast_\super$ that are non-empty,
  where $I$ is an arbitrary set and where $h_i \in \Lform X$.
\end{definition}

\begin{lem}
  \label{lemma:Hdia:Hcvx}
  Let $\rast$ be nothing, ``$\leq 1$'' or ``$1$'' and $X$ be a
  topological space.  Every element of $\CPred^\rast_\super X$ is
  non-empty, closed and convex.  $\CPred_\super X$ is a subspace of
  $\HVc {\Pred_\super X}$.
\end{lem}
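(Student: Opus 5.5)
I would mirror the proof of Lemma~\ref{lemma:QBox:Qcvx}.
Let $\mathcal C \eqdef \bigcap_{i \in I} [h_i \leq 1]^\rast_\super$ be an element of $\CPred^\rast_\super X$. Non-emptiness holds by definition, so only closedness and convexity need to be shown.

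\emph{Closedness.} Each set $[h_i \leq 1]^\rast_\super$ is the complement in $\Pred^\rast_\super X$ of the subbasic open set $[h_i > 1]^\rast_\super$, hence is closed. An arbitrary intersection of closed sets is closed, so $\mathcal C$ is closed. The empty intersection ($I = \emptyset$) gives the whole space $\Pred^\rast_\super X$, which is closed and non-empty whenever the element exists; this case is harmless.

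\emph{Convexity.} It suffices to show that each $[h_i \leq 1]^\rast_\super$ is convex, since intersections of convex sets are convex. Here convexity is understood in the convex subspace $\Pred^\rast_\super X$ of the cone $\Pred X$ (Proposition~\ref{prop:PX:cone}). I would first observe that $\Pred^\rast_\super X$ is closed under convex combinations: pointwise convex combinations of superlinear previsions are superlinear, and the (sub)normalization conditions are preserved when the coefficients sum to $1$. Then for $F, F' \in [h_i \leq 1]^\rast_\super$ and $a \in [0,1]$, we get $a F(h_i) + (1-a) F'(h_i) \leq a + (1-a) = 1$. Hence $F +_a F' \in [h_i \leq 1]^\rast_\super$.

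\emph{Topology.} The topology of $\CPred_\super X$ is by definition the subspace topology induced from $\HV {\Pred_\super X}$. Since every element is non-empty, closed and convex, $\CPred_\super X$ is included in $\HVc {\Pred_\super X}$, which itself carries the subspace topology from $\HV {\Pred_\super X}$. By transitivity of subspace topologies, $\CPred_\super X$ is a subspace of $\HVc {\Pred_\super X}$.

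There is no real obstacle here. The only point requiring a moment's care is checking that $\Pred^\rast_\super X$ is a convex subset of $\Pred X$, so that convexity of its subsets makes sense; this is routine, because convex combinations commute with evaluation at $h$ and at $\one + h$.
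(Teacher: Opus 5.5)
Your proof is correct and takes the paper's route: each $[h_i \leq 1]^\rast_\super$ is closed as the complement of the subbasic open set $[h_i > 1]^\rast_\super$ and convex because evaluation at $h_i$ preserves convex combinations, both properties pass to intersections, and the subspace claim follows by transitivity of subspace topologies. You also check explicitly that $\Pred^\rast_\super X$ is closed under convex combinations; the paper leaves this implicit, and it is the right detail to verify.
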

\begin{proof}
  For all $h \in \Lform X$ and $r \in \Rp$, $[h \leq r]^\rast_\super$
  is the complement of
  $[h > r]^\rast_\super$.
  As such, $[h \leq r]^\rast_\super$ is closed and convex, so every
  intersection of such sets is closed and convex.  Since we only keep
  those that are non-empty in $\CPred^\rast_\super X$, they are all in
  $\HVc {\Pred^\rast_\super X}$.  Finally, the topology of
  $\CPred_\super X$ was defined as the subspace topology arising from
  $\HV {\Pred_\super X}$, hence also from $\HVc {\Pred_\super X}$.
\end{proof}


\begin{lem}
  \label{lemma:qusP:0}
  Let $\rast$ be nothing, ``$\leq 1$'' or ``$1$'' and $X$ be a
  topological space, which we assume compact and non-empty if $\rast$
  is ``$1$''.  Let $P$ be a function from $\Lform X$ to $\creal$,
  which we assume monotonic, positively homogenous and such that
  $P (\one)=1$ in the case that $\rast$ is ``$1$''.  Then
  $\{F \in \Pred^\rast_\super X \mid F \leq P\} \in \CPred^\rast_\super
  X$.
\end{lem}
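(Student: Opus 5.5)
We must show that $\mathcal C \eqdef \{F \in \Pred^\rast_\super X \mid F \leq P\}$ is non-empty and can be written as an intersection $\bigcap_{i \in I} [h_i \leq 1]^\rast_\super$.

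\emph{The intersection form.} We write
\[
  \mathcal C = \bigcap_{h \in \Lform X,\, r \in \Rp \diff \{0\},\, r > P (h)} [(1/r) \cdot h \leq 1]^\rast_\super .
\]
For the inclusion from left to right, let $F \leq P$ and $r > P (h)$. Then $F (h) \leq P (h) < r$, so $F ((1/r) \cdot h) \leq 1$ by positive homogeneity of $F$.

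For the reverse inclusion, let $F$ lie in the right-hand side and fix $h$. For every real $r > P (h)$ with $r > 0$ we get $F (h) \leq r$; taking the infimum over such $r$ yields $F (h) \leq P (h)$. If $P (h) = \infty$ there is nothing to prove.

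One subtlety concerns $P (h) = 0$. Every $r > 0$ is admissible, so the infimum over $r$ gives $F (h) \leq 0$, which is what we want.

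Only positive homogeneity of $F$ is used here, and no property of $P$. The assumptions that $P$ is monotonic and positively homogeneous are needed only for non-emptiness.

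\emph{Non-emptiness.} This splits by cases on $\rast$.

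When $\rast$ is nothing or ``$\leq 1$'', the zero prevision is superlinear and subnormalized. Since $P$ takes values in $\creal$, the zero prevision lies below $P$, so it belongs to $\mathcal C$.

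When $\rast$ is ``$1$'', $X$ is compact and non-empty. We take $F (h) \eqdef \min_{x \in X} h (x)$, which lies in $\Pred^1_\super X$ by Lemma~\ref{lemma:game}. Since $h \geq \min_{x \in X} h (x) \cdot \one$ pointwise, monotonicity and positive homogeneity of $P$ together with $P (\one) = 1$ give
\[
  P (h) \geq \min_{x \in X} h (x) \cdot P (\one) = F (h),
\]
which is the argument of Proposition~\ref{prop:PX:sup}~(3). Hence $F \in \mathcal C$.

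\emph{Conclusion.} The set $\mathcal C$ is a non-empty intersection of sets of the form $[h_i \leq 1]^\rast_\super$, so it belongs to $\CPred^\rast_\super X$ by Definition~\ref{defn:HDia}. By Lemma~\ref{lemma:Hdia:Hcvx} it is then also closed and convex, though the definition does not require this separately.

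No step is a real obstacle. The point needing the most care is the normalized case, where compactness of $X$ guarantees that the minimum defining $F$ exists and that $F$ is Scott-continuous. This is exactly why the hypotheses that $X$ is compact and non-empty and that $P (\one) = 1$ are imposed when $\rast$ is ``$1$''.
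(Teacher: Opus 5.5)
Your proof is correct and follows the paper's argument. You use the same representation of $\{F \in \Pred^\rast_\super X \mid F \leq P\}$ as $\bigcap_{P(h) < r < \infty} [(1/r) \cdot h \leq 1]^\rast_\super$, and you get non-emptiness from the least element of $\Pred^\rast_\super X$ given by Proposition~\ref{prop:PX:sup}~(3). You also usefully make explicit that the comparison with $P$ needs only monotonicity, positive homogeneity and $P(\one)=1$, since $P$ need not be a prevision.

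One tiny point, which the paper shares: when $\min_{x \in X} h(x) = \infty$, positive homogeneity cannot be applied to that scalar. Argue instead that $P(h) \geq P(a \cdot \one) = a$ for every $a \in \Rp$ with $a \leq \min_{x \in X} h(x)$, and take the supremum over such $a$.
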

\begin{proof}
  Let
  $\mathcal C \eqdef \{F \in \Pred^\rast_\super X \mid F \leq P\}$.
  $\mathcal C$ is equal to
  $\bigcap_{h \in \Lform X, P (h) < r < \infty} [(1/r) \cdot h \leq
  1]^\rast_\super$.  Indeed, for every $F \in \mathcal C$, for every
  $h \in \Lform X$, for every $r \in \Rp$ such that $P (h) < r$,
  $F^- (h) \leq P (h) < r$, so $F^- ((1/r) \cdot h) < 1$.  Conversely,
  if $F^- ((1/r) \cdot h) \leq 1$ for every $r \in \Rp$ such that
  $P (h) < r$, then $F^- (h) \leq r$ for every such $r$, hence
  $F^- (h) \leq P (h)$; since $h$ is arbitrary, $F^- \leq P$, so
  $F^- \in \mathcal C$.

  It remains to verify that $\mathcal C$ is non-empty:
  $\Pred^\rast_\super X$ has a least element by
  Proposition~\ref{prop:PX:sup}~(3), and that least element must be
  below $P$, hence in $\mathcal C$.
\end{proof}

\begin{corollary}
  \label{corl:qusP}
  Let $\rast$ be nothing, ``$\leq 1$'' or ``$1$'' and $X$ be a
  topological space, which we assume compact and non-empty if $\rast$
  is ``$1$''.  Let $\QE \eqdef \Pred^\rast_\super X$,
  $\CF \eqdef \Pred^\rast_\sub X$, and let us define $F^- \perp F^+$ by
  $F^- \leq F^+$.  Then
  $\Img {^\perp\_} \subseteq \CPred^\rast_\super X$.
\end{corollary}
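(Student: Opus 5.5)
The plan mirrors the proof of Corollary~\ref{corl:nimP}, with Lemma~\ref{lemma:qusP:0} taking the place of Lemma and Definition~\ref{lemdef:nimP}~(1). Let $B \subseteq \Pred^\rast_\sub X$ be arbitrary. By definition,
\[
  {^\perp B} = \{F^- \in \Pred^\rast_\super X \mid \forall F^+ \in B, F^- \leq F^+\}.
\]
I will encode $B$ as a single function $P \colon \Lform X \to \creal$ by setting $P (h) \eqdef \inf_{F^+ \in B} F^+ (h)$ for every $h \in \Lform X$. With this choice, $F^- \leq F^+$ for every $F^+ \in B$ holds if and only if $F^- \leq P$ pointwise. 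Hence ${^\perp B} = \{F^- \in \Pred^\rast_\super X \mid F^- \leq P\}$.

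It then suffices to check the hypotheses of Lemma~\ref{lemma:qusP:0}, which only asks that $P$ be monotonic and positively homogeneous, and not that it be lower semicontinuous. This is convenient, since a pointwise infimum of lower semicontinuous maps need not be lower semicontinuous. Monotonicity holds because each $F^+$ is monotonic. Positive homogeneity holds because each $F^+$ is positively homogeneous: for $a > 0$ one has $\inf_{F^+} aF^+(h) = a \inf_{F^+} F^+ (h)$, and for $a = 0$ we get $P (0) = \inf_{F^+ \in B} 0 = 0$.

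The main obstacle is the case $B = \emptyset$. There $P$ is the constant $\infty$ function, and $P(0) = \infty \neq 0$, so positive homogeneity at $a=0$ fails. I will handle this case directly: ${^\perp \emptyset} = \Pred^\rast_\super X$, which is the empty intersection $\bigcap_{i \in \emptyset} [h_i \leq 1]^\rast_\super$. This set is non-empty because it contains the least element provided by Proposition~\ref{prop:PX:sup}~(3). In fact the proof of Lemma~\ref{lemma:qusP:0} only uses $P(h) < r$ for finite $r$ and the existence of a least element, so it goes through even for $P = \infty$; I would note this rather than rely on it.

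When $\rast$ is ``$1$'' and $B \neq \emptyset$, I also need $P (\one) = 1$. This holds because every $F^+ \in B$ is normalized and therefore satisfies $F^+ (\one) = 1$. The standing assumption that $X$ is compact and non-empty in that case is exactly what Lemma~\ref{lemma:qusP:0} requires. Applying Lemma~\ref{lemma:qusP:0} then gives ${^\perp B} \in \CPred^\rast_\super X$, as claimed.
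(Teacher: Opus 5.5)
Your proposal is correct and is essentially the paper's own proof. The paper also treats $B=\emptyset$ separately, as the empty intersection, which is non-empty by Proposition~\ref{prop:PX:sup}~(3). For non-empty $B$ it likewise applies Lemma~\ref{lemma:qusP:0} to the pointwise infimum $P$, which is monotonic, positively homogeneous, and satisfies $P(\one)=1$ when $\rast$ is ``$1$''.
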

\begin{proof}
  Let $B \subseteq \Pred^\rast_\sub X$.  Then
  ${^\perp B} = \{F^- \in \Pred^\rast_\super X \mid \forall F^+ \in B,
  F^- \leq F^+\}$.  If $B$ is empty, then ${^\perp B}$ is the whole of
  $\Pred^\rast_\super X$, which is the intersection of the empty
  family of sets.
  Since $\Pred^\rast_\super X$ is non-empty (by
  Proposition~\ref{prop:PX:sup}~(3)),
  $\Pred^\rast_\super X \in \CPred^\rast_\super X$.

  If $B$ is non-empty, let $P (h) \eqdef \inf_{F^+ \in B} F^+ (h)$ for
  every $h \in \Lform X$.  $P$ is monotonic and positively homogeneous
  because every $F^+ \in B$ is.  When $\rast$ is ``$1$'', 
  $F^+ (\one)=1$ for every $F^+ \in B$, so (since $B$ is non-empty)
  $P (\one)=1$.  Now
  ${^\perp B} = \{F^- \in \Pred^\rast_\super X \mid F^- \leq P\}$, and
  we conclude by Lemma~\ref{lemma:qusP:0}.
\end{proof}

\begin{lemdef}
  \label{lemdef:qusP}
  Let $\rast$ be nothing, ``$\leq 1$'' or ``$1$'', and $X$ be a
  topological space, which is compact and non-empty if $\rast$ is
  ``$1$''.  There is a map
  $\qusP^\rast_X \colon \Pred^\rast X \to \HV {\Pred^\rast_\super X}$
  defined by
  $\qusP^\rast_X (P) \eqdef \{F \in \Pred^\rast_\super X \mid F \leq
  P\}$, and:
  \begin{enumerate}
  \item $\qusP^\rast_X$ takes its values in $\CPred^\rast_\super X$;
  \item for every $\mathcal C \in \HV {\Pred^\rast_\super X}$,
    $\mathcal C \subseteq \qusP^\rast_X (\supP_X (\mathcal C))$, with
    equality if and only if $\mathcal C \in \CPred^\rast_\super X$;
  \item a set $\mathcal C$ is in $\CPred^\rast_\super X$ if and only if it
    is of the form $\qusP^\rast_X (P)$ for some $P \in \Pred^\rast X$.
  \end{enumerate}
\end{lemdef}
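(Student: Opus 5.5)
The plan mirrors the proof of Lemma and Definition~\ref{lemdef:nimP}; the closed-set side is easier because no compactness argument is needed.

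For item~(1), I apply Lemma~\ref{lemma:qusP:0} to $P \in \Pred^\rast X$. A prevision is monotonic and positively homogeneous, and if $\rast$ is ``$1$'' then $P (\one)=1$. The hypotheses on $X$ (compact and non-empty when $\rast$ is ``$1$'') are exactly those of Lemma~\ref{lemma:qusP:0}. This gives $\qusP^\rast_X (P) \in \CPred^\rast_\super X$. In particular, $\qusP^\rast_X (P)$ is non-empty and closed, so it lies in $\HV {\Pred^\rast_\super X}$, and the map is well-typed.

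For item~(2), let $\mathcal C \in \HV {\Pred^\rast_\super X}$ and $P \eqdef \supP_X (\mathcal C)$, which lies in $\Pred^\rast X$ by Lemma and Definition~\ref{lemdef:supP}. Every $F \in \mathcal C$ satisfies $F \leq P$ pointwise, so $\mathcal C \subseteq \qusP^\rast_X (P)$. If equality holds, then $\mathcal C \in \CPred^\rast_\super X$ by item~(1).

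Conversely, suppose $\mathcal C = \bigcap_{i \in I} [h_i \leq 1]^\rast_\super$ is non-empty, and let $F \in \qusP^\rast_X (P)$. For each $i$ we have $F (h_i) \leq P (h_i) = \sup_{G \in \mathcal C} G (h_i) \leq 1$, because every $G \in \mathcal C$ satisfies $G (h_i) \leq 1$. Hence $F \in \mathcal C$, and so $\qusP^\rast_X (\supP_X (\mathcal C)) \subseteq \mathcal C$, which gives equality.

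Item~(3) is then immediate. If $\mathcal C \in \CPred^\rast_\super X$, item~(2) shows $\mathcal C = \qusP^\rast_X (P)$ with $P \eqdef \supP_X (\mathcal C)$. The converse direction is item~(1).

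The only step needing care is item~(1), and it is already handled by Lemma~\ref{lemma:qusP:0}. That lemma uses Proposition~\ref{prop:PX:sup}~(3) for non-emptiness, which is why compactness and non-emptiness of $X$ are required when $\rast$ is ``$1$''. Everything else is routine bookkeeping with pointwise inequalities.
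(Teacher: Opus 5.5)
Your proposal is correct and follows the paper's proof essentially step for step: item~(1) comes from Lemma~\ref{lemma:qusP:0}, the converse in item~(2) uses the pointwise bound $F (h_i) \leq \supP_X (\mathcal C) (h_i) \leq 1$, and item~(3) combines the first two. Your added remark that item~(1) also makes $\qusP^\rast_X$ well-typed as a map into $\HV {\Pred^\rast_\super X}$ is a small point the paper leaves implicit.
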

\begin{proof}
  (1) This is a direct consequence of Lemma~\ref{lemma:qusP:0}.

  (2) Let $\mathcal C \in \HV {\Pred^\rast_\super X}$.  For every
  $F \in \mathcal C$, for every $h \in \Lform X$,
  $F (h) \leq \sup_{F \in \mathcal C} F (h) = \supP_X (\mathcal C)
  (h)$, so $F \in \qusP^\rast_X (\supP_X (\mathcal C))$.  Hence
  $\mathcal C \subseteq \qusP^\rast_X (\supP_X (\mathcal C))$.

  If equality holds, 
  then
  $\mathcal C\in \CPred^\rast_\super X$ by~(1).
  Conversely, let
  $\mathcal C \eqdef \bigcap_{i \in I} [h_i \leq 1]^\rast_\super$,
  where $h_i \in \Lform X$, and let $P \eqdef \supP_X (\mathcal C)$.
  For all $F \in \qusP^\rast_X (P)$ and $i \in I$,
  $F (h_i) \leq P (h_i) = \sup_{F \in \mathcal C} F (h_i) \leq 1$, so
  $F \in \mathcal C$.  Therefore
  $\qusP^\rast_X (P) \subseteq \mathcal C$ and hence
  $\mathcal C = \qusP^\rast_X (\supP_X (P))$.

  (3) For every $\mathcal C \in \CPred^\rast_\super X$, we can write
  $\mathcal C$ as $\qusP^\rast_X (P)$ where
  $P \eqdef \supP_X (\mathcal C)$, by~(2).  Conversely, every set of
  the form $\qusP^\rast_X (P)$ where $P \in \Pred^\rast X$ is in
  $\CPred^\rast_\super X$ by~(1).
\end{proof}

Let us investigate the subbasic open subsets of
$\CPred^\rast_\super X$.  We need some additional material from
\cite{Keimel:topcones2}.  A semitopological cone $\C$ is \emph{locally
  convex} if and only if every point has a base of convex open
neighborhoods, and \emph{locally convex-compact} if and only if it has
a base of convex compact neighborhoods.  We will call
\emph{convenient} any topological (not just semitopological) cone that
is locally convex, locally convex-compact, and sober as a topological
space.  (See \cite[Chapter~8]{JGL-topology} for sober spaces.)  In a
convenient cone, every convex compact saturated subset has a base of
convex open neighborhoods \cite[Proposition 10.6]{Keimel:topcones2}.

For every core-compact space $X$, $\Lform X$ is a convenient cone: we
have already said that $\Lform X$ is topological in this case; it
is a continuous dcpo, hence sober \cite[Example 3.5]{GLJ:Valg}, and
locally convex and locally convex-compact
\cite[Lemma~6.12]{Keimel:topcones2}.

We will also need the following technical lemma
\cite[Lemma~6]{JGL:iso:err}.  The segment $[g, h]$ denotes the convex
hull $\conv {\{g, h\}}$.
\begin{lem}
  \label{lemma:Hfriendly:1:V}
  Let $X$ be a topological space such that $\Lform X$ is a convenient
  cone.  Let $g_0 \geq g_1 \geq \cdots \geq g_n \geq \cdots$ be a
  descending sequence of elements of $\Lform X$, and
  $h_0 \in \Lform X$.  Let also $W$ be a Scott-open neighborhood of
  $h_0$ in $\Lform X$ such that $[b \cdot g_n, h]$ is included in $W$
  for every $n \in \nat$, every $b > 1$ and every $h \in W$.  There is
  a convex Scott-open neighborhood $V$ of $h_0$ in $\Lform X$ that
  contains every function $b \cdot g_n$ with $n \in \nat$ and $b > 1$
  and that is included in $W$.
\end{lem}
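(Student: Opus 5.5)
The plan is to exhaust the required set of functions by a countable increasing chain of convex compact saturated subsets of $W$. The union of that chain will be the desired $V$.

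\emph{Reduction.} First I reduce the infinitely many points $b \cdot g_n$ to a single sequence. Let $b_k \eqdef 1 + 1/(k+1)$ and $c_k \eqdef b_k \cdot g_k$ for $k \in \nat$. Given $n \in \nat$ and $b > 1$, choose $k \geq n$ with $b_k < b$. Since the sequence is descending, $b \cdot g_n \geq b_k \cdot g_n \geq b_k \cdot g_k = c_k$. Hence any upward closed set containing every $c_k$ contains every $b \cdot g_n$. Note also that every $c_k$ lies in $W$, being an endpoint of the segment $[c_k, h_0] \subseteq W$.

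\emph{A key closure observation.} Let $K$ be a convex subset of $W$ and $k \in \nat$. I claim $\upc \conv (K \cup \{c_k\}) \subseteq W$. Every element of $\conv (K \cup \{c_k\})$ is of the form $a \cdot c_k + (1-a) \cdot h$ with $h \in K$ and $a \in [0,1]$, because $K$ is convex. Such a point lies in the segment $[c_k, h]$, which is included in $W$ by hypothesis. Finally, $W$ is Scott-open, hence upward closed, which gives the claim.

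\emph{An auxiliary fact.} I need: for every convex compact saturated $Q \subseteq U$ with $U$ open, there is a convex compact saturated $Q'$ with $Q \subseteq \interior {Q'}$ and $Q' \subseteq U$. To prove it, I first use \cite[Proposition~10.6]{Keimel:topcones2} to get a convex open $U'$ with $Q \subseteq U' \subseteq U$. By local convex-compactness, every point of $Q$ has a convex compact neighborhood inside $U'$. Compactness of $Q$ yields finitely many of them, $K_1, \ldots, K_m$, whose interiors cover $Q$. I set $Q' \eqdef \upc \conv (K_1 \cup \cdots \cup K_m)$. Since the $K_j$ are convex, this convex hull is the image of $\Delta_m \times K_1 \times \cdots \times K_m$ under a continuous map, hence compact; here I use that $\Lform X$ is a topological cone. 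Its upward closure is therefore compact saturated, and it is convex. It lies in $U'$ by convexity of $U'$ and upward closure of open sets. Lastly, $Q \subseteq \bigcup_j \interior {K_j} \subseteq \interior {Q'}$.

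\emph{Construction.} Let $Q_0 \eqdef \upc \{h_0\}$, which is convex compact saturated and included in $W$. Inductively, apply the auxiliary fact to $Q_k \subseteq W$ to obtain a convex compact saturated $Q'_k \subseteq W$ with $Q_k \subseteq \interior {Q'_k}$. Then let $Q_{k+1} \eqdef \upc \conv (Q'_k \cup \{c_k\})$. Compactness of $Q_{k+1}$ follows as above, since it is the continuous image of $[0,1] \times Q'_k$ under $(a, h) \mapsto c_k +_a h$, followed by upward closure. The key closure observation gives $Q_{k+1} \subseteq W$. We then have the chain
\[
Q_k \subseteq \interior {Q'_k} \subseteq Q'_k \subseteq Q_{k+1}.
\]
Set $V \eqdef \bigcup_k Q_k$. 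By the chain, $V = \bigcup_k \interior {Q'_k}$, so $V$ is open. It is a directed union of convex sets, hence convex. It is included in $W$, contains $h_0 \in Q_0$, and contains each $c_k \in Q_{k+1}$. Being upward closed, it therefore contains all $b \cdot g_n$ by the reduction step.

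The main obstacle is producing an \emph{open} convex set, since convex hulls of open sets need not be open in a cone. This is exactly where convenience enters, through the auxiliary fact that interleaves compact hulls with interiors; that is the step I would check most carefully.
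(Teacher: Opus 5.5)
Your proof is correct. The paper does not prove this lemma itself; it imports it from \cite[Lemma~6]{JGL:iso:err}, so there is no in-paper argument to compare against. What you give is a valid self-contained proof that uses only facts the paper already invokes:
\begin{itemize}
\item joint continuity of addition, for the compactness of your hulls $\upc \conv(\cdots)$ (this is also Keimel's Lemma~4.10, which the paper quotes);
\item local convex-compactness, for the sets $K_j$;
\item local convexity and sobriety, used only through Keimel's Proposition~10.6.
\end{itemize}

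The two steps that carry the weight both check out.
\begin{itemize}
\item The reduction of the family $b \cdot g_n$ to the single sequence $c_k = b_k \cdot g_k$ uses the descending order and upward closure.
\item The interleaving $Q_k \subseteq \interior{Q'_k} \subseteq Q'_k \subseteq Q_{k+1}$ makes the convex union $V$ open.
\end{itemize}
Your description of $\conv(K \cup \{c_k\})$, for non-empty convex $K$, as the union of the segments $[c_k, h]$ with $h \in K$, is also correct. That description is exactly what turns the segment hypothesis on $W$ into $Q_{k+1} \subseteq W$.
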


For every open subset $U$ of $\Pred^\rast_\sub X$, let us abbreviate
$\Diamond U \cap \CPred^\rast_\super X$ as $\Diamond^\pp U$.  For
every natural number $N > n$, we define $\Delta_n [N]$ as the finite
set
$\{\vec b \in \frac 1 N \nat \mid 1 - \frac n N < \sum_{i=1}^n b_i
\leq 1\}$.
\begin{lem}
  \label{lemma:just:Dia:subbase}
  Let $\rast$ be nothing, ``$\leq 1$" or ``$1$", and $X$ be a
  topological space.   If
  $\Lform X$ is a convenient cone, and if $X$ is also compact and
  non-empty when $\rast$ is ``$1$'', then for every $n \geq 1$ and for
  all $h_1, \cdots, h_n \in \Lform X$,
  $\Diamond^\pp {\bigcap_{i=1}^n [h_i > 1]^\rast_\super} = \bigcup_{N
    > n} \bigcap_{\vec b \in \Delta_n [N]} \Diamond^\pp
  {\left[\sum_{i=1}^n b_i h_i > 1\right]^\rast_\super}$.
\end{lem}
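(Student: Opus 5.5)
The inclusion from left to right should follow from superlinearity, and the reverse inclusion from building a superlinear prevision below $P \eqdef \supP_X (\mathcal C)$ as the upper Minkowski functional of a convex open set. For left to right, let $\mathcal C \in \CPred^\rast_\super X$ contain some $F$ with $F (h_i) > 1$ for every $i$. Let $m \eqdef \min_i F (h_i) > 1$, possibly $\infty$. I choose $N > n$ so large that $m (1 - \frac n N) \geq 1$; any $N > n$ works if $m = \infty$. For every $\vec b \in \Delta_n [N]$, superlinearity and positive homogeneity give $F (\sum_i b_i h_i) \geq \sum_i b_i F (h_i) \geq m \sum_i b_i > m (1-\frac n N) \geq 1$. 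The strict inequality uses the definition of $\Delta_n [N]$, and some care is needed when $m=\infty$ or some $b_i = 0$. Hence $F \in [\sum_i b_i h_i > 1]^\rast_\super$ for every $\vec b \in \Delta_n [N]$, so $\mathcal C$ lies in the intersection for this $N$.

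For the converse, fix $N > n$ and $\mathcal C$ in $\bigcap_{\vec b \in \Delta_n [N]} \Diamond^\pp {[\sum_i b_i h_i > 1]^\rast_\super}$, and let $P \eqdef \supP_X (\mathcal C)$. By Lemma and Definition~\ref{lemdef:qusP}~(2), $\mathcal C = \qusP^\rast_X (P)$. The hypothesis says exactly that $P (\sum_i b_i h_i) > 1$ for every $\vec b \in \Delta_n [N]$, that is, all those finitely many grid points lie in the Scott-open set $\mathcal W \eqdef P^{-1} (]1, \infty])$.

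First I show that $\conv \{h_1, \cdots, h_n\} \subseteq \mathcal W$. Given $\vec a \in \Delta_n$, let $b_i \eqdef \lfloor N a_i \rfloor / N$; then $b_i \leq a_i$ and $\sum_i b_i > 1 - \frac n N$, so $\vec b \in \Delta_n [N]$. Since $\sum_i a_i h_i \geq \sum_i b_i h_i \in \mathcal W$ and $\mathcal W$ is upwards closed, the claim follows. The set $\conv \{h_1, \cdots, h_n\}$ is the image of the simplex under a continuous map, because $\Lform X$ is a topological cone. It is therefore compact and convex, and its upward closure is convex compact saturated. Since $\Lform X$ is convenient, \cite[Proposition 10.6]{Keimel:topcones2} yields a convex open $\mathcal V$ with $\conv \{h_1, \cdots, h_n\} \subseteq \mathcal V \subseteq \mathcal W$. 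The set $\mathcal V$ is proper because $0 \notin \mathcal W$. So $F \eqdef M^{\mathcal V}$ is a superlinear prevision with $F \leq P$, by the order isomorphism $\Open^* \C \cong \Lhom \C$, and with $F (h_i) > 1$ for every $i$.

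When $\rast$ is nothing, $F \in \qusP_X (P) = \mathcal C$, and we are done. When $\rast$ is ``$\leq 1$'', Corollary~\ref{corl:P:subnorm} replaces $F$ by a subnormalized superlinear $F'$ with $F \leq F' \leq P$, which still satisfies $F' (h_i) > 1$.

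I expect the case where $\rast$ is ``$1$'' to be the main obstacle, because we now need a normalized superlinear prevision between $F$ and $P$. My plan there is to enlarge $\mathcal V$ before taking its Minkowski functional, so that it also contains every $b \cdot \one$ with $b > 1$ and is closed under the shadow maps $(\shd \one \alpha)^{-1}$. To get such a convex open set inside $\mathcal W$, I would apply Lemma~\ref{lemma:Hfriendly:1:V} with the constant sequence $g_n \eqdef \one$ and with the shadow $\shadow {\mathcal V} {\one}$ (or $\mathcal W$ itself) as the enclosing set. The segment hypothesis of that lemma should come from the normalization of $P$: $P (b \cdot \one) = b > 1$ together with a convexity estimate along $[b \cdot \one, h]$. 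Then Lemma~\ref{lemma:P:subnorm}~(1), and its analogue Lemma~\ref{lemma:P:supnorm} for the reverse inequality, should force the resulting Minkowski functional to be normalized.
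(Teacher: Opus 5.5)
Your proposal is correct and follows the paper's proof essentially step for step. The forward inclusion uses the same superlinearity estimate. The converse uses the same rounding-down into $\Delta_n[N]$, a convex open set between $\conv\{h_1,\cdots,h_n\}$ and $P^{-1}(]1,\infty])$ obtained from Keimel's Proposition~10.6, its Minkowski functional, and Corollary~\ref{corl:P:subnorm} in the ``$\leq 1$'' case.

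For ``$1$'' the paper does what you sketch. It applies Lemma~\ref{lemma:Hfriendly:1:V} with $g_n \eqdef \one$ and enclosing set $P^{-1}(]1,\infty])$; the shadow of $\mathcal V$ would not work, since it need not contain the functions $b\cdot\one$. It then uses Lemma~\ref{lemma:P:subnorm}~(2), and normalization follows directly from $F(\one)\geq 1$ and superlinearity, without Lemma~\ref{lemma:P:supnorm}.
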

\begin{proof}
  For every
  $\mathcal C \in \Diamond^\pp {\bigcap_{i=1}^n [h_i >
    1]^\rast_\super}$, there is an element $F \in \mathcal C$ such
  that for every $i \in \{1, \cdots, n\}$, $F (h_i) > 1$.  Let $N > n$
  be so large that $F (h_i) > \frac N {N-n}$ for every $i$.  By
  superlinearity, for every $\vec a \in \Delta_n$,
  $F (\sum_{i=1}^n a_i h_i) \geq \sum_{i=1}^n a_i F (h_i) > \frac N
  {N-n}$.  For every $\vec b \in \Delta_n [N]$, let
  $\vec a \eqdef \frac 1 {\sum_{i=1}^n b_i} \vec b$: then
  $\vec a \in \Delta_n$, so
  $F (\sum_{i=1}^n b_i h_i) = (\sum_{i=1}^n b_i) F (\sum_{i=1}^n a_i
  h_i) > (\sum_{i=1}^n b_i) \frac N {N-n} \geq (1 - \frac n N) \frac N
  {N-n} = 1$.

  Conversely, let
  $\mathcal C \in \bigcup_{N > n} \bigcap_{\vec b \in \Delta_n [N]}
  \Diamond^\pp {\left[\sum_{i=1}^n b_i h_i > 1\right]^\rast_\super}$, and let
  $P \eqdef \supP_X (\mathcal C)$.  By Lemma and
  Definition~\ref{lemdef:supP}, $P$ is homogeneous and lower
  semicontinuous.  We claim that, for every $\vec a \in \Delta_n$,
  $P (\sum_{i=1}^n a_i h_i) > 1$.  In order to see this, let
  $b_i \eqdef \frac 1 N \lfloor N a_i \rfloor$ for every
  $i \in \{1, \cdots, n\}$; $\vec b$ is in $\Delta_n [N]$, so
  $P (\sum_{i=1}^n b_i h_i) > 1$, and by monotonicity
  $P (\sum_{i=1}^n a_i h_i) > 1$.

  Hence $P$ maps the set
  $\mathcal K \eqdef \conv \{h_1, \cdots, h_n\}$ to $]1, \infty]$, so
  $\mathcal K \subseteq P^{-1} (]1, \infty])$.  $\mathcal K$ is convex
  and compact, because convex hulls of finite unions of convex compact
  sets are \cite[Lemma~4.10]{Keimel:topcones2}.  Since $\Lform X$ is a
  convenient cone, we can find a convex open neighborhood $\mathcal U$
  of $\mathcal K$ that is included in $P^{-1} (]1, \infty])$.
  Additionally, $\mathcal U$ is proper: $0$ is not in
  $P^{-1} (]1, \infty])$, hence not in $\mathcal U$.  It follows that
  $F \eqdef M^{\mathcal U}$ is a superlinear lower semicontinuous map
  from $\Lform X$ to $\creal$, hence a superlinear prevision on $X$.
  Since
  $\mathcal U = F^{-1} (]1, \infty]) \subseteq P^{-1} (]1, \infty])$,
  $F \leq P$.

  This is all we need to do to build $F$ when $\rast$ is nothing.
  When $\rast$ is ``$\leq 1$'', we replace $F$ by a larger
  \emph{subnormalized} superlinear prevision, still below $P$, using
  Corollary~\ref{corl:P:subnorm}.  When $\rast$ is ``$1$'', we use
  Lemma~\ref{lemma:Hfriendly:1:V} with $g_n \eqdef \one$ for each
  $n \in \nat$, and with $\mathcal W \eqdef P^{-1} (]1, \infty])$.
  This lemma applies, because for every $b > 1$ and every
  $h \in \mathcal W$, every element $ab \cdot \one + (1-a) \cdot h$ of
  $[b \cdot \one, h]$ ($a \in [0, 1]$) is mapped by $P$ to
  $ab + (1-a) P (h)$, which is strictly larger than $1$, since $b > 1$
  and $P (h) > 1$.  Hence there is a convex Scott-open neighborhood
  $\mathcal U$ of $\mathcal K$ in $\Lform X$ that contains every
  function $b \cdot \one$ with $b > 1$ and that is included in
  $\mathcal W$.  Using Lemma~\ref{lemma:P:subnorm}~(2), we obtain an
  $F \in \Pred^{\leq 1}_\super X$ such that $F \leq P$ and
  $\mathcal U \subseteq F^{-1} (]1, \infty])$.  In addition,
  $b \cdot \one$ is in $\mathcal U$ for every $b > 1$, so
  $F (b \cdot \one) > 1$, namely $F (\one) > 1/b$ for every $b > 1$.
  Therefore $F (\one) \geq 1$.  Since $F$ is superlinear, for every
  $h \in \Lform X$, $F (\one+h) \geq F (\one) + F (h) \geq 1+F (h)$.
  Hence $F (\one+h) = 1+F(h)$, since $F$ is subnormalized, and
  therefore $F \in \Pred^1_\super X$.

  In any case, $F \in \qusP^\rast_X (P)$.  But
  $\qusP^\rast_X (P) = \qusP^\rast_X (\supP_X (\mathcal C)) = \mathcal
  C$, by Lemma and Definition~\ref{lemdef:qusP}, item~2.  (This is
  where we need to work with $\CPred^\rast_\super X$, not with
  $\HVc {\Pred^\rast_\super X}$.)  Therefore $F \in \mathcal C$.
  Since
  $h_i \in \mathcal K \subseteq \mathcal U \subseteq F^{-1} (]1,
  \infty])$ for every $i \in \{1, \cdots, n\}$, $F (h_i) > 1$.
  Therefore $\mathcal C$ intersects
  $\bigcap_{i=1}^n [h_i > 1]^\rast_\super$ at $F$, showing that
  $\mathcal C \in \Diamond^\pp {\bigcap_{i=1}^n [h_i >
    1]^\rast_\super}$.
\end{proof}

\begin{lem}
  \label{lemma:HDia:lc}
  Let $\rast$ be nothing, ``$\leq 1$'' or ``$1$''.  Let $X$ be a
  space such that $\Lform X$ is a convenient cone; we also
  assume $X$ compact non-empty if $\rast$ is ``$1$''.  The sets
  $\Diamond^\pp {[h > 1]^\rast_\super}$ with $h \in \Lform X$ form a
  subbase of the topology on $\CPred^\rast_\super X$.
\end{lem}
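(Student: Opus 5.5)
We mirror the proof of Lemma~\ref{lemma:QBox:lc}, with Lemma~\ref{lemma:just:Dia:subbase} playing the role that Lemma~\ref{lemma:just:Box:base} played there.

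\emph{Reduction to finite intersections.} The topology of $\HV {\Pred^\rast_\super X}$ has a subbase of sets $\Diamond U$, with $U$ open in $\Pred^\rast_\super X$, and $\CPred^\rast_\super X$ carries the subspace topology. So the sets $\Diamond^\pp U$ form a subbase of $\CPred^\rast_\super X$. The sets $[h > 1]^\rast_\super$ form a subbase of $\Pred^\rast_\super X$, as noted at the start of Section~\ref{sec:PX=QPAPX}. Hence every open $U$ is a union of finite intersections $\bigcap_{i=1}^n [h_i > 1]^\rast_\super$. Since $\Diamond$ commutes with arbitrary unions, $\Diamond^\pp U$ is the union of the corresponding sets $\Diamond^\pp {\bigcap_{i=1}^n [h_i > 1]^\rast_\super}$. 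It therefore suffices to show that each such set is open in the topology generated by the sets $\Diamond^\pp {[h > 1]^\rast_\super}$, $h \in \Lform X$.

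\emph{Case $n = 0$.} The empty intersection is all of $\Pred^\rast_\super X$. Every element of $\CPred^\rast_\super X$ is non-empty, so $\Diamond^\pp$ of it is the whole space, which is the empty intersection of subbasic sets.

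\emph{Case $n \geq 1$.} Lemma~\ref{lemma:just:Dia:subbase} applies, since $\Lform X$ is convenient and $X$ is compact and non-empty when $\rast$ is ``$1$''. It gives
\[
\Diamond^\pp {\bigcap_{i=1}^n [h_i > 1]^\rast_\super} = \bigcup_{N > n} \bigcap_{\vec b \in \Delta_n [N]} \Diamond^\pp {\left[\sum_{i=1}^n b_i h_i > 1\right]^\rast_\super}.
\]
For each $N > n$, the index set $\Delta_n [N]$ is finite. Each term is of the form $\Diamond^\pp {[h > 1]^\rast_\super}$ with $h \eqdef \sum_{i=1}^n b_i h_i \in \Lform X$. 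So the right-hand side is a union of finite intersections of the proposed subbasic sets, hence open in the topology they generate. The reverse containment is trivial, since each $\Diamond^\pp {[h > 1]^\rast_\super}$ is already open in $\CPred^\rast_\super X$.

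All the real work is contained in Lemma~\ref{lemma:just:Dia:subbase}, so the argument above is routine. The only points needing care are the $n=0$ case and the fact that, unlike in the $\Box$ case, we obtain a subbase rather than a base. This is because the decomposition uses finite intersections over $\Delta_n [N]$ rather than unions.
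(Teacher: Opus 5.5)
Your proposal is correct and follows essentially the same route as the paper's proof. Both reduce to the sets $\Diamond^\pp {\bigcap_{i=1}^n [h_i > 1]^\rast_\super}$, handle $n=0$ as the whole space (the empty intersection), and for $n \geq 1$ invoke Lemma~\ref{lemma:just:Dia:subbase} to write each such set as a union of finite intersections of sets of the form $\Diamond^\pp {[h > 1]^\rast_\super}$.
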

\begin{proof}
  A subbase of the topology on $\Pred^\rast_\super X$ is given by the
  sets $[h > 1]^\rast_\super$ with $h \in \Lform X$.  A base is given
  by finite intersections of those, say
  $\bigcap_{i=1}^n [h_i > 1]^\rast_\super$, with $n \in \nat$ and
  $h_i \in \Lform X$.  Since $\Diamond$ commutes with arbitrary
  unions, a subbase of the topology on $\CPred^\rast_\super X$ is
  given by the sets
  $\Diamond {\bigcap_{i=1}^n [h_i > 1]^\rast_\super} \cap
  \CPred^\rast_\super X = \Diamond^\pp {\bigcap_{i=1}^n [h_i >
    1]^\rast_\super}$.  It remains to rewrite the latter as a union of
  finite intersections of sets of the form
  $\Diamond^\pp {[h > 1]^\rast_\super}$.

  If $n=0$, then
  $\Diamond^\pp {\bigcap_{i=1}^n [h_i > 1]^\rast_\super}$ is the whole
  of $\CPred^\rast_\super X$, which we can write as the empty
  intersection.  If $n \geq 1$, then
  $\Diamond^\pp {\bigcap_{i=1}^n [h_i > 1]^\rast_\super} = \bigcup_{N
    > n} \bigcap_{\vec b \in \Delta_n [N]} \Diamond^\pp
  {\left[\sum_{i=1}^n b_i h_i > 1\right]^\rast_\super}$ by
  Lemma~\ref{lemma:just:Dia:subbase}.
\end{proof}

\begin{lem}
  \label{lemma:qusP:supP}
  Let $\rast$ be nothing, ``$\leq 1$'' or ``$1$''.  Let $X$ be a
  topological space.  For every $P \in \Pred^\rast X$,
  $\supP_X (\qusP^\rast_X (P)) \leq P$, with equality if $\rast$ is
  nothing or ``$\leq 1$'' and $\Lform X$ is locally convex, or if
  $\rast$ is ``$1$'' and $\Lform X$ is a convenient cone.
\end{lem}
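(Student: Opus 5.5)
The inequality is immediate. For every $h \in \Lform X$, $\supP_X (\qusP^\rast_X (P)) (h) = \sup_{F \in \Pred^\rast_\super X, F \leq P} F (h)$, and every $F$ in that set satisfies $F (h) \leq P (h)$. We also need $\qusP^\rast_X (P)$ to be non-empty, so that $\supP_X$ applies. When $\rast$ is nothing or ``$\leq 1$'', the zero prevision belongs to it. When $\rast$ is ``$1$'', we take the least element from Proposition~\ref{prop:PX:sup}~(3) under the standing compactness assumption; otherwise we simply read $\supP_X$ as a pointwise supremum, with value $0$ on the empty family.

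For equality we argue by contradiction, following the proof of Lemma~\ref{lemma:nimP:minP} with the roles of open and closed sets exchanged. Suppose $\supP_X (\qusP^\rast_X (P)) (h) < P (h)$ for some $h$. Pick $r$ strictly between the two values and rescale $h$ so that $r = 1$. Then $P (h) > 1$, while $F (h) \leq 1$ for every $F \in \qusP^\rast_X (P)$. Hence $h$ lies in the Scott-open set $\mathcal W \eqdef P^{-1} (]1, \infty])$.

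When $\Lform X$ is locally convex, we choose a convex open neighborhood $\mathcal U$ of $h$ with $\mathcal U \subseteq \mathcal W$. This $\mathcal U$ is proper, since $0 \notin \mathcal W$. By the order isomorphism $\Open^* \C \cong \Lhom \C$, $F \eqdef M^{\mathcal U}$ is a superlinear prevision, and $F \leq P$ because $\mathcal U \subseteq \mathcal W$. Moreover $F (h) > 1$, since $h \in \mathcal U = F^{-1} (]1, \infty])$.
\begin{itemize}
\item If $\rast$ is nothing, $F \in \qusP_X (P)$ and $F (h) > 1$, which is the desired contradiction.
\item If $\rast$ is ``$\leq 1$'', then $P$ is subnormalized. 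Lemma~\ref{lemma:P:subnorm}~(2), applied to $\mathcal U \subseteq \mathcal V = \mathcal W$, gives a subnormalized superlinear $F' \leq P$ with $\mathcal U \subseteq F'^{-1} (]1,\infty])$. Then $F' \in \qusP^{\leq 1}_X (P)$ and $F' (h) > 1$, again a contradiction.
\end{itemize}

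The main obstacle is the case where $\rast$ is ``$1$''. There the superlinear $F$ must in addition be normalized, which means forcing $F (\one) \geq 1$. We copy the argument from the end of the proof of Lemma~\ref{lemma:just:Dia:subbase}. Apply Lemma~\ref{lemma:Hfriendly:1:V} with $g_n \eqdef \one$ for all $n$, $h_0 \eqdef h$, and $W \eqdef \mathcal W$. Its hypothesis holds: for every $b > 1$, every $h' \in \mathcal W$ and every $a \in [0,1]$, subnormalization gives $P (ab \cdot \one + (1-a) \cdot h') \geq ab + (1-a) P (h') > 1$. This inequality needs $P (\one + g) \geq 1 + P (g)$, which holds because $P$ is normalized. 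The lemma yields a convex Scott-open $\mathcal U \ni h$, contained in $\mathcal W$, that contains $b \cdot \one$ for every $b > 1$.

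Lemma~\ref{lemma:P:subnorm}~(2) then gives a subnormalized superlinear $F \leq P$ with $\mathcal U \subseteq F^{-1} (]1,\infty])$. So $F (h) > 1$, and $F (b \cdot \one) > 1$ for all $b > 1$, hence $F (\one) \geq 1$. Superlinearity gives $F (\one + g) \geq 1 + F (g)$, and together with subnormalization this shows $F$ is normalized. Thus $F \in \qusP^1_X (P)$ with $F (h) > 1$, the final contradiction. The point needing care is that the convenient-cone hypothesis of Lemma~\ref{lemma:Hfriendly:1:V} is exactly what the statement assumes in this case.
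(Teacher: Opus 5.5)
Your proof is correct and follows the paper's argument essentially step for step. For the cases where $\rast$ is nothing or ``$\leq 1$'', you use local convexity and the Minkowski functional $M^{\mathcal U}$; your direct appeal to Lemma~\ref{lemma:P:subnorm}~(2) is exactly what Corollary~\ref{corl:P:subnorm} packages. For the case ``$1$'', you use Lemma~\ref{lemma:Hfriendly:1:V} with $g_n \eqdef \one$ followed by Lemma~\ref{lemma:P:subnorm}~(2), as the paper does. Apart from your contradiction framing, the only blemish is that you write ``subnormalization gives'' for an inequality that actually comes from normalization, which you correct in the very next sentence.
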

\begin{proof}
  For the first part, for all $P \in \Pred X$ and $h \in \Lform X$,
  $\supP_X (\qusP^\rast_X (P)) (h) = \sup_{F \in \Pred^\rast_\super X,
    F \leq P} F (h) \leq P (h)$.

  \emph{Case~1.}  We assume that $\Lform X$ is locally convex.
  
  Let us assume that $\rast$ is nothing.  Let $h \in \Lform X$.  In
  order to see that $P (h) \leq \supP_X (\qusP^\rast_X (P)) (h)$, we
  consider an arbitrary number $r \in \Rp \diff \{0\}$ such that
  $r < P (h)$, and we will show that $r \leq F (h)$ for some
  $F \in \qusP^\rast_X (P)$.  Since $r < P (h)$, $(1/r) \cdot h$ is in
  the open set $P^{-1} (]1, \infty])$.  By local convexity,
  $(1/r) \cdot h$ has a convex open neighborhood $\mathcal U$ that is
  included in $P^{-1} (]1, \infty])$.  Additionally, $\mathcal U$ is
  proper, since the zero function is not in $P^{-1} (]1, \infty])$,
  hence not in $\mathcal U$.  It follows that
  $F \eqdef M^{\mathcal U}$ is a superlinear lower semicontinuous map
  from $\Lform X$ to $\creal$, namely, a superlinear prevision on $X$.
  Since $\mathcal U \subseteq P^{-1} (]1, \infty])$, $F \leq P$, so
  $F \in \qusP_X (P)$.  But $(1/r) \cdot h \in \mathcal U$, so
  $F (h) > r$.

  If $\rast$ is ``$\leq 1$'' instead, we can replace $F$ by a larger
  \emph{subnormalized} superlinear prevision below $P$, using
  Corollary~\ref{corl:P:subnorm}.  Hence, without loss of generality,
  we may assume that $F \in \qusP^{\leq 1}_X (P)$.

  \emph{Case~2.}  We assume that $\rast$ is ``$1$'', and that
  $\Lform X$ is convenient.
  We make use of Lemma~\ref{lemma:Hfriendly:1:V} with
  $g_n \eqdef \one$ for each $n \in \nat$, and with
  $\mathcal K \eqdef \{(1/r) \cdot h\}$ and
  $\mathcal W \eqdef P^{-1} (]1, \infty])$.  The lemma applies: for
  every $b > 1$ and every $h \in \mathcal W$, every element
  $ab \cdot \one + (1-a) \cdot h$ of $\conv {\{b \cdot \one, h\}}$
  ($a \in [0, 1]$) is mapped by $P$ to $ab + (1-a) P (h)$, which is
  strictly larger than $1$, since $b > 1$ and $P (h) > 1$.  Hence
  there is a convex Scott-open neighborhood $\mathcal U$ of
  $\mathcal K$ in $\Lform X$ that contains every function
  $b \cdot \one$ with $b > 1$ and that is included in $\mathcal W$.
  We use Lemma~\ref{lemma:P:subnorm}~(2), and we obtain an
  $F \in \Pred^{\leq 1}_\super X$ such that $F \leq P$ and
  $\mathcal U \subseteq F^{-1} (]1, \infty])$, whence $F (h) > r$.
  Additionally, for every $b > 1$, $b \cdot \one \in \mathcal U$, so
  $F (b \cdot \one) > 1$, namely $F (\one) > 1/b$.  By taking suprema
  over $b$, $F (\one) \geq 1$, and since $F$ is superlinear, for every
  $h \in \Lform X$, $F (\one+h) \geq F (\one) + F (h) \geq 1 + F (h)$.
  The reverse inequality is because $F$ is subnormalized; hence
  $F \in \Pred^1_\super X$.  Since $F \leq P$, $F \in \qusP^1_X (P)$,
  and we remember that $F (h) > r$, as desired.
\end{proof}

\begin{remark}
  \label{rem:qusP:supP:tight}
  When $\rast$ is nothing, the assumption that $\Lform X$ is locally
  convex in Lemma~\ref{lemma:qusP:supP} is tight.  Using the
  one-to-one correspondence between positively homogeneous lower
  semicontinuous functions from $\Lform X$ to $\creal$ and the proper
  (Scott-)open subsets of $\Lform X$, which specializes to a bijection
  between superlinear lower semicontinuous maps and convex proper open
  sets, the equality $\supP_X (\qusP_X (P)) = P$ for every
  $P \in \Pred X$ is equivalent to: every proper open subset is a
  union of convex proper open subsets of $\Lform X$.  We see that the
  latter is equivalent to the requirement that $\Lform X$ be locally
  convex.
\end{remark}

\begin{remark}
  \label{rem:sup:qus}
  Lemma~\ref{lemma:qusP:supP} implies that any prevision in
  $\Pred^\rast X$ can be written as a pointwise supremum of
  superlinear previsions in $\Pred^\rast_\super X$, provided that
  $\Lform X$ is locally convex (or convenient if $\rast$ is ``$1$'').
\end{remark}

\begin{corollary}
  \label{corl:qusP:cont}
  Let $\rast$ be nothing, ``$\leq 1$'' or ``$1$''.  Let $X$ be a
  space such that $\Lform X$ is a convenient cone; we also assume $X$
  compact and non-empty if $\rast$ is ``$1$''.  Then $\qusP^\rast_X$
  is continuous from $\Pred^\rast X$ to $\CPred^\rast_\super X$, hence
  also to $\HV {\Pred^\rast_\super X}$.
\end{corollary}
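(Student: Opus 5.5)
We follow the pattern of Corollary~\ref{corl:nimP:cont}, with $\Diamond^\pp$ in place of $\Box^\pp$. By Lemma~\ref{lemma:HDia:lc}, whose hypotheses (convenient $\Lform X$, and $X$ compact non-empty when $\rast$ is ``$1$'') are exactly those of the statement, the sets $\Diamond^\pp {[h > 1]^\rast_\super}$ with $h \in \Lform X$ form a subbase of the topology of $\CPred^\rast_\super X$. It therefore suffices to show that ${(\qusP^\rast_X)}^{-1} (\Diamond^\pp {[h > 1]^\rast_\super})$ is open in $\Pred^\rast X$ for every $h \in \Lform X$.

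By Lemma and Definition~\ref{lemdef:qusP}, item~1, $\qusP^\rast_X$ takes its values in $\CPred^\rast_\super X$. The inverse image above is therefore ${(\qusP^\rast_X)}^{-1} (\Diamond {[h > 1]^\rast_\super})$. By Lemma and Definition~\ref{lemdef:supP}, $\Diamond {[h > 1]^\rast_\super} = \supP_X^{-1} ([h > 1]^\rast)$, so the inverse image equals ${(\supP_X \circ \qusP^\rast_X)}^{-1} ([h > 1]^\rast)$.

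Now apply Lemma~\ref{lemma:qusP:supP}. A convenient cone is in particular locally convex, so in all three cases for $\rast$ we get $\supP_X \circ \qusP^\rast_X = \identity {\Pred^\rast X}$. The inverse image is then just $[h > 1]^\rast$, which is open. This proves continuity into $\CPred^\rast_\super X$. Continuity into $\HV {\Pred^\rast_\super X}$ follows because $\CPred^\rast_\super X$ carries the subspace topology, by Definition~\ref{defn:HDia}.

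There is essentially no obstacle left: the substantive work is in Lemma~\ref{lemma:HDia:lc}, which reduces the subbase to sets $\Diamond^\pp {[h > 1]^\rast_\super}$, and in Lemma~\ref{lemma:qusP:supP}, which gives the retraction identity. The only point to check is that the hypotheses of those two lemmas are met under the present assumptions, and they are.
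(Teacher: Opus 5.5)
Your proposal is correct and follows the paper's proof. You reduce to the subbase of Lemma~\ref{lemma:HDia:lc} and use $\supP_X \circ \qusP^\rast_X = \identity{\Pred^\rast X}$ from Lemma~\ref{lemma:qusP:supP} to identify the inverse image with $[h > 1]^\rast$. The only cosmetic difference is that you get there via $\supP_X^{-1}([h>1]^\rast) = \Diamond{[h>1]^\rast_\super}$, mirroring Corollary~\ref{corl:nimP:cont}, whereas the paper checks the two inclusions directly.
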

\begin{proof}
  Using Lemma~\ref{lemma:HDia:lc}, it suffices to show that
  ${(\qusP^\rast_X)}^{-1} (\Diamond^\pp {[h > 1]^\rast_\super})$ is
  open in $\Pred^\rast X$ for every $h \in \Lform X$.  This is the set
  of all previsions $P \in \Pred^\rast X$ such that there is an
  $F \in \Pred^\rast_\super X$ such that $F \leq P$ and $F (h) > 1$.
  Any such $P$ satisfies $P (h) > 1$.  Conversely, every
  $P \in \Pred^\rast X$ such that $P (h) > 1$ can be written as
  $\supP_X (\qusP^\rast_X (P))$ by Lemma~\ref{lemma:qusP:supP}; then
  $\supP_X (\qusP^\rast_X (P)) (h) > 1$, so $F (h) > 1$ for some
  $F \in \qusP^\rast_X (P)$.  Hence
  ${(\qusP_X)}^{-1} (\Diamond^\pp {[h > 1]^\rast_\super}) = [h >
  1]^\rast$.
\end{proof}

We turn to algebraic operations.
\begin{prop}
  \label{prop:Hcvx:cone}
  For every $T_0$ semitopological cone $\C$, $\HVc \C$ is a
  semitopological cone, with zero equal to $\{0\}$, addition defined
  by
  $C_1 + C_2 \eqdef cl \{x+y \mid x \in C_1, \allowbreak y \in C_2\}$,
  and scalar multiplication defined by
  $a \cdot C \eqdef \{a \cdot x \mid x \in C\}$.

  Given any convex subspace $\B$ of $\C$, the function that maps every
  $C \in \HVc \B$ to its closure $cl (C)$ in $\C$ is an affine
  topological embedding, allowing us to see $\HVc \B$ as a convex
  subspace of $\HVc \C$ up to affine homeomorphism.  If $0 \in \B$,
  then $\{0\}$ of $\SVc \C$ belongs to $\SVc \B$ up to that
  homeomorphism, which is then linear.
\end{prop}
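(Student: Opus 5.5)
The first claim is essentially Keimel's result on the convex lower powercone \cite{TKP:nondet:prob,Keimel:topcones2}, and I would reprove it directly. The plan is to check well-definedness of the operations, then separate continuity, then the cone axioms, and finally the embedding.

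\emph{Well-definedness.} For $C_1, C_2 \in \HVc \C$, the set $\{x+y \mid x \in C_1, y \in C_2\}$ is non-empty and convex, since it is the image of the convex set $C_1 \times C_2$ under the affine map $+$. Its closure is therefore convex by \cite[Lemma 4.10]{Keimel:topcones2}. For $a>0$, $x \mapsto a\cdot x$ is a homeomorphism of $\C$ (inverse $x \mapsto (1/a)\cdot x$), so $a\cdot C$ is closed and convex. For $a=0$ we get $\{0\}$, which is closed because $\C$ is $T_0$ and $0$ is the least element in the specialization preorder: every open set containing $0$ is all of $\C$, by continuity of $a \mapsto a\cdot x$ at $a=0$ in the Scott topology. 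Hence $\dc 0=\{0\}$.

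\emph{Separate continuity.} For addition with $C_2$ fixed, I would show $C_1 + C_2 \in \Diamond U$ iff $C_1 \in \Diamond V$, where $V \eqdef \bigcup_{y \in C_2} (\_+y)^{-1}(U)$. This holds because a closure meets the open set $U$ iff the set itself does, and $V$ is open by separate continuity of $+$. Scalar multiplication is handled similarly: $a\cdot C \in \Diamond U$ iff $C \in \Diamond (a\cdot\_)^{-1}(U)$. In the scalar variable, $\{a \mid a\cdot C \in \Diamond U\} = \bigcup_{x \in C}\{a \mid a\cdot x \in U\}$, which is Scott-open.

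\emph{Cone axioms.} These reduce to identities like $cl(cl(A)+B) = cl(A+B)$, which follow from continuity of $+$ in each argument (continuous maps send closures into closures). The one genuinely nontrivial axiom, and the expected main obstacle, is $(a+b)\cdot C = a\cdot C + b\cdot C$. The inclusion $\subseteq$ is immediate from $(a+b)\cdot x = a\cdot x + b\cdot x$. For $\supseteq$, given $x, y \in C$ with $a+b>0$, I would write $a\cdot x + b\cdot y = (a+b)\cdot(x +_{a/(a+b)} y)$, which lies in $(a+b)\cdot C$ by convexity; closedness of $(a+b)\cdot C$ then gives the closure. The cases $a=0$ or $b=0$ use $\{0\}$ as zero: $C + \{0\} = cl(C) = C$.

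\emph{Embedding.} Let $j(C)\eqdef cl(C)$ for $C \in \HVc\B$. It is injective since $cl(C) \cap \B = C$ when $C$ is closed in $\B$. It is convex-valued by the closure lemma. Moreover $j^{-1}(\Diamond^\cvx U) = \Diamond^\cvx(U \cap \B)$, and every subbasic open set of $\HVc\B$ has this form, so $j$ is a topological embedding. For affinity, I would show $cl_\C(cl_\C C_1 +_a cl_\C C_2) = cl_\C(C_1 +_a C_2)$ using continuity of $+_a$. Note that $cl_\C(cl_\B D) = cl_\C D$ since $cl_\B D = cl_\C D \cap \B$; this gives $j(C_1 +_a C_2) = j(C_1) +_a j(C_2)$. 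Affinity makes the image convex.

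Finally, if $0 \in \B$, then $\{0\}$ is closed in $\B$ and $j(\{0\}) = cl_\C\{0\} = \{0\}$, so $j$ is strict. Being strict and affine, it is linear in the sense defined before Lemma~\ref{lemma:minP:+}.
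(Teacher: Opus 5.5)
Your proposal is correct and follows essentially the same route as the paper: separate continuity via inverse images of the subbasic sets $\Diamond U$, the embedding via injectivity plus $j^{-1}(\Diamond^\cvx U) = \Diamond^\cvx (U \cap \B)$, affinity via closure identities, and strictness from $j(\{0\}) = \{0\}$. You also check the cone axioms yourself, whereas the paper defers them to Tix, Keimel and Plotkin, and you justify from $T_0$ that $\{0\}$ is closed. One small point: in a semitopological cone $+_a$ is only separately continuous, so $cl_\C(cl_\C C_1 +_a cl_\C C_2) = cl_\C(C_1 +_a C_2)$ should come from applying your one-variable identity $cl(cl(A)+B) = cl(A+B)$ twice; this is what the paper's two-step open-set argument does.
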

\begin{proof}
  The fact that $\HVc \C$ is a cone can be proved just like
  Proposition 4.3 of \cite{TKP:nondet:prob}, although it is only
  stated there for the case where $\C$ is a d-cone.

  Let $U \in \Open \C$.  For every $a \in \Rp$,
  $(a \cdot \_)^{-1} (\Diamond^\cvx U) = \Diamond^\cvx {(a \cdot
    \_)^{-1} (U)}$.  For every $C \in \HVc \C$,
  $(\_ \cdot C)^{-1} (\Diamond^\cvx U) = \bigcup_{x \in C} (\_ \cdot
  x)^{-1} (U)$.  Hence scalar multiplication is separately continuous.
  Fixing $C_0 \in \HVc \C$,
  $(C_0 + \_)^{-1} (\Diamond^\cvx U) = \{C \in \HVc \C \mid U \text{
    intersects }\{x + y \mid x \in C_0, y \in C\}\} = \bigcup_{x \in
    C_0} \Diamond^\cvx {(x + \_)^{-1} (U)}$, so addition is separately
  continuous.

  For what follows, we rely on the following observations: (1) for
  every subset $A$, for every open subset $U$, $U$ intersects $cl (A)$
  if and only if $U$ intersects $A$; (2) two closed subsets are equal
  if and only if they intersect the same open sets.
  
  Given any convex subspace $\B$ of $\C$, let $i$ map every
  $C \in \HVc \B$ to its closure $cl (C)$ in $\C$.  The map $i$ is
  injective: if $i (C_1) = i (C_2)$, then for every open subset
  $U \cap \B$ of $\B$ (where $U$ is open in $\C$), $C_1$ intersects
  $U \cap \B$ if and only if $C_1$ intersects $U$, if and only if
  $i (C_1) = cl (C_1)$ intersects $U$, and working our way backwards
  with $C_2$, if and only if $C_2$ intersects $U \cap \B$.
  Additionally,
  $i^{-1} (\Diamond^\cvx U) = \Diamond^\cvx {(U \cap \B)}$, a subbasic
  open subset of $\HVc \B$.  Conversely, every subbasic open subset of
  $\HVc \B$ is of this form.  Therefore $i$ is a topological
  embedding.  In order to see that $i$ is affine, hence that the image
  of $i$ is convex, we fix $a \in [0, 1]$ and $C_1, C_2 \in \HVc \B$,
  and we will show that $i (C_1) +_a i (C_2) = i (C_1 +_a C_2)$.  For
  every open subset $U$ of $\C$, if $U$ intersects
  $i (C_1) +_a i (C_2)$ then it intersects
  $\{a \cdot x + (1-a) \cdot y \mid x \in cl (C_1), y \in cl (C_2)\}$.
  Let $x \in cl (C_1)$ and $y \in cl (C_2)$ be such that
  $a \cdot x + (1-a) \cdot y \in U$.  Then
  $(a \cdot x + (1-a) \cdot \_)^{-1} (U)$ intersects $cl (C_2)$ at
  $y$, hence also $C_2$.  Let $y' \in C_2$ be such that
  $a \cdot x + (1-a) \cdot y \in U$.  Then
  $(a \cdot \_ + (1-a) \cdot y')^{-1} (U)$ intersects $cl (C_1)$ at
  $x$, hence also $C_1$.  Hence $U$ intersects
  $\{a \cdot x' + (1-a) \cdot y' \mid x' \in C_1, y' \in C_2\}$, and
  therefore the larger set $i (C_1 +_a C_2)$.  Conversely, if $U$
  intersects $i (C_1 +_a C_2)$, then it intersects $C_1 +_a C_2$,
  which is the closure in $\B$ of
  $\{a \cdot x + (1-a) \cdot y \mid x \in C_1, y \in C_2\}$.
  Therefore $U \cap \B$ intersects the latter closure, so
  $a \cdot x + (1-a) \cdot y \in U \cap \B$ for some $x \in C_1$ and
  $y \in C_2$.  It follows that $U$ intersects $i (C_1) +_a i (C_2)$.

  Finally, if $0$ in $\B$, then $i (\{0\}) = \{0\}$, so $i$ is strict,
  and therefore linear.
\end{proof}


\begin{fact}
  \label{fact:Hcvx:sup}
  Let $\B$ be a convex subspace of a semitopological cone $\C$.  Every
  non-empty family ${(\mathcal C_i)}_{i \in I}$ in $\HVc \B$ has a
  supremum
  $\bigsqcup_{i \in I} \mathcal C_i \eqdef \clconv {\bigcup_{i \in I}
    \mathcal C_i}$.
\end{fact}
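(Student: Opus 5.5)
The claim is that $\clconv {\bigcup_{i \in I} \mathcal C_i}$, with the closure taken in $\B$, is the least upper bound of the family in $\HVc \B$ ordered by inclusion. I will check three things in turn: that this set belongs to $\HVc \B$, that it is an upper bound, and that it lies below every other upper bound.

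\emph{Membership in $\HVc \B$.} Let $A \eqdef \bigcup_{i \in I} \mathcal C_i$. Since $I$ is non-empty and each $\mathcal C_i$ is non-empty, $A$ is non-empty, and therefore so is $\conv A \subseteq \B$; this uses that $\B$ is convex. Let $\mathcal C$ be the closure of $\conv A$ in $\B$. It is non-empty and closed in $\B$. For convexity, the cited result \cite[Lemma 4.10]{Keimel:topcones2} is stated for the cone $\C$ itself, so I transfer it to the subspace $\B$. The closure of $\conv A$ in $\B$ is $cl (\conv A) \cap \B$, where $cl$ denotes closure in $\C$. The set $cl (\conv A)$ is convex, being the closure of a convex set in the semitopological cone $\C$. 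The set $\B$ is convex by assumption. An intersection of convex sets is convex, so $\mathcal C$ is convex. Hence $\mathcal C \in \HVc \B$.

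\emph{Upper bound.} Each $\mathcal C_i$ is contained in $A \subseteq \conv A \subseteq \mathcal C$.

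\emph{Least upper bound.} Let $\mathcal C' \in \HVc \B$ contain every $\mathcal C_i$. Then $A \subseteq \mathcal C'$. Since $\mathcal C'$ is convex, $\conv A \subseteq \mathcal C'$. Since $\mathcal C'$ is closed in $\B$, the closure of $\conv A$ in $\B$ is also contained in $\mathcal C'$, that is, $\mathcal C \subseteq \mathcal C'$.

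The only step needing any care is the first: making sure that closure is taken relative to $\B$, and that convexity of this relative closure follows from Keimel's lemma applied in $\C$ together with convexity of $\B$. The other two steps are routine.
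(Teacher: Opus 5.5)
Your proof is correct. The paper states this as a Fact without proof, and your three checks (membership in $\HVc \B$, upper bound, least upper bound) are the routine argument it leaves implicit, including the one delicate point: the closure is taken in $\B$, and it is convex because it equals $cl (\conv A) \cap \B$, an intersection of two convex sets.
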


\begin{lem}
  \label{lemma:sup:clconv}
  In a convex subspace $\B$ of a semitopological cone $\C$, for every
  subset $A$ of $\B$, $A$, $\conv A$, $cl (A)$ and $\clconv A$ all
  have the same upper bounds in $\B$, with respect to the
  specialization preordering.  In particular, if one has a supremum,
  then so do the other three, and they coincide.
\end{lem}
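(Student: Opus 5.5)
We show that the set of upper bounds of $A$ in $\B$, namely $\{y \in \B \mid \forall x \in A, x \leq y\}$, equals that of $\conv A$, $cl (A)$ and $\clconv A$. Since $A \subseteq \conv A \subseteq \clconv A$ and $A \subseteq cl (A) \subseteq \clconv A$, every upper bound of $\clconv A$ is an upper bound of the other three, and every upper bound of any of the four is an upper bound of $A$. It therefore suffices to prove that every upper bound $y \in \B$ of $A$ is an upper bound of $\clconv A$. Here closures and convex hulls are taken inside the convex subspace $\B$.

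The key observation is that the set $\dc y \cap \B$ of elements of $\B$ below $y$ is closed in $\B$, being the closure of $\{y\}$ in the subspace $\B$. We also claim that it is convex. Let $x, x' \leq y$ in $\B$ and $a \in [0, 1]$. The map $z \mapsto a \cdot z + (1-a) \cdot x'$ is continuous on $\C$, since $+$ and scalar multiplication are separately continuous, and continuous maps are monotonic with respect to specialization preorderings. Hence $x +_a x' \leq y +_a x'$. Symmetrically, using continuity of $z \mapsto a \cdot y + (1-a) \cdot z$, we get $y +_a x' \leq y +_a y = y$. Since $\B$ is convex, $x +_a x' \in \B$, and its specialization preorder in $\B$ is the restriction of that in $\C$, so $x +_a x' \in \dc y \cap \B$.

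Given an upper bound $y$ of $A$, we have $A \subseteq \dc y \cap \B$, which is closed and convex in $\B$. It therefore contains $\conv A$, and then its closure in $\B$, which contains $cl (\conv A)$. We then invoke the closed convex hull description $\clconv A = cl (\conv A)$, transported to the subspace $\B$. The closure of a convex set in $\B$ is $\B \cap$ (its closure in $\C$), which is convex because $\B$ is convex and closures in $\C$ of convex sets are convex. Alternatively, we can simply note that $\dc y \cap \B$ is a closed convex set containing $A$, so by definition it contains the smallest such set, $\clconv A$. This last route avoids the issue entirely, and it is the one we use.

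Finally, the supremum of a set is the least element of its set of upper bounds, so equality of the upper-bound sets gives that if one of the four has a supremum, all do and they coincide. The only delicate step is the convexity of $\dc y \cap \B$, which rests on monotonicity of the continuous affine maps.
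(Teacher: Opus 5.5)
Your proof is correct and rests on the same two facts as the paper's. The first is that the separately continuous cone operations are monotonic for the specialization preordering. The second is that $\dc y$ is closed.

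The difference is only in how the argument is organized. The paper treats $\conv A$ and $cl (A)$ separately. For $\conv A$, it bounds each barycenter $\sum_{i=1}^n a_i \cdot x_i$ by $\sum_{i=1}^n a_i \cdot y = y$. For $cl (A)$, it argues with open neighborhoods. It then chains the two through $\clconv A = cl (\conv A)$. You instead handle both at once by showing that $\dc y \cap \B$ is closed and convex in $\B$, which is a slightly tidier presentation of the same argument.
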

\begin{proof}
  For every $x \in \B$, if $x$ is an upper bound of $\conv A$, then
  $x$ is an upper bound of the smaller set $A$.  Conversely, if $x$ is
  an upper bound of $A$, then let us consider any element of
  $\conv A$, necessarily of the form $\sum_{i=1}^n a_i \cdot x_i$ for
  some $\vec a \in \Delta_n$, $n \geq 1$ and $x_i \in A$.  Since the
  algebraic operations are continuous, hence monotonic with respect to
  the specialization preordering, the latter is less than or equal to
  $\sum_{i=1}^n a_i \cdot x = x$.  Hence $x$ is an upper bound of
  $\conv A$.

  If $x$ is an upper bound of $cl (A)$, then it is an upper bound of
  $A$.  Conversely, if $x$ is an upper bound of $A$, then for every
  $y \in cl (A)$, every open neighborhood $U$ of $y$ intersects
  $cl (A)$ (at $y$), hence $A$, say at $z$.  By assumption,
  $z \leq x$, so $x \in U$, since open sets are upwards-closed in the
  specialization preordering.  Hence every open neighborhood $U$ of
  $y$ contains $x$, and therefore $y \leq x$.  Since $y$ is arbitrary
  in $cl (A)$, $x$ is an upper bound of $cl (A)$.

  Finally, $\clconv A = cl (\conv A)$ has the same upper bounds as
  $\conv A$, hence as $A$.
\end{proof}

\begin{lem}
  \label{lemma:supP:sup}
  Let $\rast$ be nothing, ``$\leq 1$'' or ``$1$'', and $X$ be a
  topological space.  The map $\supP_X$ preserves non-empty suprema
  from $\HVc {\Pred^\rast_\super X}$ to $\Pred^\rast X$.
\end{lem}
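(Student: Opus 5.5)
We will show that for every non-empty family ${(\mathcal C_i)}_{i \in I}$ in $\HVc {\Pred^\rast_\super X}$, $\supP_X$ maps its supremum $\bigsqcup_{i \in I} \mathcal C_i = \clconv {\bigcup_{i \in I} \mathcal C_i}$ (Fact~\ref{fact:Hcvx:sup}) to the pointwise supremum $\sup_{i \in I} \supP_X (\mathcal C_i)$. By Proposition~\ref{prop:PX:sup}~(2), this pointwise supremum is the supremum in $\Pred^\rast X$; non-emptiness of $I$ is needed there when $\rast$ is ``$1$''. We therefore fix $h \in \Lform X$ and compare $\sup_{F \in \mathcal A} F (h)$ for the sets $\mathcal A \eqdef \bigcup_{i \in I} \mathcal C_i$ and $\mathcal A' \eqdef \clconv \mathcal A$.

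\emph{Upper bound on $\supP_X (\mathcal A)$.} Since $\mathcal A \subseteq \mathcal A'$, we get $\sup_{F \in \mathcal A} F (h) \leq \sup_{F \in \mathcal A'} F (h)$. Moreover $\sup_{F \in \mathcal A} F (h) = \sup_{i \in I} \supP_X (\mathcal C_i) (h)$, so $\sup_i \supP_X (\mathcal C_i) \leq \supP_X (\mathcal A')$.

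\emph{Reverse inequality.} Let $P \eqdef \sup_{i \in I} \supP_X (\mathcal C_i)$, which is an element of $\Pred^\rast X$. We want every $F \in \mathcal A'$ to satisfy $F \leq P$. By Proposition~\ref{prop:PX:sup}~(1), the specialization ordering on $\Pred^\rast_\super X$ is the pointwise one. We do not need $P$ to lie in $\Pred^\rast_\super X$. Instead we argue directly, in two steps. First, every convex combination $\sum_j a_j F_j$ with $F_j \in \mathcal A$ satisfies $\sum_j a_j F_j (h) \leq \sum_j a_j P (h) = P (h)$, so it lies in $\{F \in \Pred^\rast_\super X \mid F (h) \leq P(h) \text{ for all } h\}$. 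Second, this set is $\bigcap_{h} \bigcap_{r > P(h)} [h \leq r]^\rast_\super$, an intersection of closed sets (complements of the subbasic opens $[h > r]^\rast_\super$), hence closed. So it contains $cl (\conv \mathcal A) = \mathcal A'$, and therefore $\supP_X (\mathcal A') \leq P$.

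Alternatively, one could invoke Lemma~\ref{lemma:sup:clconv}, but it concerns upper bounds inside $\B = \Pred^\rast_\super X$, which $P$ need not be in. That is why I would use the direct closedness argument, and this is the only point requiring any care. Combining the two inequalities gives $\supP_X (\bigsqcup_i \mathcal C_i) = \sup_i \supP_X (\mathcal C_i)$.
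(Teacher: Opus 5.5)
Your proof is correct and follows essentially the paper's route. The paper's proof is a one-liner citing Lemma~\ref{lemma:sup:clconv} to get $\sup_{F \in \clconv \mathcal A} F (h) = \sup_{F \in \mathcal A} F (h)$. Your argument that $\{F \in \Pred^\rast_\super X \mid F \leq P\}$ is closed and contains $\conv \mathcal A$ is a direct, self-contained proof of exactly that instance. Your caveat that $P$ may lie outside $\Pred^\rast_\super X$ is a fair remark about the terse citation. The citation still works with $\B \eqdef \Pred^\rast X$, a convex subspace of $\Pred X$, because $\clconv \mathcal A$ taken in $\Pred^\rast_\super X$ lies between $\mathcal A$ and the closed convex hull of $\mathcal A$ taken in $\Pred^\rast X$.
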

\begin{proof}
  Let ${(\mathcal C_i)}_{i \in I}$ be any non-empty family of elements
  of $\HVc {\Pred^\rast_\super X}$, and
  $\mathcal C \eqdef \bigsqcup_{i \in I} \mathcal C_i$.  For every
  function $h \in \Lform X$,
  $\supP_X (\mathcal C) (h) = \sup_{F \in \bigcup_{i \in I} \mathcal
    C_i} F (h)$ by Lemma~\ref{lemma:sup:clconv}, and this is equal to
  $\sup_{i \in I} \supP_X (\mathcal C_i) (h)$.
\end{proof}

We observe the following.
\begin{fact}
  \label{fact:sup:cl}
  For every space $X$, for every $h \in \Lform X$, for every subset
  $A$ of $X$, $\sup_{x \in A} h (x) = \sup_{x \in cl (A)} h(x)$
  (agreeing that the supremum of an empty set is $0$).
\end{fact}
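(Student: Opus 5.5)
The plan is to prove the two inequalities between $\sup_{x \in A} h (x)$ and $\sup_{x \in cl (A)} h (x)$ separately, using only that $h$ is lower semicontinuous and that an open set meets $cl (A)$ exactly when it meets $A$.

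The inequality $\sup_{x \in A} h (x) \leq \sup_{x \in cl (A)} h (x)$ is immediate from $A \subseteq cl (A)$: a supremum over a larger set can only be larger. The convention that the empty supremum is $0$ is consistent with this, since everything lives in $\creal$, whose least element is $0$. If $A$ is empty then so is $cl (A)$, and both sides are $0$.

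For the reverse inequality, I will show that every $r \in \Rp$ with $r < \sup_{x \in cl (A)} h (x)$ satisfies $r < \sup_{x \in A} h (x)$; taking suprema over such $r$ then finishes. Given such an $r$, there is a point $y \in cl (A)$ with $h (y) > r$. Then $y$ lies in $h^{-1} (]r, \infty])$, which is open in $X$ because $h$ is lower semicontinuous (continuous into $\creal$ with its Scott topology, for which $]r, \infty]$ is open). An open neighborhood of a point of $cl (A)$ must meet $A$, so there is an $x \in A$ with $h (x) > r$, and hence $\sup_{x \in A} h (x) > r$.

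No step is a genuine obstacle. The only point needing care is the passage from ``$r$ is below the supremum'' to an actual witness $y$ with $h (y) > r$, which follows from the definition of the supremum in $\creal$ and covers the case where the supremum is $\infty$.
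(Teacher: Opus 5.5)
Your proof is correct and follows essentially the same route as the paper: both rest on the observation that $r < \sup_{x \in cl (A)} h (x)$ means the open set $h^{-1} (]r, \infty])$ meets $cl (A)$, hence meets $A$, which gives $r < \sup_{x \in A} h (x)$. The paper writes this as a single chain of equivalences over $t \in \Rp$, whereas you prove the easy inequality separately from $A \subseteq cl (A)$; the content is otherwise identical.
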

Indeed, for every $t \in \Rp$, $t < \sup_{x \in cl (A)} h (x)$ if and only
if $h (x) > t$ for some $t \in cl (A)$, if and only if the open set
$h^{-1} (]t, \infty])$ intersects $cl (A)$, if and only if $h^{-1}
(]t, \infty])$ intersects $A$, if and only if $t < \sup_{x \in A} h (x)$.

\begin{lem}
  \label{lemma:supP:+}
  Let $\rast$ be nothing, ``$\leq 1$'' or ``$1$''.  For every
  space $X$, the map $\supP_X$ is affine from
  $\HVc {\Pred^\rast_\super X}$ to $\Pred^\rast X$, and linear if
  $\rast$ is nothing or ``$\leq 1$''.
\end{lem}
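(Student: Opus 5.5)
The plan mirrors Lemma~\ref{lemma:minP:+}, with Fact~\ref{fact:sup:cl} absorbing the closure taken in the definition of convex combinations in $\HVc {\Pred^\rast_\super X}$. Fix $\mathcal C_1, \mathcal C_2 \in \HVc {\Pred^\rast_\super X}$, $a \in [0, 1]$ and $h \in \Lform X$. The first step is to identify $\mathcal C_1 +_a \mathcal C_2$. By Proposition~\ref{prop:Hcvx:cone}, viewing $\Pred^\rast_\super X$ as a convex subspace of the topological cone $\Pred_\super X$ (Proposition~\ref{prop:PX:cone}), it is the closure in $\Pred^\rast_\super X$ of the set $A \eqdef \{F_1 +_a F_2 \mid F_1 \in \mathcal C_1, F_2 \in \mathcal C_2\}$. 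The degenerate cases $a \in \{0, 1\}$ give $\mathcal C_2$ or $\mathcal C_1$ directly, so they are trivial.

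The second step is to apply Fact~\ref{fact:sup:cl} to the space $\Pred^\rast_\super X$ and the lower semicontinuous map $F \mapsto F (h)$. This map is lower semicontinuous because the inverse image of $]r, \infty]$ is $[h > r]^\rast_\super$. We obtain
\[
  \supP_X (\mathcal C_1 +_a \mathcal C_2) (h) = \sup_{F \in A} F (h) = \sup_{F_1 \in \mathcal C_1, F_2 \in \mathcal C_2} \bigl(a F_1 (h) + (1-a) F_2 (h)\bigr).
\]
The two indices vary independently, and addition and multiplication by non-negative scalars on $\creal$ commute with suprema of non-empty families; non-emptiness holds because the $\mathcal C_i$ are non-empty. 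Hence the right-hand side equals $a \sup_{F_1 \in \mathcal C_1} F_1 (h) + (1-a) \sup_{F_2 \in \mathcal C_2} F_2 (h)$, which is $(\supP_X (\mathcal C_1) +_a \supP_X (\mathcal C_2)) (h)$. The one point to watch is the convention $0 \cdot \infty = 0$ when $a=0$ or $1-a=0$, which is covered by the degenerate cases already handled.

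For linearity when $\rast$ is nothing or ``$\leq 1$'', the zero prevision lies in $\Pred^\rast_\super X$, so the zero of $\HVc {\Pred^\rast_\super X}$ is $\{0\}$ by Proposition~\ref{prop:Hcvx:cone}. Then $\supP_X (\{0\}) = 0$, so $\supP_X$ is strict, and a strict affine map is linear by the discussion preceding Lemma~\ref{lemma:minP:+}. The only mildly delicate point in the whole argument is the closure step, and Fact~\ref{fact:sup:cl} disposes of it entirely.
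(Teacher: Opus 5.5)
Your proof is correct and follows the paper's argument step for step. You identify $\mathcal C_1 +_a \mathcal C_2$ as the closure of $\{F_1 +_a F_2 \mid F_1 \in \mathcal C_1, F_2 \in \mathcal C_2\}$, drop the closure via Fact~\ref{fact:sup:cl} applied to $F \mapsto F (h)$, split the supremum over the independent non-empty index sets, and obtain linearity from $\supP_X (\{0\}) = 0$; your separate treatment of $a \in \{0, 1\}$ is harmless but unnecessary, since with $0 \cdot \infty = 0$ scalar multiplication already commutes with non-empty suprema.
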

\begin{proof}
  For all $\mathcal C_1, \mathcal C_2 \in \HVc {\Pred_\super X}$ and
  $a \in [0, 1]$, we claim that
  $\supP_X (\mathcal C_1 +_a \mathcal C_2) = \supP_X (\mathcal C_1)
  +_a \supP_X (\mathcal C_2)$.  We recall that
  $\mathcal C_1 +_a \mathcal C_2 = cl \{F_1 +_a F_2 \mid F_1 \in
  \mathcal C_1, F_2 \in \mathcal C_2\}$.  For every $h \in \Lform X$,
  the function $F \in \Pred_\super X \mapsto F (h)$ is lower
  semicontinuous, so taking the supremum of its values when $F$ varies
  over the closure of a set gives the same result as when $F$ varies
  over the set itself, by Fact~\ref{fact:sup:cl}.  Hence
  $\supP_X (\mathcal C_1 +_a \mathcal C_2) (h) = \sup_{F_1 \in
    \mathcal C_1, F_2 \in \mathcal C_2} (a\, F_1 (h) + (1-a)\, F_2
  (h)) = a\, \sup_{F_1 \in \mathcal C_1} F_1 (h) + (1-a)\,\sup_{F_2
    \in \mathcal C_2} F_2 (h) = a\, \supP_X (\mathcal C_1) (h) +
  (1-a)\, \supP_X (\mathcal C_2) (h)$.  It follows that $\supP_X$ is
  affine.

  When $\rast$ is nothing or ``$\leq 1$'', the least element of
  $\HVc {\Pred^\rast_\super X}$ is $\{0\}$, where we write $0$ for the
  zero prevision, and $\supP_X \{0\} (h) = 0$ for every
  $h \in \Lform X$.  Therefore $\supP_X$ is strict, hence
  linear.
\end{proof}

\begin{thm}
  \label{thm:sup:qus:retr}
  Let $\rast$ be nothing, ``$\leq 1$'' or ``$1$''.  For every space
  $X$, $\Pred^\rast X$ is an retract of $\HVc {\Pred_\super X}$
  through $\supP_X$ and $\qusP_X$; the retraction $\supP_X$ is affine
  (linear if $\rast$ is nothing or ``$\leq 1$'') and preserves suprema
  of non-empty families.
\end{thm}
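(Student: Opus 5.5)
The plan is to mirror the proof of Theorem~\ref{thm:min:nim:retr}, assembling results already established. Read literally, the statement has no hypothesis on $X$. But the continuity of $\qusP^\rast_X$ (Corollary~\ref{corl:qusP:cont}) and the identity $\supP_X \circ \qusP^\rast_X = \identity{\Pred^\rast X}$ (Lemma~\ref{lemma:qusP:supP}) were only obtained when $\Lform X$ is a convenient cone, with $X$ compact and non-empty if $\rast$ is ``$1$''. I will therefore work under those standing assumptions, as in Corollary~\ref{corl:qusP:cont}. I also read $\HVc {\Pred_\super X}$ as $\HVc {\Pred^\rast_\super X}$, the codomain of $\qusP^\rast_X$.

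The first step is the properties of $\supP_X$. Continuity of $\supP_X \colon \HV {\Pred^\rast_\super X} \to \Pred^\rast X$ is Lemma and Definition~\ref{lemdef:supP}. Its restriction to the subspace $\HVc {\Pred^\rast_\super X}$ is therefore continuous too, since the topology there is the subspace topology generated by the sets $\Diamond^\cvx U$. Affineness, and linearity when $\rast$ is nothing or ``$\leq 1$'', is Lemma~\ref{lemma:supP:+}. Preservation of suprema of non-empty families, as computed in $\HVc{\Pred^\rast_\super X}$ via Fact~\ref{fact:Hcvx:sup}, is Lemma~\ref{lemma:supP:sup}.

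The second step is the section map. By Lemma and Definition~\ref{lemdef:qusP}~(1), $\qusP^\rast_X (P)$ lies in $\CPred^\rast_\super X$, which is included in $\HVc {\Pred^\rast_\super X}$ by Lemma~\ref{lemma:Hdia:Hcvx}. By Corollary~\ref{corl:qusP:cont}, $\qusP^\rast_X$ is continuous into $\HV {\Pred^\rast_\super X}$. Hence it is also continuous into the subspace $\HVc {\Pred^\rast_\super X}$, which contains its image.

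Finally, $\supP_X \circ \qusP^\rast_X = \identity{\Pred^\rast X}$ follows from Lemma~\ref{lemma:qusP:supP}. For $\rast$ nothing or ``$\leq 1$'' this uses local convexity of $\Lform X$, which is part of being convenient; for $\rast$ ``$1$'' it uses convenience directly. This makes $\supP_X$ a retraction with section $\qusP^\rast_X$.

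The only delicate point is the mismatch between the unqualified ``for every space $X$'' and the hypotheses that the continuity of $\qusP^\rast_X$ and the retraction identity actually require. Remark~\ref{rem:qusP:supP:tight} shows that local convexity cannot be dropped for $\supP_X \circ \qusP_X = \identity{}$. So I would state explicitly that the convenience hypotheses of Corollary~\ref{corl:qusP:cont} are in force, rather than try to remove them.
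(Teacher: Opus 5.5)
Your proof is correct and is essentially the paper's own argument. It cites exactly the same results: Lemma and Definition~\ref{lemdef:supP}, Lemma~\ref{lemma:supP:+}, Lemma~\ref{lemma:supP:sup}, Corollary~\ref{corl:qusP:cont} and Lemma~\ref{lemma:qusP:supP}. Your two adjustments are warranted. Restricting to the case where $\Lform X$ is a convenient cone (with $X$ compact and non-empty if $\rast$ is ``$1$'') is right, because the paper's proof silently relies on those hypotheses through Corollary~\ref{corl:qusP:cont} and Lemma~\ref{lemma:qusP:supP}, despite the statement saying ``for every space $X$''. Reading the codomain as $\HVc {\Pred^\rast_\super X}$ is also right.
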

\begin{proof}
  The map $\supP_X$ preserves non-empty suprema by
  Lemma~\ref{lemma:supP:sup}, is affine (resp.\ linear) by
  Lemma~\ref{lemma:supP:+} and continuous by Lemma and
  Definition~\ref{lemdef:supP}; $\qusP_X$ is continuous by
  Corollary~\ref{corl:qusP:cont}, and a right inverse to $\supP_X$ by
  Lemma~\ref{lemma:qusP:supP}.
\end{proof}

\begin{thm}[Second isomorphism]
  \label{thm:sup:qus:iso}
  Let $\rast$ be nothing, ``$\leq 1$'' or ``$1$''.  Let $X$ be a
  topological space such that $\Lform X$ is a convenient cone, for
  example a core-compact space; we also assume $X$ compact and
  non-empty if $\rast$ is ``$1$''.  Then $\CPred^\rast_\super X$ and
  $\Pred^\rast X$ are homeomorphic through $\supP_X$ and
  $\qusP^\rast_X$, hence isomorphic 
\end{thm}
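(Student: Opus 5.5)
The plan mirrors the proof of Theorem~\ref{thm:min:nim:iso}. First, Lemma and Definition~\ref{lemdef:supP} shows that $\supP_X$ is continuous from $\HV {\Pred^\rast_\super X}$ to $\Pred^\rast X$. Its restriction to the subspace $\CPred^\rast_\super X$ is therefore continuous as well, since that subspace carries the induced lower Vietoris topology. In the other direction, Corollary~\ref{corl:qusP:cont} shows that $\qusP^\rast_X$ is continuous from $\Pred^\rast X$ to $\CPred^\rast_\super X$. Its hypotheses are exactly the ones in force here: $\Lform X$ is a convenient cone, and $X$ is compact and non-empty when $\rast$ is ``$1$''. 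By Lemma and Definition~\ref{lemdef:qusP}~(1), $\qusP^\rast_X$ does land in $\CPred^\rast_\super X$.

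Next I check that the two maps are mutually inverse. For $\mathcal C \in \CPred^\rast_\super X$, item~(2) of Lemma and Definition~\ref{lemdef:qusP} gives $\qusP^\rast_X (\supP_X (\mathcal C)) = \mathcal C$. For $P \in \Pred^\rast X$, Lemma~\ref{lemma:qusP:supP} gives $\supP_X (\qusP^\rast_X (P)) = P$. That lemma needs local convexity of $\Lform X$ when $\rast$ is nothing or ``$\leq 1$'', and convenience of $\Lform X$ when $\rast$ is ``$1$''. A convenient cone is by definition locally convex, so both requirements follow from the single hypothesis that $\Lform X$ is convenient.

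Two side remarks remain. First, the example ``for example a core-compact space'' is justified by the earlier observation that $\Lform X$ is convenient whenever $X$ is core-compact. Second, the truncated phrase ``hence isomorphic'' presumably refers to order isomorphism for the specialization orderings. Every homeomorphism is an order isomorphism between specialization preorders, and by Proposition~\ref{prop:PX:sup}~(1) the specialization ordering on $\Pred^\rast X$ is the pointwise one. On $\CPred^\rast_\super X$ it is inclusion, as for any subspace of a Hoare hyperspace.

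I expect no real obstacle here, since all the heavy lifting was done earlier. The most delicate ingredient is the continuity of $\qusP^\rast_X$, which relies on the subbase description of Lemma~\ref{lemma:HDia:lc} and hence on Lemma~\ref{lemma:just:Dia:subbase}. That result is already available, so the proof reduces to assembling these references and checking that the hypotheses match in each of the three cases for $\rast$.
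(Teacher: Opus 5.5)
Your proposal is correct and is exactly the paper's proof: continuity of $\supP_X$ comes from Lemma and Definition~\ref{lemdef:supP}, continuity of $\qusP^\rast_X$ from Corollary~\ref{corl:qusP:cont}, and mutual inverseness from Lemma and Definition~\ref{lemdef:qusP}~(2) together with Lemma~\ref{lemma:qusP:supP}. Your explicit remark that convenience includes local convexity, so Lemma~\ref{lemma:qusP:supP} applies in all three cases, is a check the paper leaves implicit.
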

\begin{proof}
  By Lemma and Definition~\ref{lemdef:supP}, $\supP_X$ defines (by
  restriction) a continuous map from $\CPred^\rast_\super X$ to
  $\Pred^\rast X$; $\qusP_X$ is continuous by
  Corollary~\ref{corl:qusP:cont}.  They are mutual inverses by Lemma
  and Definition~\ref{lemdef:qusP}, item~2, and by
  Lemma~\ref{lemma:qusP:supP}.
\end{proof}

\begin{remark}
  \label{rem:second:double:AN}
  We recall the homeomorphism
  $\Pred^\rast_{\super} X \cong \SVc {\Pred^\rast_{\lin} X}$
  \cite[Theorem~4.15]{JGL-mscs16}.  Then Theorem~\ref{thm:sup:qus:iso}
  expresses $\Pred^\rast X$ as a kind of double hyperspace over
  $\Pred^\rast_\lin X$ ($\cong \Val_\rast X$), with the Hoare and
  Smyth constructions in the opposite order, compared to
  Remark~\ref{rem:first:double:DN}.
\end{remark}

\begin{remark}
  \label{rem:HBox}
  $\CPred^\rast_\super X$ is included in $\HVc {\Pred^\rast_\super X}$
  by definition.  The inclusion is strict in general: any of the
  following two examples are counterexamples.
\end{remark}


\begin{example}
  \label{ex:HDia:sup}
  $\Pred^\rast_X$ has all non-empty suprema, hence so has
  $\CPred^\rast_\super X$.  But they differ from suprema
  $\mathcal C_1 \sqcup \mathcal C_2 \eqdef \clconv {(\mathcal C_1 \cup
    \mathcal C_2)}$ computed in $\HVc {\Pred^\rast_\super X}$ in
  general.  Let $X \eqdef \{0, 1\}$ with the discrete topology,
  $\Lambda (h) \eqdef \frac 1 2 (h (0) + h (1))$,
  $P_1 (h) \eqdef \min (h (0), \Lambda (h))$,
  $P_2 (h) \eqdef \min (h (1), \Lambda (h))$ for every
  $h \in \Lform X$.  Then, imitating Example~\ref{ex:QBox:inf}:
  \begin{enumerate}
  \item $P_1$, $P_2$ are normalized (superlinear) previsions on $X$
    and $\Lambda$ is the supremum of $P_1$ and $P_2$ in
    $\Pred^\rast X$.
  \item $\Lambda$ is not in
    $\clconv {(\qusP^\rast_X (P_1) \cup \qusP^\rast_X (P_2))}$.  We
    proceed with same $\underline F$ notation as previously: for every
    $t \in [0, 1]$, $\underline \Lambda (t) = \frac 1 2$,
    $\underline P_1 (t) = \min (t, \frac 1 2)$, and
    $\underline P_2 (t) = \min (1-t, \frac 1 2)$.  Let us assume that
    $\Lambda \in \clconv {(\qusP^\rast_X (P_1) \cup \qusP^\rast_X
      (P_2))}$.  Every open neighborhood $\mathcal U$ of $\Lambda$ in
    $\Pred^\rast_\super X$ must intersect
    $\conv (\qusP^\rast_X (P_1) \cup \qusP^\rast_X (P_2))$.  For
    $\mathcal U$, we pick an arbitrary, but fixed number $\epsilon$ in
    $]0, \frac 1 2[$, we fix $r \in [(1-\epsilon)/2, 1/2[$, and we
    take the set
    $[h_{\frac 1 2 +\epsilon} > r] \cap [h_{\frac 1 2 -\epsilon} >
    r]$.  (See Example~\ref{ex:QBox:inf}~(2) for $h_t$.)  For every
    $t \in [0, 1]$, $\Lambda (h_t) = \frac 1 2$, so
    $\Lambda \in \mathcal U$.  Hence some element
    $a\, F_1 + (1-a)\, F_2$ of
    $\conv {(\qusP^\rast_X (P_1) \cup \qusP^\rast_X (P_2))}$, with
    $a \in [0, 1]$ and $F_1 \leq P_1$, $F_2 \leq P_2$, must be in
    $\mathcal U$.  This entails that
    $a\, P_1 (h_{\frac 1 2 +\epsilon}) + (1-a)\, P_2 (h_{\frac 1 2
      +\epsilon}) > r$ and
    $a\, P_1 (h_{\frac 1 2 -\epsilon}) + (1-a)\, P_2 (h_{\frac 1 2
      -\epsilon}) > r$, in other words
    $a\, \underline P_1 (\frac 1 2 +\epsilon) + (1-a)\, \underline P_2
    (\frac 1 2 +\epsilon) > r$ and
    $a\, \underline P_1 (\frac 1 2 -\epsilon) + (1-a)\, \underline P_2
    (\frac 1 2 -\epsilon) > r$.  Since $\epsilon > 0$,
    $\underline P_1 (\frac 1 2 +\epsilon) = \frac 1 2$ and
    $\underline P_2 (\frac 1 2 +\epsilon) = \frac 1 2 - \epsilon$, so
    the first inequality means
    $a \frac 1 2 + (1-a) (\frac 1 2 - \epsilon) > r$, namely
    $\frac 1 2 - (1-a) \epsilon > r$.  Similarly,
    $\underline P_1 (\frac 1 2 -\epsilon) = \frac 1 2 - \epsilon$ and
    $\underline P_2 (\frac 1 2 - \epsilon) = \frac 1 2$, so the second
    inequality means $a (\frac 1 2 - \epsilon) + (1-a) \frac 1 2 > r$,
    namely $\frac 1 2 - a \epsilon > r$.  Adding the two, we obtain
    that $1 - \epsilon > 2r$.  This is impossible since
    $r \in [(1-\epsilon)/2, 1/2[$.
  \item It follows that
    $\qusP^\rast_X (\sup (P_1, P_2)) \neq \clconv {(\qusP^\rast_X
      (P_1) \cup \qusP^\rast_X (P_2))}$, hence that $\qusP^\rast_X$
    does not map binary suprema in $\Pred X$ to binary suprema in
    $\HVc {\Pred_\super X}$.
  \item It also follows that binary suprema
    $\mathcal C_1 \sqcup \mathcal C_2$ in $\HVc {\Pred_\super X}$ of
    elements of $\CPred_\super X$ may fail to be in $\CPred_\super X$,
    and in particular $\CPred_\super X$ is a proper subspace of
    $\HVc {\Pred_\super X}$.  Indeed, let
    $\mathcal C_1 \eqdef \qusP^\rast_X (P_1)$ and
    $\mathcal C_2 \eqdef \qusP^\rast_X (P_2)$.  Then
    $\mathcal C_1 \sqcup \mathcal C_2$ cannot be in $\CPred_\super X$,
    otherwise it would be the supremum of $\mathcal C_1$ and
    $\mathcal C_2$ in $\CPred_\super X$.  However, that supremum is
    $\qusP^\rast_X (\sup (\supP_X (\mathcal C_1), \allowbreak \supP_X
    (\mathcal C_2)) = \qusP^\rast_X (\sup (P_1, P_2))$, which is
    different by~(4).  The latter equality is justified by
    Theorem~\ref{thm:sup:qus:iso}, which applies since $X$ is discrete
    and finite, hence trivially locally compact, so $\Lform X$ is a
    convenient cone.  \qed
  \end{enumerate}
\end{example}

\begin{example}
  \label{ex:HDia:+}
  Transporting the cone structure of $\Pred X$ over to
  $\CPred_\super X$, the latter is also a cone.  As in
  Example~\ref{ex:QBox:+}, we show that addition can differ from
  addition in the enclosing cone $\HVc {\Pred_\super X}$, and
  $\qusP_X$ is not in general linear.  We take $X \eqdef \{0, 1\}$
  once again, $\Lambda (h) \eqdef h (0)+h(1)$,
  $P_1 (h) \eqdef \max (h (0), h (1))$ and
  $P_2 (h) \eqdef \min (2 h (0) + h (1), h (0) + 2 h (1))$, exactly as
  in Example~\ref{ex:QBox:+}.  Then:
  \begin{enumerate}
  \item For every superlinear prevision $F_1 \leq P_1$, there is a
    linear prevision $\Lambda_1$ such that
    $F_1 \leq \Lambda_1 \leq P_1$, by Keimel's sandwich theorem.  Then
    there are numbers $a, b \in \creal$ such that for every
    $t \in [0, 1]$, $\underline \Lambda_1 (t) = a t+ b (1-t)$, and
    then $a + b \leq 1$: this is because
    $\underline \Lambda_1 \leq \underline P_1$, evaluated at
    $t \eqdef \frac 1 2$.
  \item In the situation of~(1), it is impossible that
    $F_1 +_{\frac 1 2} P_2 \in [h_0 > 1 - \epsilon] \cap [h_1 > 1 -
    \epsilon]$ for any $\epsilon \in {]0, \frac 1 4[}$.  Indeed, since
    $F_1 \leq \Lambda_1$, $\Lambda_1 +_{\frac 1 2} P_2$ would also be
    in $[h_0 > 1 - \epsilon] \cap [h_1 > 1 - \epsilon]$.  Then
    $\underline \Lambda_1 (0) + \underline P_2 (0) > 2 - 2\epsilon$
    and
    $\underline \Lambda_1 (1) + \underline P_2 (1) > 2 - 2\epsilon$.
    Since $\underline P_2 (t) = \min (1+t, 2-t)$ for every
    $t \in [0, 1]$, we get that $b + 1 > 2 - 2\epsilon$ and
    $a + 1 > 2 - 2\epsilon$.  Summing the two,
    $a + b > 2 - 4\epsilon$; this is
    impossible since $a+b \leq 1$ and $\epsilon < \frac 1 4$.
  \item 
    For every $\epsilon \in {]0, \frac 1 4[}$,
    $\mathcal U \eqdef [h_0 > 1-\epsilon] \cap [h_1 > 1-\epsilon]$
    intersects $\qusP_X (P_1 +_{\frac 1 2} P_2)$ at $\Lambda$ (indeed,
    $P_1 +_{\frac 1 2} P_2 = \Lambda$), but does not intersect
    $\qusP_X (P_1) +_{\frac 1 2} \qusP_X (P_2)$.  If it did,
    $\mathcal U$ would intersect
    $\{F_1 +_{\frac 1 2} F_2 \mid F_1 \in \qusP_X (P_1), F_2 \in
    \qusP_X (P_2)\}$, so there would be superlinear previsions
    $F_1 \leq P_1$ and $F_2 \leq P_2$ such that
    $\frac 1 2 F_1 + \frac 1 2 F_2 \in \mathcal U$.  Since
    $F_2 \leq P_2$, in particular
    $\frac 1 2 F_1 + \frac 1 2 P_2 \in \mathcal U$, contradicting~(2).
  \item It follows that $\qusP_X$ is not affine from $\Pred X$ to
    $\HVc {\Pred_\super X}$: by (3),
    $\qusP_X (P_1 +_{\frac 1 2} P_2) \neq \qusP_X (P_1) +_{\frac 1
      2}\qusP_X (P_2)$.
  \item Convex combinations of elements of $\CPred_\super X$, as
    computed in $\HVc {\Pred_\super X}$, may fail to be in
    $\CPred_\super X$.  In order to see this, let
    $\mathcal C_1 \eqdef \qusP_X (P_1)$ and
    $\mathcal C_2 \eqdef \qusP_X (P_2)$.  Then
    $\mathcal C_1 +_{\frac 1 2} \mathcal C_2 \neq \qusP_X (P_1
    +_{\frac 1 2} P_2)$ by (4).  But, if
    $\mathcal C_1 +_{\frac 1 2} \mathcal C_2$ were in
    $\CPred_\super X$, we would have
    $\mathcal C_1 +_{\frac 1 2} \mathcal C_2 = \qusP_X (\supP_X
    (\mathcal C_1 +_{\frac 1 2} \mathcal C_2))$ by
    Lemma~\ref{lemma:qusP:supP}; the latter applies because $X$ is
    locally compact, so $\Lform X$ is a convenient cone.  Now
    $\qusP_X (\supP_X (\mathcal C_1 +_{\frac 1 2} \mathcal C_2)) =
    \qusP_X (\supP_X (\mathcal C_1) +_{\frac 1 2} \supP_X (\mathcal
    C_2))$ (by Lemma~\ref{lemma:supP:+})
    $= \qusP_X (P_1 +_{\frac 1 2} P_2)$ (by
    Lemma~\ref{lemma:qusP:supP}).  \qed
  \end{enumerate}
\end{example}

\section{Back to orthogonality relations}
\label{sec:back-orth-relat}

The material of Section~\ref{sec:PX=QPAPX} and
Section~\ref{sec:PX=CPDPX} allows us to conclude that
$\CPred^\rast_\super X$ and $\QPred^\rast_\sub X$ do arise from the
orthogonality constructions of Section~\ref{sec:double-hypersp-orth},
and relates orthogonals with the maps $\nimP^\rast_X \circ \supP_X$
and $\qusP^\rast_X \circ \minP_X$, as we show.


\begin{thm}
  \label{thm:HppQ}
  Let $\rast$ be nothing, ``$\leq 1$'' or ``$1$''.  Let $X$ be a
  topological space such that $\Lform X$ is a convenient cone; we also
  assume that $X$ is compact and non-empty when $\rast$ is ``$1$''.
  Let $\QE \eqdef \Pred^\rast_\super X$,
  $\CF \eqdef \Pred^\rast_\sub X$, and let us define $F^- \perp F^+$
  by $F^- \leq F^+$.  Then:
  \begin{enumerate}
  \item
    $(\nimP^\rast_X \circ \supP_X) (\mathcal C) = {\mathcal C}^\perp$
    for every $\mathcal C \in \CF$;
  \item
    $(\qusP^\rast_X \circ \minP_X) (\mathcal Q) = {^\perp \mathcal Q}$
    for every $\mathcal Q \in \QE$;
  \item $\Img {\_^\perp} = \QPred^\rast_\sub X$;
  \item $\Img {^\perp\_} = \CPred^\rast_\super X$.
  \end{enumerate}
\end{thm}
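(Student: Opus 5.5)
The plan is to reduce everything to the two Galois-type identities (1) and (2), each of which is a direct unfolding of definitions, and then obtain (3) and (4) from the retraction results of Sections~\ref{sec:PX=QPAPX} and~\ref{sec:PX=CPDPX}. I read items (1) and (2) as concerning subsets: $\mathcal C$ is a (non-empty) subset of $\QE = \Pred^\rast_\super X$, for instance an element of $\HV {\Pred^\rast_\super X}$, and $\mathcal Q$ is a non-empty subset of $\CF = \Pred^\rast_\sub X$, for instance an element of $\SV {\Pred^\rast_\sub X}$. This reading is needed for $\supP_X$ and $\minP_X$ to make sense.

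For (1), let $P \eqdef \supP_X (\mathcal C)$, the pointwise supremum, which lies in $\Pred^\rast X$ when $\mathcal C \neq \emptyset$; this is exactly the computation in Corollary~\ref{corl:nimP}. An element $F^+ \in \Pred^\rast_\sub X$ lies in $\mathcal C^\perp$ if and only if $F^- \leq F^+$ for every $F^- \in \mathcal C$. Since the order is pointwise, this holds if and only if $P \leq F^+$, that is, if and only if $F^+ \in \nimP^\rast_X (P)$.

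For (2), let $P \eqdef \minP_X (\mathcal Q)$. Then $F^- \in {^\perp \mathcal Q}$ if and only if $F^- (h) \leq F^+ (h)$ for all $F^+ \in \mathcal Q$ and all $h$. This holds if and only if $F^- \leq \inf_{F^+ \in \mathcal Q} F^+$ pointwise, which equals $P$, that is, if and only if $F^- \in \qusP^\rast_X (P)$. When $\mathcal Q$ is merely an arbitrary subset, the infimum need not be attained. This does not matter: the same argument goes through with the pointwise infimum, exactly as in Corollary~\ref{corl:qusP}.

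For (3), Corollary~\ref{corl:nimP} already gives $\Img {\_^\perp} \subseteq \QPred^\rast_\sub X$. For the converse, let $\mathcal Q \in \QPred^\rast_\sub X$ and $P \eqdef \minP_X (\mathcal Q)$. By Lemma and Definition~\ref{lemdef:nimP}~(2), $\mathcal Q = \nimP^\rast_X (P)$. Take $A \eqdef \qusP^\rast_X (P)$, which is non-empty by Lemma~\ref{lemma:qusP:0}. By (1), $A^\perp = \nimP^\rast_X (\supP_X (\qusP^\rast_X (P)))$. Lemma~\ref{lemma:qusP:supP} applies because $\Lform X$ is convenient, hence locally convex, and yields $\supP_X (\qusP^\rast_X (P)) = P$. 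Therefore $A^\perp = \nimP^\rast_X (P) = \mathcal Q$.

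For (4), Corollary~\ref{corl:qusP} gives $\Img {^\perp\_} \subseteq \CPred^\rast_\super X$; this uses compactness and non-emptiness of $X$ when $\rast$ is ``$1$''. Conversely, given $\mathcal C \in \CPred^\rast_\super X$, let $P \eqdef \supP_X (\mathcal C)$, so that $\mathcal C = \qusP^\rast_X (P)$ by Lemma and Definition~\ref{lemdef:qusP}~(2). Take $B \eqdef \nimP^\rast_X (P)$, which lies in $\SV {\Pred^\rast_\sub X}$ by Lemma and Definition~\ref{lemdef:nimP}. By (2) and Lemma~\ref{lemma:nimP:minP}, ${^\perp B} = \qusP^\rast_X (\minP_X (\nimP^\rast_X (P))) = \qusP^\rast_X (P) = \mathcal C$.

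The only delicate point is bookkeeping rather than mathematics. One must make sure that the sets fed to $\supP_X$ and $\minP_X$ are non-empty, since otherwise the pointwise supremum is the zero map, which is not normalized when $\rast$ is ``$1$''. One must also check that the previsions $P$ built along the way really lie in $\Pred^\rast X$. The empty cases of $\_^\perp$ and ${^\perp\_}$ are harmless for the image statements: $\emptyset^\perp = \Pred^\rast_\sub X = \nimP^\rast_X(P_0)$ for $P_0$ the least element of $\Pred^\rast X$ from Proposition~\ref{prop:PX:sup}~(3), and the case ${^\perp \emptyset}$ was handled directly in Corollary~\ref{corl:qusP}.
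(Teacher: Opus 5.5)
Your proposal is correct and follows the paper's proof. Items (1) and (2) are the same unfolding of definitions. For (3) and (4), your witnesses $\qusP^\rast_X(\minP_X(\mathcal Q))$ and $\nimP^\rast_X(\supP_X(\mathcal C))$ are exactly the paper's ${^\perp \mathcal Q}$ and $\mathcal C^\perp$. The round-trip identities you cite (Lemma and Definition~\ref{lemdef:nimP}~(2), Lemma~\ref{lemma:qusP:supP}, Lemma and Definition~\ref{lemdef:qusP}~(2), Lemma~\ref{lemma:nimP:minP}) are precisely what the paper uses when it invokes Theorems~\ref{thm:min:nim:iso} and~\ref{thm:sup:qus:iso}. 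Your reading of $\mathcal C \subseteq \QE$ and $\mathcal Q \subseteq \CF$ fixes the typo in the statement. Your treatment of the empty case, $\emptyset^\perp = \nimP^\rast_X(P_0)$ when $\rast$ is ``$1$'', fills a small gap left by Corollary~\ref{corl:nimP}, where the supremum over the empty family is the zero prevision, which is not normalized.
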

\begin{proof}
  1.
  $(\nimP^\rast_X \circ \supP_X) (\mathcal C) = \{F^+ \in
  \Pred^\rast_\sub X \mid \supP_X (\mathcal C) \leq F^+\} = \{F^+ \in
  \Pred^\rast_\sub X \mid \forall F^- \in \mathcal C, F^- \leq F^+\} =
  {\mathcal C}^\perp$.

  2.
  $(\qusP^\rast_X \circ \minP_X) (\mathcal Q) = \{F^- \in
  \Pred^\rast_\super X \mid F^- \leq \minP_X (\mathcal Q)\} = \{F^- \in
  \Pred^\rast_\super X \mid \forall F^+ \in \mathcal Q, F^- \leq F^+\}
  = {^\perp \mathcal Q}$.

  3.  By Corollary~\ref{corl:nimP},
  $\Img {\_^\perp} \subseteq \QPred^\rast_\sub X$.  Conversely, every
  element $\mathcal Q$ of $\QPred^\rast_\sub X$ can be written as
  $\nimP^\rast_X (\supP_X (\qusP^\rast_X (\minP_X (\mathcal Q))))$
  using the homeomorphisms of Theorem~\ref{thm:sup:qus:iso} and
  Theorem~\ref{thm:min:nim:iso}; hence as
  $({^\perp \mathcal Q})^\perp$, by items~1 and~2, and that falls into
  $\Img \_^\perp$.

  4.  By Corollary~\ref{corl:qusP},
  $\Img {^\perp\_} \subseteq \CPred^\rast_\super X$.  Conversely, every
  element $\mathcal C$ of $\CPred^\rast_\super X$ can be written as
  $\qusP^\rast_X (\minP_X (\nimP^\rast_X (\supP_X (\mathcal C))))$ using
  the homeomorphisms of Theorem~\ref{thm:sup:qus:iso} and
  Theorem~\ref{thm:min:nim:iso}; hence as
  ${^\perp ({\mathcal C}^\perp)}$, by items~1 and~2, and that is in
  $\Img {^\perp \_}$.
\end{proof}



\bibliographystyle{abbrv}
\DeclareRobustCommand{\VAN}[3]{#3}
\bibliography{justprev}

@string{ieee = 	"IEEE Computer Society Press"}

@inproceedings{Gou-csl07,
  	address =	{Lausanne, Switzerland},
  	author =	{Goubault{-}Larrecq, Jean},
  	booktitle =	{{P}roceedings of the 16th {A}nnual {EACSL} {C}onference on {C}omputer {S}cience {L}ogic ({CSL}'07)},
  	DOI =	{10.1007/978-3-540-74915-8_40},
  	editor =	{Duparc, Jacques and Henzinger, {\relax Th}omas A.},
  	month =	sep,
  	pages =	{542-557},
  	publisher =	{Springer},
  	series =	{Lecture Notes in Computer Science},
  	title =	{Continuous Previsions},
  	url =	{http://www.lsv.ens-cachan.fr/Publis/PAPERS/PDF/JGL-csl07.pdf},
  	volume =	{4646},
  	year =	{2007},
}

@article{JGL-mscs16,
  publisher = {Cambridge University Press},
  journal = {Mathematical Structures in Computer Science},
  author = {Goubault{-}Larrecq, Jean},
  title = {Isomorphism Theorems between Models of Mixed Choice},
  volume = {27},
  number = {6},
  pages = {1032-1067},
  month = sep,
  year = 2017,
  url = {http://www.lsv.fr/Publis/PAPERS/PDF/JGL-mscs16.pdf},
  doi = {10.1017/S0960129515000547},
    note = {Revised version on arXiv:2411.13500 [cs.LO]}
}

@article{JGL:iso:err,
  title={Errata to ``Isomorphism theorems between models of mixed choice,'' fixes and consequences},
  volume={35},
  DOI={10.1017/S0960129525100200},
  journal={Mathematical Structures in Computer Science},
  author={Goubault{-}Larrecq, Jean},
  year={2025},
  pages={e23}
}

@article{dBK:comm,
  TITLE = {On the Commutativity of the Powerspace Constructions},
  AUTHOR = {de Brecht, Matthew and Kawai, Tatsuji},
  URL = {https://lmcs.episciences.org/5679},
  DOI = {10.23638/LMCS-15(3:13)2019},
  JOURNAL = {Logical Methods in Computer Science},
  VOLUME = {15},
  NUMBER = {3},
  YEAR = {2019},
  OPTMONTH = Aug,
}

@book{JGL-topology,
   alias = {Topology},
   optnote = {By Jean Goubault{-}Larrecq. ISBN 978-1-107-03413-6},
   author = {Goubault{-}Larrecq, Jean},
   publisher = {Cambridge University Press}, 
   title = {Non-{H}ausdorff Topology and Domain Theory---Selected Topics in Point-Set Topology},
   series = {New Mathematical Monographs},
   volume = {22},
   year = {2013},
   OPTnote = {ISBN 978-1-107-03413-6},
}

@InProceedings{jones89,
  author =	 {Jones, Claire and Plotkin, Gordon},
  title =	 {A Probabilistic Powerdomain of Evaluations},
  pages =	 {186--195},
  booktitle =	 {Proceedings of the 4th Annual Symposium on Logic in
                  Computer Science},
  year =	 1989,
  publisher =	 ieee,
  category =	 {inproc},
  source =	 {Abramsky and Jung, 1994},
}

@PhdThesis{Jones:proba,
  author = 	 {Jones, Claire},
  title = 	 {Probabilistic Non-Determinism},
  school = 	 {University of Edinburgh},
  year = 	 {1990},
  OPTkey = 	 {},
  OPTtype = 	 {},
  OPTaddress = 	 {},
  OPTmonth = 	 {},
  note =	 {Technical Report ECS-LFCS-90-105},
  OPTannote = 	 {}
}

@Article{saheb-djahromi:meas,
  author = 	 {Saheb-Djahromi, Nait},
  title = 	 {Cpo's of Measures for Nondeterminism},
  journal = 	 {Theoretical Computer Science},
  year = 	 {1980},
  OPTkey = 	 {},
  volume =	 {12},
  OPTnumber = 	 {},
  pages =	 {19--37},
  OPTmonth = 	 {},
  OPTnote = 	 {},
  OPTannote = 	 {}
}

@InProceedings{Lawson:valuation,
  author = 	 {Lawson, Jimmie Don},
  title = 	 {Valuations on Continuous Lattices},
  OPTcrossref =  {},
  OPTkey = 	 {},
  booktitle =	 {Mathematische Arbeitspapiere},
  year =	 {1982},
  editor =	 {R.-E. Hoffmann},
  volume =	 {27},
  OPTnumber = 	 {},
  OPTseries = 	 {},
  pages =	 {204--225},
  OPTmonth = 	 {},
  address =	 {Universit{\"a}t Bremen},
  OPTorganization = {},
  OPTpublisher = {},
  OPTnote = 	 {},
  OPTannote = 	 {}
}

@Article{Adamski:measures,
  author = 	 {Adamski, Wolfgang},
  title = 	 {{$\tau$}-Smooth {B}orel Measures on Topological Spaces},
  journal = 	 {Mathematische Nachrichten},
  year = 	 {1977},
  OPTkey = 	 {},
  volume =	 {78},
  OPTnumber = 	 {},
  pages =	 {97--107},
  OPTmonth = 	 {},
  OPTnote = 	 {},
  OPTannote = 	 {}
}

@Article{dBGLJL:LCS,
  author = 	 {de Brecht, Matthew and Goubault{-}Larrecq, Jean and Jia, Xiaodong and Lyu, Zhenchao},
  title = 	 {Domain-Complete and {LCS}-Complete Spaces},
  journal = 	 {Electronic Notes in Theoretical Computer Science},
  publisher = {Elsevier Science Publishers B. V.},
  address = {Amsterdam, The Netherlands, The Netherlands},
  year = 	 {2019},
  OPTkey = 	 {},
  volume =	 {345},
  OPTnumber = 	 {},
  pages =	 {3--35},
  OPTmonth = 	 aug # {28},
  note =	 {Proc. 8th International Symposium on Domain Theory (ISDT'19)},
  OPTannote = 	 {}
}

@MastersThesis{Tix:bewertung,
  author = 	 {Tix, Regina},
  title = 	 {{S}tetige {B}ewertungen auf topologischen {R\"a}umen},
  school =	 {Technische Hochschule Darmstadt},
  year = 	 {1995},
  OPTkey = 	 {},
  type =	 {Diplomarbeit},
  OPTaddress = 	 {},
  OPTmonth =	 jun,
  OPTnote = 	 {},
  OPTannote = 	 {}
}

@InProceedings{Jung:scs:prob,
  author = 	 {Jung, Achim},
  title = 	 {Stably Compact Spaces and the Probabilistic Powerspace Construction},
  booktitle =	 {Domain-theoretic Methods in Probabilistic Processes},
  year =	 2004,
  editor =	 {J. Desharnais and P. Panangaden},
  volume =	 87,
  pages = {5--20},
  series =	 {Electronic Lecture Notes in Computer Science},
  publisher = {Elsevier Science Publishers B. V.},
  address = {Amsterdam, The Netherlands, The Netherlands},
  note =	 {15pp.}
}

@Book{Walley:prev,
  author =	 {Walley, Peter},
  ALTeditor = 	 {},
  title = 	 {Statistical Reasoning with Imprecise Probabilities},
  publisher = 	 {Chapman and Hall},
  year = 	 {1991},
  OPTkey = 	 {},
  OPTvolume = 	 {},
  OPTnumber = 	 {},
  OPTseries = 	 {},
  address =	 {London},
  OPTedition = 	 {},
  OPTmonth = 	 {},
  OPTnote = 	 {},
  OPTannote = 	 {}
}

@Article{Keimel:topcones2,
  author = 	 {Keimel, Klaus},
  title = 	 {Topological Cones: Functional Analysis in a {$T_0$}-Setting},
  journal = 	 {Semigroup Forum},
  year = 	 {2008},
  OPTkey = 	 {},
  volume =	 {77},
  number =	 {1},
  pages =	 {109--142},
  OPTmonth = 	 may,
  OPTnote = 	 {},
  OPTannote = 	 {}
}

@article{VT:double:power,
 author = {Vickers, Steven J. and Townsend, Christopher F.},
 title = {A Universal Characterization of the Double Powerlocale},
 journal = {Theoretical Computer Science},
 volume = {316},
 number = {1-3},
 year = {2004},
 issn = {0304-3975},
 pages = {297--321},
 doi = {10.1016/j.tcs.2004.01.034},
 publisher = {Elsevier Science Publishers Ltd.},
 address = {Essex, UK},
 }

@Article{FM:QH,
  author = 	 {Flannery, Kevin E. and Martin, Johannes J.},
  title = 	 {The {H}oare and {S}myth Power Domain Constructors Commute under Composition},
  journal = 	 {Journal of Computer and System Sciences},
  year = 	 {1990},
  OPTkey = 	 {},
  volume =	 {40},
  OPTnumber = 	 {},
  pages =	 {125--135},
  OPTmonth = 	 {},
  OPTnote = 	 {},
  OPTannote = 	 {}
}

@Article{Heckmann:QH,
  author = 	 {Heckmann, Reinhold},
  title = 	 {Lower and Upper Power Domain Constructions Commute on all Dcpos},
  journal = 	 {Information Processing Letters},
  year = 	 {1991},
  OPTkey = 	 {},
  volume =	 {40},
  OPTnumber = 	 {},
  pages =	 {7--11},
  OPTmonth = 	 {},
  OPTnote = 	 {},
  OPTannote = 	 {}
}

@Book{Birkhoff,
  author =	 {Birkhoff, Garrett},
  ALTeditor = 	 {},
  title = 	 {Lattice Theory},
  publisher = 	 {American Mathematical Society},
  year = 	 {1940},
  OPTkey = 	 {},
  OPTvolume = 	 {},
  OPTnumber = 	 {},
  OPTseries = 	 {},
  OPTaddress = 	 {},
  OPTedition = 	 {},
  OPTmonth = 	 {},
  note =	 {Second edition, 1948.  Third edition, 1967.},
  OPTannote = 	 {}
}

@article{GLK-mscs10,
  author =        {Goubault{-}Larrecq, Jean and Keimel, Klaus},
  journal =       {Mathematical Structures in Computer Science},
  month =         jun,
  number =        {3},
  pages =         {511-561},
  publisher =     {Cambridge University Press},
  title =         {{C}hoquet-{K}endall-{M}atheron Theorems for
                   Non-{H}ausdorff Spaces},
  volume =        {21},
  year =          {2011},
  doi =           {10.1017/S0960129510000617},
  url =           {http://www.lsv.ens-cachan.fr/Publis/PAPERS/PDF/
                   GLK-mscs10.pdf},
  url-lsv =       {http://www.lsv.ens-cachan.fr/Publis/
                  publis.php?onlykey=GLK-mscs10},
}

@Article{TKP:nondet:prob,
  author = 	 {Tix, Regina and Keimel, Klaus and Plotkin, Gordon},
  title = 	 {Semantic Domains for Combining Probability and Non-Determinism},
  journal = 	 {Electronic Notes in Theoretical Computer Science},
  year = 	 {2009},
  OPTkey = 	 {},
  volume =	 {222},
  OPTnumber = 	 {},
  pages =	 {3--99},
  OPTmonth = 	 {feb},
  note =	 {Originally published in 2005.},
  publisher = {Elsevier Science Publishers B. V.},
  address = {Amsterdam, The Netherlands, The Netherlands},
  OPTannote = 	 {}
}

@Article{GLJ:Valg,
  author = 	 {Goubault{-}Larrecq, Jean and Jia, Xiaodong},
  title = 	 {Algebras of the Extended Probabilistic Powerdomain Monad},
  journal = 	 {Electronic Notes in Theoretical Computer Science},
  year = 	 {2019},
  OPTkey = 	 {},
  volume =	 {345},
  OPTnumber = 	 {},
  pages =	 {37--61},
  OPTmonth = 	 aug # {28},
  note =	 {Proc. 8th International Symposium on Domain Theory (ISDT'19)},
  publisher = {Elsevier Science Publishers B. V.},
  address = {Amsterdam, The Netherlands, The Netherlands},
  OPTannote = 	 {}
}

\end{document}